\newcommand{\at}[1]{\protect\ensuremath{\mathsf{#1}}\xspace}
\newcommand{\myhrule}{\rule[.5pt]{\hsize}{.5pt}}
\newcommand{\eat}[1]{}
\newcommand{\sstab}{\vspace{0.5ex}\noindent}
\newcommand{\eg}{\emph{e.g.,}\xspace}
\newcommand{\ie}{\emph{i.e.,}\xspace}
\newcommand{\wrt}{\emph{w.r.t.}\xspace}
\newcommand{\SET}{\mbox{{\bf set}}\ }
\newcommand{\If}{\mbox{\bf if}\ }
\newcommand{\Then}{\mbox{\bf then}\ }
\newcommand{\Else}{\mbox{\bf else}\ }
\newcommand{\ElseIf}{\mbox{\bf elseif}\ }
\newcommand{\While}{\mbox{\bf while}\ }
\newcommand{\Do}{\mbox{\bf do}\ }
\newcommand{\Repeat}{\mbox{\bf repeat}\ }
\newcommand{\Until}{\mbox{\bf until}\ }
\newcommand{\For}{\mbox{\bf for}\ }
\newcommand{\Handle}{\mbox{\bf handle}\ }
\newcommand{\Or}{\mbox{\bf or}\ }
\renewcommand{\And}{\mbox{\bf and}\ }
\newcommand{\Return}{\mbox{\bf return}\ }
\newcommand{\kw}[1]{{\ensuremath {\mathsf{#1}}}\xspace}
\newcounter{ccc}
\newcommand{\bcc}{\setcounter{ccc}{1}\theccc.}
\newcommand{\icc}{\addtocounter{ccc}{1}\theccc.}
\newcommand{\eop}{\hspace*{\fill}\mbox{$\Box$}}     
\renewcommand{\ni}{\noindent}
\newcommand{\nthesection}{\arabic{section}}
\newcounter{prop}
\renewcommand{\theprop}{\arabic{theorem}}
\newcounter{cor}
\renewcommand{\thecor}{\arabic{theorem}}
\newenvironment{prop}{\begin{em}
        \refstepcounter{theorem}
        {\vspace{1.5ex}\noindent \bf Proposition \theprop:}}{
        \end{em}\eop\vspace{1.5ex}}
\newcounter{alg}[section]
\renewcommand{\thealg}{\nthesection.\arabic{alg}}
\newcounter{arule}
\renewcommand{\thearule}{\arabic{arule}}
\newcommand{\ei}{\end{itemize}}
\newcommand{\ee}{\end{enumerate}}
\newcommand{\mat}[2]{{\begin{tabbing}\hspace{#1}\=\+\kill #2\end{tabbing}}}
\newcommand{\beqn}{\vspace{-1ex}\begin{eqnarray*}}
\newcommand{\eeqn}{\vspace{-1ex}\end{eqnarray*}}
\newcommand{\stitle}[1]{\vspace{0.5ex} \noindent{\bf #1}}
\renewcommand{\ni}{\noindent}
\renewcommand{\texttt}[1]{{\small\textsf{#1}}}
\newcommand{\kwlog}{\emph{w.l.o.g.}\xspace}
\newcommand{\lsa}{\kw{LS}}
\newcommand{\dpa}{\kw{DP}}
\newcommand{\dps}{\kw{DPSED}}
\newcommand{\bqsa}{\kw{BQS}}
\newcommand{\fbqsa}{\kw{FBQS}}
\newcommand{\operb}{\kw{OPERB}}
\newcommand{\squishe}{\kw{SQUISH}-\kw{E}}
\newcommand{\sleeve}{\kw{Sleeve}}
\newcommand{\pavlidis}{\kw{Theo~Pavlidis'}}
\newcommand{\cia}{\kw{SI}}
\newcommand{\cpia}{\kw{CPolyInter}} 
\newcommand{\rpia}{\kw{FastRPolyInter}} 
\newcommand{\cist}{\kw{CISED}-\kw{S}}
\newcommand{\cista}{\kw{CISED}-\kw{W}}
\newcommand{\dpsed}{\kw{DPSED}}
\newcommand{\truck}{\kw{Truck}}
\newcommand{\mopsi}{\kw{Mopsi}}
\newcommand{\sercar}{\kw{ServiceCar}}
\newcommand{\pricar}{\kw{PrivateCar}}
\newcommand{\geolife}{\kw{GeoLife}}
\newcommand{\trajec}[1]{$\dddot{\mathcal{#1}}$}
\newcommand{\ped}{\kw{PED}} 
\newcommand{\sed}{\kw{SED}} 
\newcommand{\ded}{\kw{DED}} 
\newcommand{\sector}[1]{{$\mathcal{S}{#1}$}}
\newcommand{\cone}[1]{{$\mathcal{C}{#1}$}}
\renewcommand{\circle}[1]{{$\mathcal{O}{#1}$}}
\newcommand{\pcircle}[1]{{$\mathcal{O}^c{#1}$}}
\begin{document}

\title{One-Pass Trajectory Simplification Using the Synchronous Euclidean Distance}

\author{Xuelian~Lin \and
	Jiahao~Jiang \and
	Shuai~Ma \and
	Yimeng~Zuo \and
	Chunming~Hu
}

\institute{X. Lin, J. Jiang, S. Ma (correspondence), Y. Zuo and C. Hu \at
	SKLSDE lab, School of Computer Science and Engineering, Beihang University, China. \\
	\email{\{linxl, jiangjh, mashuai, zuoym, hucm\}@buaa.edu.cn}           
}

\date{Received: xxx, 2017 / Accepted: xxx, 2018}

\maketitle

\begin{abstract}
Various mobile devices have been used to collect, store and transmit tremendous trajectory data, and it is known that raw trajectory data seriously wastes the storage, network band and computing resource.  To attack this issue, one-pass  line simplification (\lsa) algorithms have are been developed, by compressing data points in a trajectory to a set of continuous line segments. However, these algorithms adopt the {\em perpendicular Euclidean distance}, and none of them uses the {\em synchronous Euclidean distance} (\sed), and cannot support spatio-temporal queries.
%
%
To do this, we develop two one-pass error bounded trajectory simplification algorithms (\cist and \cista) using \sed,
based on a novel spatio-temporal cone intersection technique.
Using four real-life trajectory datasets, we experimentally show that our approaches are both efficient and effective.
In terms of running time, algorithms \cist and \cista are on average {$3$ times faster} than \squishe (the most efficient existing \lsa algorithm using \sed). In terms of compression ratios, algorithms \cist and \cista are comparable with and {$19.6\%$} better than \dps (the most effective existing \lsa algorithm using \sed) on average, respectively, and are {$21.1\%$} and {$42.4\%$} better than \squishe on average, respectively.
\end{abstract}

\section{Introduction}
\label{sec-intro}

Various mobile devices, such as smart-phones, on-board diagnostics, personal navigation devices, and wearable smart devices, have been using their sensors to collect massive trajectory data of moving objects at a certain sampling rate (e.g., a data point every $5$ seconds), which is transmitted to cloud servers for various applications such as location based services and trajectory mining.
Transmitting and storing raw trajectory data consumes too much network bandwidth and storage capacity \cite{Chen:Trajectory, Meratnia:Spatiotemporal,Shi:Survey, Lin:Operb, Liu:BQS, Liu:Amnesic, Muckell:survey, Muckell:Compression,Cao:Spatio, Popa:Spatio,Nibali:Trajic}. 
%
%
It is known that these issues can be resolved or greatly alleviated by trajectory compression techniques via removing redundant data points of trajectories \cite{Douglas:Peucker, Hershberger:Speeding, Meratnia:Spatiotemporal,Lin:Operb, Liu:BQS, Liu:Amnesic,  Muckell:Compression, Chen:Trajectory, Cao:Spatio,  Nibali:Trajic, Long:Direction, Popa:Spatio, Han:Compress, Chen:Fast}, among which the piece-wise line simplification technique is widely used \cite{Douglas:Peucker, Meratnia:Spatiotemporal,  Muckell:Compression, Chen:Trajectory, Cao:Spatio, Liu:BQS, Liu:Amnesic, Lin:Operb, Chen:Fast}, due to its distinct advantages: (a) simple and easy to implement, (b) no need of extra knowledge and suitable for freely  moving  objects, and (c) bounded errors with good compression ratios \cite{Popa:Spatio,Lin:Operb}.

\begin{figure*}[tb!]
\centering
\includegraphics[scale=0.78]{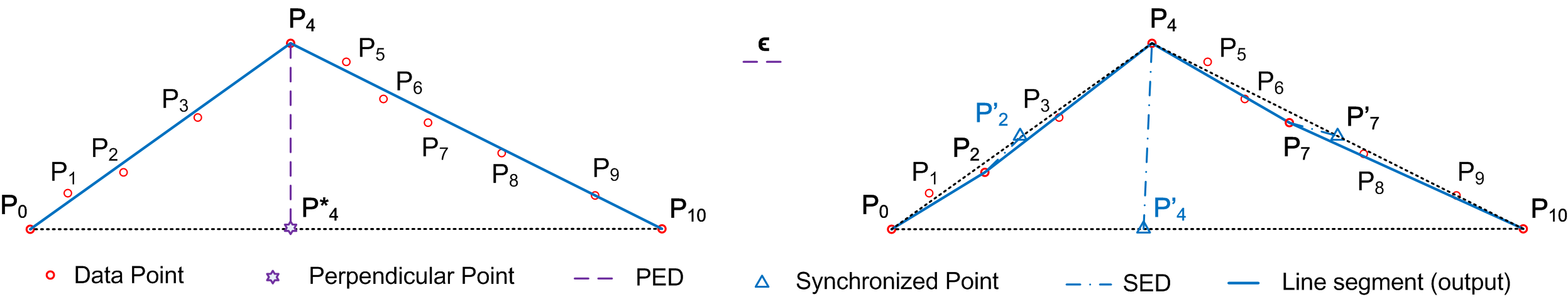}
\caption{\small A trajectory $\dddot{\mathcal{T}}[P_0, \ldots, P_{10}]$  with eleven points is represented by two (left) and four (right) continuous line segments (solid blue), compressed by the Douglas--Peucker algorithm \cite{Douglas:Peucker} using \ped and \sed, respectively. The Douglas--Peucker algorithm firstly creates line segment $\protect\vv{P_0P_{10}}$, then it calculates the distance of each point in the trajectory to $\protect\vv{P_{0}P_{10}}$. It finds that point $P_{4}$ has the maximum distance to $\protect\vv{P_{0}P_{10}}$, and is greater than the user defined threshold $\epsilon$. Then it goes to compress sub-trajectories $[P_0, \ldots, P_{4}]$ and $[P_{4}, \ldots, P_{10}]$, separately.
}
\vspace{-1ex}
\label{fig:notations}
\end{figure*}

Originally, line simplification (\lsa) algorithms adopt the \emph{perpendicular Euclidean distance} (\ped) as a metric to compute the errors,
\eg $|\vv{P_4P^*_4}|$ is the \ped of data point $P_4$ to line segment $\vv{P_0P_{10}}$ in Figure~\ref{fig:notations} (left).
Line simplification algorithms using \ped have good compression ratios~ \cite{Douglas:Peucker, Hershberger:Speeding,Lin:Operb, Liu:BQS, Muckell:Compression, Chen:Trajectory, Cao:Spatio, Shi:Survey}.  However, when using \ped, the temporal information is lost. Thus, a spatio-temporal query, \eg ``\emph{the position of a moving object at time $t$}", on the compressed trajectories by \lsa algorithms using \ped may return an approximate point $P'$ whose distance to the actual position $P$ of the moving object at time $t$ is unbounded. 

\eat{, of a data point to a proposed generalized line (\eg in Figure~\ref{fig:notations} (left), $|P_4P^*_4|$ is the \ped of $P_4$ to the line $\overline{P_0P_{10}}$) as the condition to discard or retain that data point.
\lsa algorithms using \ped have good compression ratios and are error bounded on \ped, hence they are widely used in scenarios that compression ratio is the most concerned factor. However, when using \ped, the temporal information of trajectory points is lost. Thus, a temporal-spatio query, \eg ``\emph{the position $P$ of a moving object at time $t$}", on trajectories compressed by \lsa algorithms using \ped returns an approximated point $P'$ whose distance to the actual position $P$ at time $t$ is unbounded. For example, a query at time $t_7$ returns an approximated point $P'_7$ whose distance to the point $P_7$ is great than the threshold $\epsilon$.

,  and implemented it in Douglas-Peucker (\dpa)~\cite{Douglas:Peucker} algorithm
}

The \emph{synchronous Euclidean distance} (\sed) was then introduced for trajectory compression to support the above spatio-temporal queries \cite{Meratnia:Spatiotemporal}. \sed is the Euclidean distance of a data point to its \emph{approximate temporally synchronized data point} \cite{Meratnia:Spatiotemporal} on the corresponding line segment. For instance, $P'_4$ and $P'_7$ are the \emph{synchronized data points} of points $P_4$ and $P_7$ \wrt line segments $\vv{P_0P_{10}}$ and $\vv{P_4P_{10}}$, respectively, in Figure~\ref{fig:notations} (right).
\lsa algorithms using \sed may produce more line segments. However, the use of \sed ensures that the Euclidean distance between a data point and its  synchronized point \wrt the corresponding line segment is limited within a distance bound $\epsilon$. Hence, the above spatio-temporal query over the trajectories compressed by \sed enabled approaches returns the synchronized point $P'$ of a data point $P$ within the bound $\epsilon$.


The problem of finding the minimal number of line segments to represent the original polygonal lines \wrt an error bound $\epsilon$ is known as the ``min-\#" problem\cite{Imai:Optimal,Chan:Optimal}.
An optimal $O(n^3)$  \lsa algorithm using \ped was firstly developed in \cite{Imai:Optimal},  where $n$ is the number of the original points.
	Later, an improved optimal  $O(n^2)$  algorithm using \ped was designed in \cite{Chan:Optimal}, with the help of \textit{sector intersection} mechanism.
	However, the time complexity of the optimal \lsa algorithm using \sed remains in $O(n^3)$, as the optimization mechanisms are \ped specific, and cannot work with \sed.

Due to the high time complexities of optimal \lsa algorithms using \sed, sub-optimal \lsa algorithms using \sed have been developed for trajectory compression, including batch algorithms (\eg Douglas-Peucker based algorithm \dpsed \cite{Meratnia:Spatiotemporal}) and online algorithms (\eg\ \squishe \cite{Muckell:Compression}).
However, these methods still have high time and/or space complexities, which hinder their utilities in resource-constrained devices. 

Observe that linear time \lsa algorithms using \ped \cite{Williams:Longest, Sklansky:Cone, Dunham:Cone, Zhao:Sleeve, Lin:Operb} have been develped, and they are more efficient for resource-constrained devices.
The key idea to achieve a linear time complexity is by local distance checking in constant time, \eg the \textit{sector intersection} mechanism used in \cite{Williams:Longest, Sklansky:Cone, Dunham:Cone, Zhao:Sleeve} and the \textit{fitting function} approach used in our preview work \cite {Lin:Operb}.
Unfortunately, these techinques are designed specifically for \ped, and cannot be applied for \sed.

Indeed, it is even more challenging to design an one-pass \lsa algorithm using \sed than using \ped.
To our knowledge,  no one-pass \lsa algorithms using \sed have been developed in the community.

\stitle{{Contributions}}.
To this end, we propose two one-pass error bounded \lsa algorithms using \sed for compressing trajectories in an efficient and effective way. 

\sstab {(1)} We first develop a novel local synchronous distance checking approach, \ie spatio-temporal \underline{C}one \underline{I}ntersection using the \underline{S}ynchronous \underline{E}uclidean \underline{D}istance (CISED).
We further approximate the intersection of spatio-temporal cones with the intersection of regular polygons, and develop a fast regular polygon intersection algorithm, such that each data point in a trajectory is checked in $O(1)$ time during the entire process of trajectory simplification.


\sstab {(2)} We next develop two one-pass trajectory simplification algorithms \cist and \cista, achieving $O(n)$ time complexity and $O(1)$ space complexity, based on our local synchronous distance checking technique.
Algorithm \cist belongs to strong simplification that only has original points in its outputs, while algorithm \cista belongs to weak simplification that allows interpolated data points in its output.

\sstab (3) Using four real-life trajectory datasets (\truck, \sercar, \geolife, \pricar),
we finally conduct an extensive experimental study, by comparing our methods \cist and \cista  with the optimal \lsa algorithm using \sed, \dps~\cite{Meratnia:Spatiotemporal} (the most effective existing \lsa algorithm using \sed) and \squishe \cite{Muckell:Compression} (the most efficient existing \lsa algorithm using \sed).

For running time,
algorithms \cist and \cista are on average ($14.21$, $18.19$, $17.06$, $9.98$), ($2.84$, $3.45$, $3.69$, $2.86$) and ($925.25$, $7888.26$, $40041.59$, $8528.76$) times faster than \dps, \squishe and the optimal \lsa algorithm on datasets (\sercar, \geolife, \mopsi, \pricar), respectively.

For compression ratios, algorithm \cist is better than \squishe and comparable with \dps. The output sizes of \cist are on average ($79.5\%$, $79.5\%$, $66.0\%$, $72.7\%$),
($109.1\%$, $109.7\%$, $113.5\%$, $109.2\%$) and ($134.3\%$, $133.3\%$,
  $148.4\%$, $135.4\%$) of \squishe, \dps and the optimal \lsa algorithm on datasets (\sercar, \geolife, \mopsi, \pricar), respectively.
Moreover, algorithm \cista is comparable with the optimal \lsa algorithm and better than \squishe and \dps that are on average ($57.9\%$, $58.8\%$, $48.4\%$, $54.6\%$),
($79.6\%$, $81.2\%$, $83.3\%$, $82.1\%$) and ($98.1\%$, $98.7\%$, $108.9\%$,
  $101.7\%$) of \squishe, \dps and the optimal \lsa algorithm on datasets (\sercar, \geolife, \mopsi, \pricar), respectively.

\stitle{{Organization}}.
The remainder of the article is organized as follows.
Section \ref{sec-preliminary} introduces the basic concepts and techniques.
Section \ref{sec-localcheck} presents our local synchronous distance checking method.
Section \ref{sec-alg} presents our one-pass trajectory simplification algorithms.
Section \ref{sec-exp} reports the experimental results, followed by related work in
Section \ref{sec-related} and conclusion in Section \ref{sec-conclusion}.

\section{Preliminaries}
\label{sec-preliminary}

In this section, we first introduce basic concepts for piece-wise line based trajectory compression.
We then describe the optimal \lsa algorithm and the \textit{sector intersection} mechanism, and show how this mechanism can be used to fast the \lsa algorithms using \ped and why it cannot work with \sed.
Finally, we illustrate a convex polygon intersection algorithm, which serves as one of the fundamental components of our local synchronous distance checking method.

Notations used are summarized in Table~\ref{tab:notations}.


\subsection{Basic Notations}
\label{subsec-notation}

We first introduce basic notations.

\begin{table}
	\renewcommand{\arraystretch}{1.35}
	\caption{\small Summary of notations}
	\centering
	\footnotesize
	\begin{tabular}{|c|l|}
		\hline
		{\bf Notations}& {\bf Semantics}   \\
		\hline 
		$P$ & a data point \\
		\hline
		$\dddot{\mathcal{T}}$ & a trajectory $\dddot{\mathcal{T}}$ is a sequence of data points\\
		\hline
		$\overline{\mathcal{T}}$&  {a piece-wise line representation of a }	\\
								& trajectory $\dddot{\mathcal{T}}$		\\
		\hline
		$\mathcal{L}$ & a directed line segment  \\
		\hline
		$ped(P, \mathcal{L})$ &  {the perpendicular Euclidean distance of }	\\
								& point $P$ to line segment $\mathcal{L}$	\\
		\hline
		$sed(P, \mathcal{L})$ & {the synchronous Euclidean distance of }	\\
								& point $P$ to line segment $\mathcal{L}$	\\
		\hline
		$\epsilon$ & the error bound \\
		\hline
		\sector{} & a sector \\
		\hline
		$\vv{A} \times \vv{B}$ & the cross product of (vectors) $\vv{A}$ and $\vv{B}$\\
		\hline
		$\mathcal{H}(\mathcal{L})$ & The open half-plane to the left of $\mathcal{L}$ \\
		\hline
		$\mathcal{R}$& a convex polygon \\
		\hline
		$\mathcal{R}^*$ & the intersection of convex polygons \\
		\hline
		$m$ & the maximum number of edges of a polygon\\
		\hline
		$E^j$ & a group of edges labeled with $j$\\
		\hline
		$g(e)$ & the label of an edge $e$ of polygons \\
		\hline
		\circle{} & a synchronous circle\\
		\hline
		\cone{} & a spatio-temporal cone \\
		\hline
		\pcircle{} & a cone projection circle \\
		\hline
		$\bigsqcap$ & intersection of geometries\\
		\hline
		$G$ &	the reachability graph of a trajectory\\
		\hline
	\end{tabular}
	\label{tab:notations}
	\vspace{-1ex}
\end{table}

\stitle{Points ($P$)}. A data point is defined as a triple $P(x, y, t)$, which represents that a moving object is located at {\em longitude} $x$ and {\em latitude} $y$ at {\em time} $t$. Note that data points can be viewed as points in a three-dimension Euclidean space.

\stitle{Trajectories ($\dddot{\mathcal{T}}$)}. A trajectory $\dddot{\mathcal{T}}[P_0, \ldots, P_n]$ is a sequence of data points in a monotonically increasing order of their associated time values (\ie $P_i.t < P_j.t$ for any $0\le i<j\le n$). Intuitively, a trajectory is the path (or track) that a moving object follows through space as a function of time~\cite{physics-trajectory}.

\stitle{Directed line segments ($\mathcal{L}$)}. A directed line segment (or line segment for simplicity) $\mathcal{L}$ is defined as $\vv{P_{s}P_{e}}$, which represents the closed line segment that connects the start point $P_s$ and the end point $P_e$.
Note that here $P_s$ or $P_e$ may not be a point in a trajectory $\dddot{\mathcal{T}}$.


We also use $|\mathcal{L}|$ and $\mathcal{L}.\theta\in [0, 2\pi)$ to denote the length of a directed line segment $\mathcal{L}$, and its angle with the $x$-axis of the coordinate system $(x, y)$, where $x$ and $y$ are the longitude and latitude, respectively.
That is, a directed line segment $\mathcal{L}$ = $\vv{P_{s}P_{e}}$ can be treated as a triple $(P_s, |\mathcal{L}|, \mathcal{L}.\theta)$.

\stitle{Piecewise line representation ($\overline{\mathcal{T}}$)}. A piece-wise line representation $\overline{\mathcal{T}}[\mathcal{L}_0, \ldots , \mathcal{L}_m]$ ($0< m \le n$) of a trajectory $\dddot{\mathcal{T}}[P_0, \ldots, P_n]$ is a sequence of continuous directed line segments $\mathcal{L}_{i}$ = $\vv{P_{s_i}P_{e_i}}$ ($i\in[0,m]$) of $\dddot{\mathcal{T}}$  such that $\mathcal{L}_{0}.P_{s_0} = P_0$, $\mathcal{L}_{m}.P_{e_m} = P_n$ and  $\mathcal{L}_{i}.P_{e_i}$ = $\mathcal{L}_{i+1}.P_{s_{i+1}}$ for all $i\in[0, m-1]$. Note that each directed line segment in $\overline{\mathcal{T}}$ essentially represents a continuous sequence of data points in $\dddot{\mathcal{T}}$.

\eat{
\subsubsection{Notations of error metrics}

{For line simplification, there are distance based and shape based error metrics\cite{Shi:Survey} that measure the errors between the original trajectory and the simplified trajectory.
However, for trajectory simplification, the distance based metrics are definitely the distinct metrics.}

\stitle{Included angles ($\angle$)}. Given two directed line segments $\mathcal{L}_1$ = $\vv{P_{s}P_{e_1}}$ and $\mathcal{L}_2$ = $\vv{P_{s}P_{e_2}}$ with the same start point $P_s$, the included angle from $\mathcal{L}_1$ to $\mathcal{L}_2$, denoted as $\angle(\mathcal{L}_1, \mathcal{L}_2)$,  is $\mathcal{L}_2.\theta - \mathcal{L}_1.\theta$. For convenience, we also represent the included angle  $\angle(\mathcal{L}_1, \mathcal{L}_2)$ as $\angle{P_{e_1}P_sP_{e_2}}$.
}


\stitle{Perpendicular Euclidean Distance (\ped)}. Given a data point $P$ and a directed line segment $\mathcal{L}$ = $\vv{P_{s}P_{e}}$, the perpendicular Euclidean distance (or simply perpendicular distance) $ped(P, \mathcal{L})$ of point $P$ to line segment $\mathcal{L}$ is $\min\{|PQ|\}$ for any point $Q$ on $\vv{P_{s}P_{e}}$.

\stitle{Synchronized points \cite{Meratnia:Spatiotemporal}}. Given a sub-trajectory $\dddot{\mathcal{T}}_s[P_s$, $\ldots, P_e]$, the synchronized point $P'$ of a data point  $P \in \dddot{\mathcal{T}}_s$,~\wrt line segment $\vv{P_sP_e}$ is defined as follows:
(1) $P'.x$ = $P_s.x +  c\cdot(P_e.x - P_s.x)$,
(2) $P'.y$ = $P_s.y +  c\cdot(P_e.y - P_s.y)$ and
(3) $P'.t$ = $P.t$, where $c= \frac{P.t-P_s.t}{P_e.t-P_s.t}$.

\stitle{Synchronous Euclidean Distance (\sed) \cite{Meratnia:Spatiotemporal}}. Given a data point $P$ and a directed line segment $\mathcal{L}$ = $\vv{P_{s}P_{e}}$, the synchronous Euclidean distance (or simply synchronous distance) $sed(P, \mathcal{L})$ of $P$ to $\mathcal{L}$ is $|\vv{PP'}|$ that is the Euclidean distance from $P$ to its synchronized data point $P'$ \wrt $\mathcal{L}$. 

We illustrate these notations with examples.

\begin{example}
\label{exm-notations}
Consider Figure~\ref{fig:notations}, in which

\sstab(1) $\dddot{\mathcal{T}}[P_0$, $\ldots, P_{10}]$ is a trajectory having 11 data points,

\sstab (2) the set of two continuous line segments $\{\vv{P_0P_4}$, $\vv{P_4P_{10}}$\} (Left) and the set of four continuous line segments $\{\vv{P_0P_2}$, $\vv{P_2P_4}$, $\vv{P_4P_7}$, $\vv{P_7P_{10}}$\} (Right) are two piecewise line representations of trajectory $\dddot{\mathcal{T}}$,

\sstab(3) $ped(P_4, \vv{P_0P_{10}})=|\vv{P_4P^*_4}|$, where $P^*_4$ is the perpendicular point of $P_4$ \wrt line segment $\vv{P_0P_{10}}$, and

\sstab (4) $sed(P_4, \vv{P_0P_{10}})= |\vv{P_4P'_4}|$, $sed(P_2, \vv{P_0P_{4}})= |\vv{P_2P'_2}|$ and $sed(P_7, \vv{P_4P_{10}})$ $=$ $|\vv{P_7P'_7}|$,
where points $P'_4$, $P'_2$ and $P'_7$ are the synchronized points of $P_4$, $P_2$ and $P_7$ \wrt line segments $\vv{P_0P_{10}}$, $\vv{P_0P_{4}}$ and $\vv{P_4P_{10}}$, respectively. \eop
\end{example}

\stitle{Error bounded algorithms}. Given a trajectory \trajec{T} and its compression  algorithm $\mathcal{A}$ using \sed (respectively \ped) that produces another trajectory \trajec{T'},
we say that algorithm $\mathcal{A}$ is error bounded by $\epsilon$ if  for each point $P$ in \trajec{T}, there exist points $P_j$ and $P_{j+1}$ in \trajec{T'} such that $sed(P, \mathcal{L}(P_j,P_{j+1}))\le \epsilon$ (respectively $ped(P, \mathcal{L}(P_j,P_{j+1}))\le \epsilon$).

\subsection{The Optimal \lsa Algorithm}
\label{subsec-optimal}

Given a trajectory \trajec{T}${[P_0, \ldots, P_n]}$ and an error bound $\epsilon$, the optimal trajectory simplification problem, as formulated by Imai and Iri in \cite{Imai:Optimal}, can be solved in two steps: (1) construct a reachability graph $G$ of \trajec{T} and (2) search a shortest path from $P_0$ to $P_{n}$ in graph $G$.

The reachability graph of a trajectory \trajec{T}${[P_0, \ldots, P_n]}$ \wrt a bound $\epsilon$ is an unweighted graph $G(V, E)$, where (1) $V = \{P_0, \ldots, P_n\}$, and (2) for any nodes $P_s$ and $P_{s+k} \in V$ ($s\ge 0, k>0, s+k\le n$), edge $(P_s, P_{s+k}) \in E$ if and only if the distance of each point $P_{s+i}$ $(i\in[0,k])$ to line segment $\vv{P_sP_{s+k}}$ is no greater than $\epsilon$.

Observe that in the reachability graph $G$, (1) a path from nodes $P_0$ to $P_{n}$ is a representation of trajectory \trajec{T}. The path also reveals the subset of points of \trajec{T} used in the approximate trajectory, (2) the path length corresponds to the number of line segments in the approximation trajectory, and
(3) a shortest path is an optimal representation of trajectory \trajec{T}.

Constructing the reachability graph $G$ needs to check for all pair of points $P_s$ and $P_{s+k}$ whether the distances of all points ($P_{s+i}$, $0<i<k$) to the line segment $\vv{P_sP_{s+k}}$ are less than $\epsilon$.
There are $O(n^2)$ pairs of points in the trajectory and checking the error of all points $P_{s+i}$ to a line segment $\vv{P_sP_{s+k}}$ takes $O(n)$ time.
Thus, the construction step takes $O(n^3)$ time.
Finding shortest paths on unweighted graphs can be done in linear time. Hence, the brute-force algorithm takes $O(n^3)$ time in total.

Though the brute-force algorithm was initially developed using \ped, it can be used for \sed.
As pointed out in \cite{Chan:Optimal}, the construction of the reachability graph $G$ using \ped can be implemented in $O(n^2)$ time using the \textit{sector intersection} mechanism (see Section \ref{sub-ci-ped}). However, the \textit{sector intersection} mechanism cannot work with \sed.  Hence, the construction of the reachability graph $G$ using \sed remains in $O(n^3)$ time, and the brute-force algorithm using \sed remains in  $O(n^3)$ time.

\subsection{Sector Intersection based Algorithms using \ped}
\label{sub-ci-ped}

\begin{figure*}[tb!]
	\centering
	\includegraphics[scale=0.78]{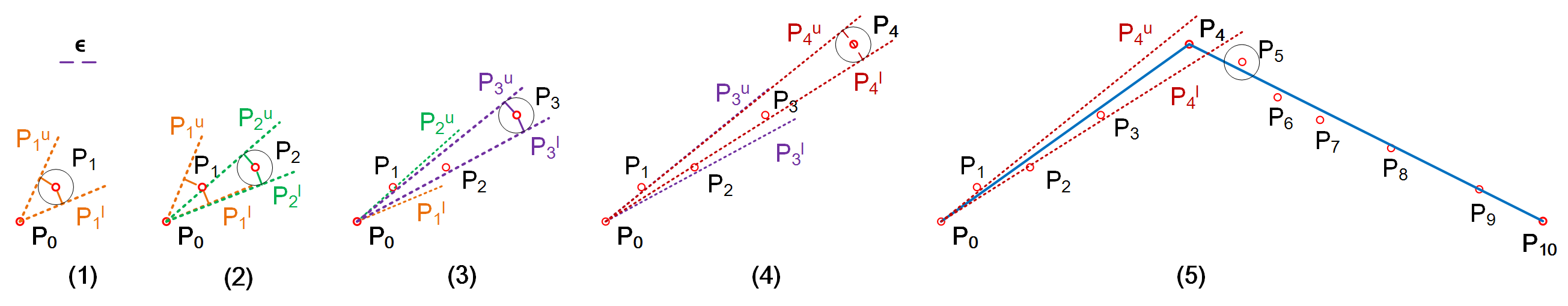}
	\caption{\small Trajectory $\dddot{\mathcal{T}}[P_0, \ldots, P_{10}]$ in Figure~\ref{fig:notations} is compressed into two line segments by the Sector Intersection algorithm  \cite{Williams:Longest, Sklansky:Cone}.}
	\vspace{-1ex}
	\label{fig:sleeve}
\end{figure*}

The sector intersection (\cia) algorithm \cite{Williams:Longest, Sklansky:Cone} was developed for graphic and pattern recognition in the late 1970s, for the approximation of arbitrary planar curves by linear segments or finding a polygonal approximation of a set of input data points in a 2D Cartesian coordinate system. The \sleeve algorithm \cite{Zhao:Sleeve} in the cartographic discipline essentially applies the same idea as the \cia algorithm.
Further, \cite{Dunham:Cone}  optimized algorithm \cia by considering the distance between a potential end point and the initial point of a line segment. It is worth pointing out that all these \cia based algorithms use the perpendicular Euclidean distances.


Given a sequence of data points $[P_{s}, P_{s+1}, \ldots, P_{s+k}]$ and an error bound $\epsilon$, the \cia based algorithms process the input points one by one in order, and produce a simplified polyline.  Instead of using the distance threshold $\epsilon$ directly, the \cia based algorithms convert the distance tolerance into a variable angle tolerance for testing the successive data points.

For the start data point $P_s$, any point $P_{s+i}$ and $|\vv{P_sP_{s+i}}|>\epsilon$ ($i\in[1, k]$), there are two directed lines $\vv{P_sP^u_{s+i}}$ and $\vv{P_sP^l_{s+i}}$ such that $ped(P_{s+i}, \vv{P_sP^u_{s+i}})$ $=$ $ped(P_{s+i}, \vv{P_sP^l_{s+i}}) = \epsilon$ and either ($\vv{P_sP^l_{s+i}}.\theta < \vv{P_sP^u_{s+i}}.\theta ~and~\vv{P_sP^u_{s+i}}.\theta - \vv{P_sP^l_{s+i}}.\theta <\pi$) or ($\vv{P_sP^l_{s+i}}.\theta > \vv{P_sP^u_{s+i}}.\theta ~and~ \vv{P_sP^u_{s+i}}.\theta - \vv{P_sP^l_{s+i}}.\theta < -\pi)$. Indeed, they forms a \emph{sector} \sector{(P_s, P_{s+i}, \epsilon)} that takes $P_s$ as the center point and $\vv{P_sP^u_{s+i}}$ and $\vv{P_sP^l_{s+i}}$ as the border lines.
Then there exists a data point $Q$ such that for any data point $P_{s+i}$ ($i \in [1, ... k]$), its perpendicular Euclidean distance to
directed line $\overline{P_sQ}$ is no greater than the error bound $\epsilon$ if and only if the $k$ sectors \sector{(P_s, P_{s+i}, \epsilon)} ($i\in[1,k]$) share common data points other than $P_s$, \ie $\bigsqcap_{i=1}^{k}$\sector{(P_s, P_{s+i}, \epsilon)} $\ne \{P_s\}$ \cite{Williams:Longest, Sklansky:Cone,Zhao:Sleeve}.

The point $Q$ may not belong to $\{P_{s}, P_{s+1},$ $\ldots, P_{s+k}\}$.
However, if $P_{s+i}$ ($1\le i\le k$) is chosen as $Q$, then
for any data point $P_{s+j}$ ($j \in [1, ... i]$), its perpendicular Euclidean distance to
line segment $\overline{P_sP_{s+i}}$ is no greater than the error bound $\epsilon$ if and only if $\bigsqcap_{j=1}^{i}$\sector{(P_s, P_{s+j}, \epsilon/2)} $\ne \{P_s\}$, as pointed out in \cite{Zhao:Sleeve}.

That is, {\em these \cia based algorithms can be easily adopted for trajectory compression using \ped although they have been overlooked by existing trajectory simplification studies}.
The \cia based algorithms run in $O(n)$ time and $O(1)$ space and are one-pass algorithms.

However, if we use \sed instead of \ped, then  ``the $k$ sectors \sector{(P_s, P_{s+i}, \epsilon)} ($i\in[1,k]$) sharing common data points other than $P_s$" cannot ensure ``there exists a data point $Q$ such that for any data point $P_{s+i}$ ($i \in [1, ... k]$), its synchronized Euclidean distance to directed line $\overline{P_sQ}$ is no greater than the error bound $\epsilon$''.
Hence, the \cia mechanism is \ped specific, and not applicable for \sed.

We next illustrate how the \cia based algorithms can be used for trajectory compression with an example.

\begin{example}
\label{exm-alg-sleeve}
Consider Figure~\ref{fig:sleeve}. An \cia based simplification algorithm takes as input a trajectory $\dddot{\mathcal{T}}[P_0, \ldots, P_{10}]$, and returns a simplified ployline consisting of two line segments $\vv{P_0P_4}$ and  $\vv{P_4P_{10}}$. Initially, point $P_0$ is the start point.

\sstab(1) Point $P_1$ is firstly read, and the sector \sector{(P_0, P_{1}, \epsilon/2)} of $P_1$ is created as shown in Figure~\ref{fig:sleeve}.(1).

\sstab(2) Then $P_2$ is read, and the sector \sector{(P_0, P_{2}, \epsilon/2)}  is created for $P_2$. The intersection of sectors \sector{(P_0, P_{1}, \epsilon/2)} and  \sector{(P_0, P_{2}, \epsilon/2)} contains data points other than $P_0$,  which has an up border line $P_0P_2^u$ and a low border line $P_0P_1^l$, as shown in Figure~\ref{fig:sleeve}.(2).

Similarly, points $P_3$ and $P_4$ are processed, as shown in Figures~\ref{fig:sleeve}.(3) and \ref{fig:sleeve}.(4), respectively.

\sstab(3) When point $P_5$ is read,  line segment $\vv{P_0P_4}$ is produced, and point $P_4$ becomes the start point, as $\bigsqcap_{i=1}^{4}$\sector{(P_0, P_{s+i}, \epsilon/2)} $\ne\{P_0\}$ and $\bigsqcap_{i=1}^{5}$\sector{(P_0, P_{s+i}, \epsilon/2)} $=\{P_0\}$ as shown in Figure~\ref{fig:sleeve}.(5).

\sstab(4) Points $P_5, \ldots, P_{10}$ are processed similarly one by one in order, and finally the algorithm outputs another line segment $\vv{P_4P_{10}}$ as shown in Figure~\ref{fig:sleeve}.(5). \eop
\end{example}

\subsection{Intersection Computation of Convex Polygons}
\label{subsec-cpi}

We also employ and revise a convex polygon intersection algorithm developed in~\cite{ORourke:Intersection}, whose basic idea is straightforward.
Assume \kwlog that the edges of polygons $\mathcal{R}_1$ and $\mathcal{R}_2$ are oriented counterclockwise, and $\vv{A} = (P_{s_A}, P_{e_A})$ and $\vv{B} = (P_{s_B}, P_{e_B})$ are two (directed) edges on $\mathcal{R}_2$ and $\mathcal{R}_1$, repectively (see Figure~\ref{fig:c-poly-inter}).

The algorithm has $\vv{A}$ and $\vv{B}$ ``chasing'' one another, \ie moves $\vv{A}$ on $\mathcal{R}_2$ and $\vv{B}$ on $\mathcal{R}_1$ counter-clockwise step by step under certain rules, so that they meet at every crossing of $\mathcal{R}_1$ and $\mathcal{R}_2$.
The rules, called \emph{advance rules}, are carefully designed depending on geometric conditions of $\vv{A}$ and $\vv{B}$.
Let $\vv{A} \times \vv{B}$ be the cross product of (vectors) $\vv{A}$ and $\vv{B}$, and $\mathcal{H}(\vv{A})$ be the open half-plane to the left of $\vv{A}$, the rules are as follows:

\sstab {\em Rule (1)}: If $\vv{A} \times \vv{B} < 0$ and $P_{e_A} \not \in \mathcal{H}(\vv{B})$, or $\vv{A} \times \vv{B} \ge 0$ and $P_{e_B} \in \mathcal{H}(\vv{A})$, then $\vv{A}$ is advanced a step.

For example, in Figure~\ref{fig:c-poly-inter}.(1) and~\ref{fig:c-poly-inter}.(2), $\vv{A}$ moves forward a step as  $\vv{A} \times \vv{B} > 0$ and $P_{e_B} \in \mathcal{H}(\vv{A})$.

\sstab {\em Rule (2)}: If $\vv{A} \times \vv{B} \ge 0$ and $P_{e_B} \not \in \mathcal{H}(\vv{A})$, or $\vv{A} \times \vv{B} < 0$ and $P_{e_A} \in \mathcal{H}(\vv{B})$, then  $\vv{B}$ is advanced a step.

For example, in Figure~\ref{fig:c-poly-inter}.(6) and~\ref{fig:c-poly-inter}.(7), $\vv{B}$ moves forward a step as $\vv{A} \times \vv{B} < 0$ and $P_{e_A} \in \mathcal{H}(\vv{B})$.

\stitle{Algorithm \cpia}. The complete algorithm is shown in Figure~\ref{alg:c-poly-inter}.
Given two convex polygons $\mathcal{R}_1$ and $\mathcal{R}_2$, algorithm \cpia first arbitrarily sets directed edge $\vv{A}$ on $\mathcal{R}_2$ and directed edge $\vv{B}$ on $\mathcal{R}_1$, respectively (line 1).
It then checks the intersection of edges $\vv{A}$ and $\vv{B}$. If $\vv{A}$ intersects $\vv{B}$ (line 3), then the algorithm checks for some special termination conditions (\eg if $\vv{A}$ and $\vv{B}$ are overlapped and, at the same time, polygons $\mathcal{R}_1$ and $\mathcal{R}_2$ are on the opposite sides of the overlapped edges, then the process is terminated) (line 4), and records the inner edge, which is a boundary segment of the intersection polygon (line 5).
After that, the algorithm moves on $\vv{A}$ or $\vv{B}$ one step under the advance rules (lines 6--11).
The above processes repeated, until both $\vv{A}$ and $\vv{B}$ cycle their polygons (line 12).
Next, the algorithm handles three special cases of the polygons $\mathcal{R}_1$ and $\mathcal{R}_2$, \ie $\mathcal{R}_1$ is inside of $\mathcal{R}_2$, $\mathcal{R}_2$ is inside of $\mathcal{R}_1$, and $\mathcal{R}_1 \bigsqcap \mathcal{R}_2 = \emptyset$ (line 13).
At last, it returns the intersection polygon (line 14).

The algorithm has a time complexity of $O(|\mathcal{R}_1| + |\mathcal{R}_2|)$, where $|\mathcal{R}|$ is the number of edges of polygon $\mathcal{R}$.
It is worth pointing out that $|\mathcal{R}_1 \bigsqcap \mathcal{R}_2| \le (|\mathcal{R}_1| + |\mathcal{R}_2|)$.

\begin{figure}[tb!]
	\begin{center}
		{\small
			\begin{minipage}{3.3in}
				\myhrule
				\vspace{-1ex}
				\mat{0ex}{
					{\bf Algorithm}~\cpia($\mathcal{R}_1$, $\mathcal{R}_2$) \\
					\bcc \hspace{1ex}\=  \SET $\vv{A}$ and $\vv{B}$ {arbitrarily} on $\mathcal{R}_2$ and $\mathcal{R}_1$, respectively;\\
					\icc \hspace{1ex}\= \Repeat \\
					\icc \>\hspace{2ex} \If $\vv{A} \bigsqcap \vv{B} \ne \emptyset$ \Then \\
					\icc \>\hspace{4ex} {Check for termination;} \\
					\icc \>\hspace{4ex} Update an inside flag;\\
					\icc \> \hspace{2ex} \If ($\vv{A}\times\vv{B} < 0$ \And $P_{e_A} \not \in \mathcal{H}(\vv{B})$) \Or \\
					\icc \> \hspace{4ex} ($\vv{A} \times \vv{B} \ge 0$ \And $P_{e_B} \in \mathcal{H}(\vv{A})$) \Then \\
					\icc \> \hspace{4ex} advance $\vv{A}$ one step; \\
					\icc \> \hspace{2ex} \ElseIf ($\vv{A} \times \vv{B} \ge 0$ \And $P_{e_B} \not \in \mathcal{H}(\vv{A})$) \Or\\
					\icc \> \hspace{4ex}  ($\vv{A} \times \vv{B} < 0$ \And $P_{e_A} \in \mathcal{H}(\vv{B})$) \Then \\
					\icc \> \hspace{4ex} advance $\vv{B}$ one step; \\
					\icc \hspace{0ex} \Until both $\vv{A}$ and $\vv{B}$ cycle their polygons \\
					\icc \hspace{0ex} \Handle $\mathcal{R}_1 \subset \mathcal{R}_2$ and $\mathcal{R}_2 \subset \mathcal{R}_1$ and $\mathcal{R}_1 \bigsqcap \mathcal{R}_2 = \emptyset$ cases; \\
					\icc \hspace{0ex} \Return $\mathcal{R}_1 \bigsqcap \mathcal{R}_2$.
				}
				\vspace{-2ex}
				\myhrule
			\end{minipage}
		}
	\end{center}
	\vspace{-2ex}
	\caption{\small Algorithm for convex polygons intersection~\cite{ORourke:Intersection}.}
	\label{alg:c-poly-inter}
	\vspace{-2ex}
\end{figure}

\begin{example}
Figure~\ref{fig:c-poly-inter} shows a running example of the convex polygon intersection algorithm \cpia.

\sstab(1) Initially, directed edges $\vv{A}$ and $\vv{B}$ are on polygons $\mathcal{R}_2$ and $\mathcal{R}_1$, respectively, such that $\vv{A} \bigsqcap \vv{B} = \{P_1\}$, \ie $\vv{A}$ and $\vv{B}$ intersect on point $P_1$, as shown in Figure~\ref{fig:c-poly-inter}.(1).

\sstab(2) Then, by advance rule (1), edge $\vv{A}$ moves on a step and makes $\vv{A} \bigsqcap \vv{B} = \emptyset$ as shown in Figure~\ref{fig:c-poly-inter}.(2).
After 7 steps of moving of edge $\vv{A}$ or $\vv{B}$, each by an advance rule, $\vv{A}$ and $\vv{B}$ intersect on $P_2$, as shown in Figure~\ref{fig:c-poly-inter}.(6).

\sstab(3) Next, edge $\vv{B}$ moves on a step, and makes $\vv{A} \bigsqcap \vv{B} = \emptyset$, as shown in Figure~\ref{fig:c-poly-inter}.(7).

\sstab(4) After 6 steps of moving of edge $\vv{B}$ or $\vv{A}$ one by one, both edges $\vv{A}$ and $\vv{B}$ have finished their cycles as shown in Figure~\ref{fig:c-poly-inter}.(8).

\sstab(5) The algorithm finally returns the intersection polygon as shown in Figure~\ref{fig:c-poly-inter}.(9). \eop
\end{example}

\begin{figure}[tb!]
	\centering
	\includegraphics[scale=0.92]{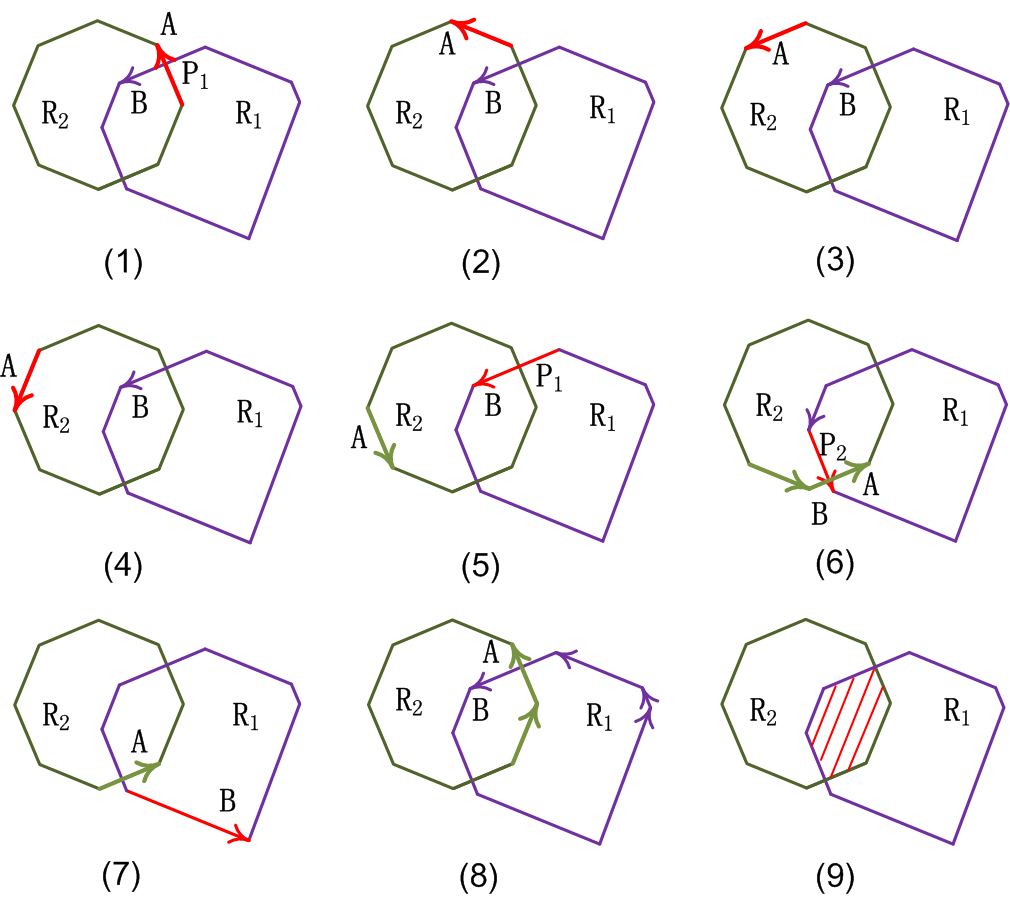}
	\caption{\small A running example of convex polygons intersection.}
	\vspace{-1ex}
	\label{fig:c-poly-inter}
\end{figure}

\eat{
\subsection{Problem Definition}
This paper focus on the \emph{min-$\#$ problem} \cite{Chan:Optimal, Imai:Optimal,Pavlidis:Segment} of trajectory simplification.
Given a trajectory \trajec{T}$[P_0, \dots, P_n]$ and a pre-specified constant $\epsilon$, a trajectory simplification algorithm $\mathcal{A}$ approximates \trajec{T} by $\overline{\mathcal{T}}[\mathcal{L}_0, \ldots , \mathcal{L}_m]$ ($0< m \le n$), where on
each of them the data points $[P_{s_i}, \dots, P_{e_i}]$ are approximated by a line segment $\mathcal{L}_i = \vv{P_{s_i}P_{e_i}}$ with the maximum error of  $ped(P_j, \mathcal{L}_i)$ or $sed(P_j, \mathcal{L}_i)$, $s_i < j<e_i$,  less than $\epsilon$.
The optimal methods that find the minimal $m$, having the time complexity of $O(n^2)$\cite{Chan:Optimal},
making it impractical for large inputs\cite{Heckbert:Survey}.
Hence, this paper evaluates the distinct sub-optimal methods.

\begin{figure}[tb!]
\label{fig:scope}
\centering
\includegraphics[scale=0.8]{figures/Fig-scope.png}
\vspace{-1ex}
\caption{Example \emph{scopes} of Synchronous points. The Synchronous point $P'$ has (1) a circle scope  when using \sed, and (2) a rectangle scope when using \ded.}

\vspace{-2ex}
\end{figure}

}

\section{Local Synchronous Distance Checking}
\label{sec-localcheck}


In this section, we develop a local synchronous distance checking approach such that each point in a trajectory is checked only once in $O(1)$ time during the entire process of trajectory simplification, by substantially extending the \textit{sector intersection} method in Section~\ref{sub-ci-ped} from a 2D space to a Spatio-Temporal 3D space, which lays down the key for the one-pass trajectory simplification algorithms using \sed (Section~\ref{sec-alg}).

We consider a sub-trajectory $\dddot{\mathcal{T}}_s[P_s, \ldots, P_{s+k}]$, an error bound $\epsilon$, and a 3D Cartesian coordinate system whose origin, $x$-axis, $y$-axis and $t$-axis  are $P_s$, longitude, latitude and time, respectively.

\subsection{Spatio-Temporal Cone Intersection}

We first present the \textit{spatio-temporal cone intersection} method in a 3D Cartesian coordinate system, which extends the \textit{sector intersection} method~\cite{Williams:Longest, Sklansky:Cone, Zhao:Sleeve}. 


\stitle{Synchronous Circles (\circle{})}. The synchronous circle of a data point $P_{s+i}$ ($1\le i\le k$) in $\dddot{\mathcal{T}}_s$ \wrt an error bound $\epsilon$, denoted as \circle{(P_{s+i}, \epsilon)}, or \circle{_{s+i}} in short, is a circle on the plane $P.t-P_{s+i}.t = 0$ such that $P_{s+i}$ is its center and $\epsilon$ is its radius.

Figure~\ref{fig:cis} shows two synchronous circles, \circle{(P_{s+i}, \epsilon)} of point $P_{s+i}$ and \circle{(P_{s+k}, \epsilon)} of point $P_{s+k}$.
It is easy to know that for any point in the area of a circle \circle{(P_{s+i}, \epsilon)}, its distance to $P_{s+i}$ is no greater than $\epsilon$.


\stitle{Spatio-temporal cones (\cone{})}. The spatio-temporal cone (or simply \textit{cone}) of a data point $P_{s+i}$ ($1\le i\le k$) in $\dddot{\mathcal{T}}_s$ \wrt a point $P_s$ and an error bound $\epsilon$, denoted as \cone{(P_s, \mathcal{O}(P_{s+i}, \epsilon))}, or \cone{_{s+i}} in short, is an oblique circular cone such that point $P_s$ is its apex and the synchronous circle $\mathcal{O}(P_{s+i}, \epsilon)$ of point $P_{s+i}$ is its base.

Figure~\ref{fig:cis} also illustrates two example spatio-temporal cones: \cone{(P_s, \mathcal{O}(P_{s+i}, \epsilon))} {(purple)} and \cone{(P_s, \mathcal{O}(P_{s+k}, \epsilon))} (red), with the same apex $P_s$ and error bound $\epsilon$.

\begin{figure}[tb!]
	\centering
	\includegraphics[scale=0.66]{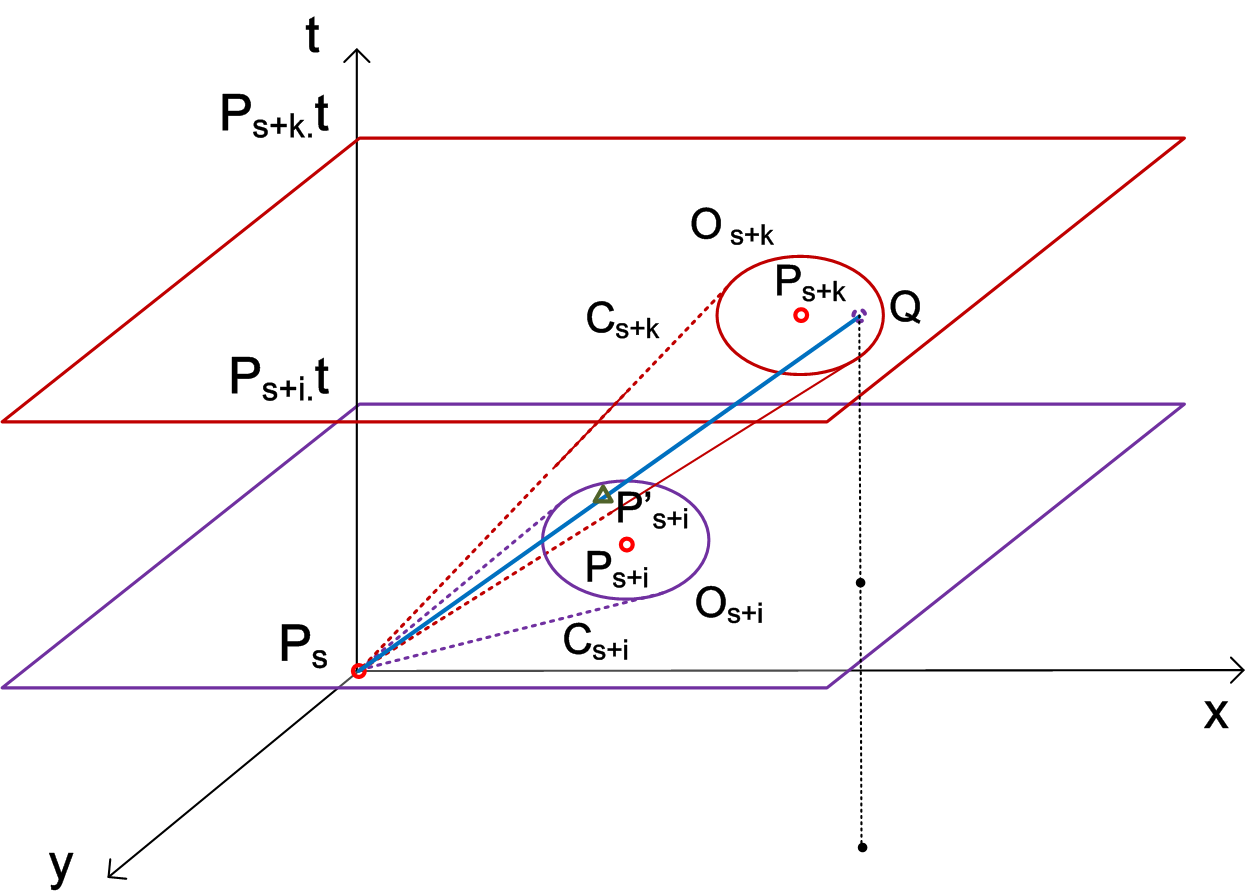}
	\caption{\small Examples of spatio-temporal cones in a 3D Cartesian coordinate system, where (1) $P_s$, $P_{s+i}$ and $P_{s+k}$ are three points, (2) \circle{_{s+i}} and \circle{_{s+k}} are two synchronous circles, (3) \cone{_{s+i}} and \cone{_{s+k}} are two spatio-temporal cones, (4) $Q$ is a point in synchronous circle \circle{_{s+k}}, and (5) $P'_{s+i}$ is the intersection point of line $\overline{P_sQ}$ and synchronous circle \circle{_{s+i}}.}
	\label{fig:cis}
\end{figure}



\begin{prop}
\label{prop-3d-syn-point}
Given a sub-trajectory $[P_s, \ldots, P_{s+k}]$ and a point $Q$ in the area of synchronous circle \circle{(P_{s+k}, \epsilon)}, the intersection point $P'_{s+i}$ of the directed line segment $\vv{P_sQ}$ and the plane $P.t - P_{s+i}.t = 0$ is the synchronized point of $P_{s+i}$ ($1\le i\le k$) \wrt  $\vv{P_sQ}$, and the distance $|\vv{P_{s+i}P'_{s+i}}|$ from $P_{s+i}$ to $P'_{s+i}$ is the synchronous distance of $P_{s+i}$ to $\vv{P_sQ}$.
\end{prop}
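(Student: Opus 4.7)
The plan is to verify directly from the definitions that the point $P'_{s+i}$ satisfies the three conditions given earlier for a synchronized point, and then observe that the second claim is immediate from the definition of \sed.

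First, I would parametrize the directed line segment $\vv{P_sQ}$ as the set of points $P_s + c\cdot(Q - P_s)$ for $c \in [0,1]$. Since $Q$ lies in the area of the synchronous circle $\mathcal{O}(P_{s+k},\epsilon)$, which by definition lies on the plane $P.t - P_{s+k}.t = 0$, we have $Q.t = P_{s+k}.t$. Because the trajectory is time-monotonic and $1 \le i \le k$, we have $P_s.t < P_{s+i}.t \le P_{s+k}.t = Q.t$, so the plane $P.t - P_{s+i}.t = 0$ meets the segment $\vv{P_sQ}$ in exactly one interior point. Solving the $t$-component equation $P_s.t + c\cdot(Q.t - P_s.t) = P_{s+i}.t$ yields a unique $c = \frac{P_{s+i}.t - P_s.t}{Q.t - P_s.t} \in [0,1]$.

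Next, I would show that this $c$ yields exactly the defining conditions of the synchronized point of $P_{s+i}$ with respect to the segment $\vv{P_sQ}$. By construction $P'_{s+i}.t = P_{s+i}.t$, which gives condition (3). The $x$- and $y$-coordinates of $P'_{s+i}$ are, from the parametrization, $P'_{s+i}.x = P_s.x + c\cdot(Q.x - P_s.x)$ and $P'_{s+i}.y = P_s.y + c\cdot(Q.y - P_s.y)$, matching conditions (1) and (2) with the computed $c$. Thus $P'_{s+i}$ is precisely the synchronized point of $P_{s+i}$ with respect to $\vv{P_sQ}$.

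Finally, by the very definition of the synchronous Euclidean distance, $sed(P_{s+i}, \vv{P_sQ}) = |\vv{P_{s+i}P'_{s+i}}|$, so the second claim is immediate once the first is established. There is no genuinely hard step here; the only subtlety to flag is the use of the hypothesis that $Q$ lies in the synchronous circle of $P_{s+k}$ (and not elsewhere in space), which is what pins down $Q.t = P_{s+k}.t$ and guarantees that the intersection with the plane $P.t = P_{s+i}.t$ occurs on the segment itself rather than on its extension.
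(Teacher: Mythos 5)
Your proof is correct and follows essentially the same route as the paper's: both verify directly that the intersection point satisfies the defining conditions (1)--(3) of a synchronized point via the proportionality of coordinates along $\vv{P_sQ}$, and then invoke the definition of \sed. The only difference is that you additionally spell out why the intersection lies on the segment itself (using $Q.t = P_{s+k}.t$ and time-monotonicity), a detail the paper leaves implicit.
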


\begin{proof}\ It suffices to show that $P'_{s+i}$ is indeed a synchronized point $P_{s+i}$ \wrt $\vv{P_sQ}$.
The intersection point $P'_{s+i}$ satisfies that $P'_{s+i}.t = P_{s+i}.t$ and
$\frac{P'_{s+i}.t - P_{s}.t}{Q.t - P_{s}.t}$ = $\frac{P_{s+i}.t - P_{s}.t}{Q.t - P_{s}.t}$  =
$\frac{|\vv{P_sP'_{s+i}}|}{|\vv{P_sQ}|}$ =
$\frac{P'_{s+i}.x - P_{s}.x}{Q.x - P_{s}.x}$ = $\frac{P'_{s+i}.y - P_{s}.y}{Q.y - P_{s}.y}$.
Hence, by the definition of synchronized points, we have the conclusion. \eop
\end{proof}

\begin{prop}
\label{prop-3d-ci}
Given a sub-trajectory $[P_s,...,P_{s+k}]$ and an error bound $\epsilon$, there exists a point $Q$ such that $Q.t = P_{s+k}.t$ and $sed(P_{s+i}, \vv{P_sQ})\le \epsilon$ for each $i \in [1,k]$ if and only if $\bigsqcap_{i=1}^{k}$\cone{(P_s, \mathcal{O}(P_{s+i}, \epsilon))} $\ne \{P_s\}$.
\end{prop}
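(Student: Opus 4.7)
The plan is to prove the biconditional by interpreting each cone \cone{(P_s, \mathcal{O}(P_{s+i}, \epsilon))} as the union of rays from the apex $P_s$ through the base disk $\mathcal{O}(P_{s+i}, \epsilon)$, and then using Proposition \ref{prop-3d-syn-point} as the bridge between the \sed condition and the 3D cone geometry. Under this reading, the ray $\vv{P_sQ}$ lies inside \cone{_{s+i}} iff its intersection with the plane $t = P_{s+i}.t$ is within distance $\epsilon$ of $P_{s+i}$; conjoining this condition across all $i$ is exactly what the proposition equates.

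For the forward direction ($\Rightarrow$), I would take a $Q$ satisfying $Q.t = P_{s+k}.t$ together with all the \sed bounds, and invoke Proposition \ref{prop-3d-syn-point} to locate the synchronized point $P'_{s+i}$ of $P_{s+i}$ \wrt $\vv{P_sQ}$ as the intersection of $\vv{P_sQ}$ with the plane $t = P_{s+i}.t$. Each hypothesis $sed(P_{s+i}, \vv{P_sQ}) = |\vv{P_{s+i}P'_{s+i}}| \le \epsilon$ places $P'_{s+i}$ inside $\mathcal{O}(P_{s+i}, \epsilon)$, so $\vv{P_sQ}$ is a common generator of every cone. Consequently every non-apex point of $\vv{P_sQ}$, in particular $Q$ itself, lies in $\bigsqcap_{i=1}^{k}$\cone{_{s+i}}, which is therefore strictly larger than $\{P_s\}$.

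For the converse ($\Leftarrow$), I would pick any $R \ne P_s$ in $\bigsqcap_{i=1}^{k}$\cone{_{s+i}}, extend the ray from $P_s$ through $R$ until it meets the plane $t = P_{s+k}.t$, and call the hitting point $Q$ (this is well defined because membership in any cone forces $R.t > P_s.t$, so the ray is non-degenerate in time). Since each cone is convex with apex $P_s$, membership of $R$ propagates along the ray, so the entire segment $\vv{P_sQ}$ sits inside every \cone{_{s+i}}; in particular its intersection with the plane $t = P_{s+i}.t$ lies in $\mathcal{O}(P_{s+i}, \epsilon)$. Applying Proposition \ref{prop-3d-syn-point} once more identifies this intersection as the synchronized point $P'_{s+i}$ of $P_{s+i}$ \wrt $\vv{P_sQ}$, and hence $sed(P_{s+i}, \vv{P_sQ}) = |\vv{P_{s+i}P'_{s+i}}| \le \epsilon$ for every $i \in [1,k]$.

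The main obstacle I anticipate is keeping the cone semantics consistent across different base times: each \cone{_{s+i}} has its base at a different $t$-coordinate, so a naive ``finite solid from apex to base'' reading would confine the cone intersection to the strip $t \le P_{s+1}.t$ and decouple it from the target witness $Q$ at time $P_{s+k}.t$. The proof must therefore fix the convention that membership in \cone{_{s+i}} is determined by whether the generator ray through the point pierces $\mathcal{O}(P_{s+i}, \epsilon)$ (equivalently, extend cones beyond their bases). Once that reading is pinned down, the equivalence between ``ray inside the cone'' and ``ray meets the synchronous circle'' is immediate, and Proposition \ref{prop-3d-syn-point} mechanically translates the geometric statement into the \sed statement.
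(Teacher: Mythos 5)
Your proof is correct and follows essentially the same route as the paper's: both directions hinge on Proposition~\ref{prop-3d-syn-point} to identify the synchronized point $P'_{s+i}$ with the intersection of $\vv{P_sQ}$ and the plane $t = P_{s+i}.t$, thereby translating $sed(P_{s+i}, \vv{P_sQ})\le\epsilon$ into membership of $P'_{s+i}$ in \circle{(P_{s+i},\epsilon)}, i.e., into $\vv{P_sQ}$ being a common generator of all the cones. You are in fact more explicit than the paper about the one genuine subtlety --- that the cones have bases at different time coordinates, so one must either extend the cones past their bases or take a witness $R$ in the intersection and prolong the ray $P_sR$ to the plane $t=P_{s+k}.t$ --- which the paper's proof glosses over with the bare assertion that a suitable $Q$ ``must exist.''
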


\begin{proof}\
Let $P'_{s+i}$ ($i\in[1, k]$) be the intersection point of line segment $\vv{P_sQ}$ and the plane $P.t - P_{s+i}.t$ = $0$.
By Proposition~\ref{prop-3d-syn-point}, $P'_{s+i}$ is the synchronized point of $P_{s+i}$ \wrt $\vv{P_sQ}$.

Assume first that $\bigsqcap_{i=1}^{k}$\cone{(P_s, \mathcal{O}(P_{s+i}, \epsilon))} $\ne \{P_s\}$. Then there must exist a point $Q $ in the area of the  synchronous circle \circle{(P_{s+k}, \epsilon)} such that $\vv{P_sQ}$ passes through all the cones \cone{(P_s, \mathcal{O}(P_{s+i}, \epsilon))} $i\in[1, k]$. Hence,  $Q.t = P_{s+k}.t$.
We also have $sed(P_{s+i}, \vv{P_sQ}) = |\vv{P'_{s+i}P_{s+i}}| \le \epsilon$ for each $i \in [1, k]$  since $P'_{s+i}$  is in the area of circle  \circle{(P_{s+i}, \epsilon)}.

Conversely, assume that there exists a point $Q$ such that $Q.t = P_{s+k}.t$ and $sed(P_{s+i}, \vv{P_sQ})\le\epsilon$ for all $P_{s+i}$ ($i \in [1,k]$). Then $|\vv{P'_{s+i}P_{s+i}}| \le \epsilon$ for all $i \in [1, k]$. Hence, we have  $\bigsqcap_{i=1}^{k}$\cone{(P_s, \mathcal{O}(P_{s+i}, \epsilon))} $\ne \{P_s\}$. \eop
\end{proof}

By Proposition~\ref{prop-3d-ci}, we now have a spatio-temporal cone intersection method in a 3D Cartesian coordinate system, which significantly extends the sector intersection method~\cite{Williams:Longest, Sklansky:Cone, Zhao:Sleeve} {from a 2D space to a Spatio-Temporal 3D space}.

\subsection{Circle Intersection}
\label{subsec-ProjectionCircle}

For spatio-temporal cones with the same apex $P_s$, the checking of their intersection can be computed by a much simpler way, \ie the checking of intersection of cone projection circles on a plane, as follows.

\stitle{Cone projection circles}. The projection of a cone \cone{(P_s, \mathcal{O}(P_{s+i}, \epsilon))} on a plane $P.t- t_c = 0$ ($t_c > P_s.t$) is a circle \pcircle{(P^c_{s+i}, r^c_{s+i})}, or \pcircle{_{s+i}} in short, such that
(1) $P^c_{s+i}.x = P_s.x +  c\cdot(P_{s+i}.x- P_{s}.x)$,
(2) $P^c_{s+i}.y = P_s.y +  c\cdot(P_{s+i}.y- P_{s}.y)$,
(3) $P^c_{s+i}.t = t_c$ and
(4) $r^c_{s+i} =c\cdot\epsilon$, where $c=\frac{t_c - P_s.t}{P_{s+i}.t - P_s.t}$ .

\vspace{.5ex}

In Figure~\ref{fig:pcircle}, the green dashed circles \pcircle{(P^c_{s+i}, r^c_{s+i})} and \pcircle{(P^c_{s+k}, r^c_{s+k})} on plane ``$P.t-t_{c}=0$" are the projection circles of cones \cone{(P_s, \mathcal{O}(P_{s+i}, \epsilon))} and \cone{(P_s, \mathcal{O}(P_{s+k}, \epsilon))} on the plane.

\begin{prop}
\label{prop-circle-intersection}
Given a sub-trajectory $[P_s,\ldots, P_{s+k}]$, an error bound $\epsilon$, and any $t_c > P_s.t$, there exists a point $Q$ such that $Q.t = P_{s+k}.t$ and $sed(P_{s+i}, \vv{P_sQ})\le \epsilon$ for all points $P_{s+i}$ ($i \in [1,k]$) if and only if $\bigsqcap_{i=1}^{k}$\pcircle{(P^c_{s+i}, r^c_{s+i})} $\ne \emptyset$.
\end{prop}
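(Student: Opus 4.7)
The plan is to reduce Proposition~\ref{prop-circle-intersection} to the already-established Proposition~\ref{prop-3d-ci} by observing that all the cones \cone{(P_s,\mathcal{O}(P_{s+i},\epsilon))} share the common apex $P_s$. Because of this shared apex, the nontrivial part of $\bigsqcap_{i=1}^{k}$\cone{_{s+i}} consists entirely of rays emanating from $P_s$. Hence, the condition ``$\bigsqcap_{i=1}^{k}$\cone{_{s+i}} $\ne \{P_s\}$'' is equivalent to the existence of a ray from $P_s$ that lies inside every cone, which in turn is equivalent to the existence of a point on any fixed plane $P.t = t_c$ (with $t_c > P_s.t$) that lies in the projection of every cone onto that plane. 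Thus the core geometric task is to identify these projections with the circles \pcircle{(P^c_{s+i}, r^c_{s+i})} given in the definition.

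First, I would fix $t_c > P_s.t$ and show that the projection of \cone{(P_s,\mathcal{O}(P_{s+i},\epsilon))} onto the plane $P.t = t_c$ (along rays from $P_s$) is exactly the circle \pcircle{(P^c_{s+i}, r^c_{s+i})}. This is a linear scaling argument: for any point $R$ in the base circle \circle{(P_{s+i},\epsilon)}, the intersection of ray $\vv{P_s R}$ with the plane $P.t = t_c$ is the point $P_s + c \cdot (R - P_s)$ with $c = \frac{t_c - P_s.t}{P_{s+i}.t - P_s.t}$. Since $R$ ranges over a disk of radius $\epsilon$ centered at $P_{s+i}$, its image ranges exactly over a disk of radius $c\cdot \epsilon$ centered at $P_s + c \cdot (P_{s+i} - P_s)$, which matches the center $P^c_{s+i}$ and radius $r^c_{s+i}$ specified in the definition of the projection circle.

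Next, for the forward direction, assume such a point $Q$ exists with $Q.t = P_{s+k}.t$ and $sed(P_{s+i}, \vv{P_sQ}) \le \epsilon$ for every $i \in [1,k]$. By Proposition~\ref{prop-3d-syn-point}, the synchronized point $P'_{s+i}$ lies inside \circle{(P_{s+i},\epsilon)}, so $\vv{P_sQ}$ passes through every cone. Let $Q^c$ be the intersection of $\vv{P_sQ}$ with the plane $P.t = t_c$. By the projection identification of the first paragraph, $Q^c$ lies in every \pcircle{(P^c_{s+i}, r^c_{s+i})}, establishing nonempty intersection. Conversely, pick any $Q^c \in \bigsqcap_{i=1}^{k}$\pcircle{_{s+i}} and let $Q$ be the intersection of ray $\vv{P_s Q^c}$ with plane $P.t = P_{s+k}.t$; by the same projection correspondence, the synchronized point $P'_{s+i}$ of $P_{s+i}$ \wrt $\vv{P_sQ}$ falls inside \circle{(P_{s+i},\epsilon)} for each $i$, so $sed(P_{s+i},\vv{P_sQ}) \le \epsilon$ by Proposition~\ref{prop-3d-syn-point}.

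The main obstacle, though largely notational, will be carefully justifying that the central projection from $P_s$ onto the plane $P.t = t_c$ sends each oblique cone \cone{_{s+i}} to exactly the circle \pcircle{_{s+i}}, including the fact that containment inside a cone is equivalent to the projected point lying inside the projected disk. Once that projection lemma is in place, the biconditional reduces cleanly to Proposition~\ref{prop-3d-ci} and requires no further computation.
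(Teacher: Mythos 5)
Your proposal is correct and follows essentially the same route as the paper: the paper likewise reduces the statement to Proposition~\ref{prop-3d-ci} by noting that nonempty intersection of the apex-sharing cones (beyond $P_s$) is equivalent to nonempty intersection of their projection circles on the plane $P.t = t_c$, a step the paper dismisses as ``obvious.'' Your linear-scaling argument identifying the central projection of each cone with \pcircle{(P^c_{s+i}, r^c_{s+i})} simply supplies the detail the paper omits.
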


\begin{proof}\
By Proposition~\ref{prop-3d-ci}, it suffices to show that $\bigsqcap_{i=1}^{k}$ \pcircle{(P^c_{s+i}, r^c_{s+i})} $\ne \emptyset$ if and only if $\bigsqcap_{i=1}^{k}$\cone{(P_s, \mathcal{O}(P_{s+i}, \epsilon))}$\ne \{P_s\}$, which is obvious. Hence, we have the conclusion. \eop
\end{proof}


Proposition~\ref{prop-circle-intersection} tells us that the intersection checking of spatio-temporal cones can be reduced to simply check the intersection of cone projection circles on a plane. 

\begin{figure}[tb!]
	\centering
	\includegraphics[scale=0.7]{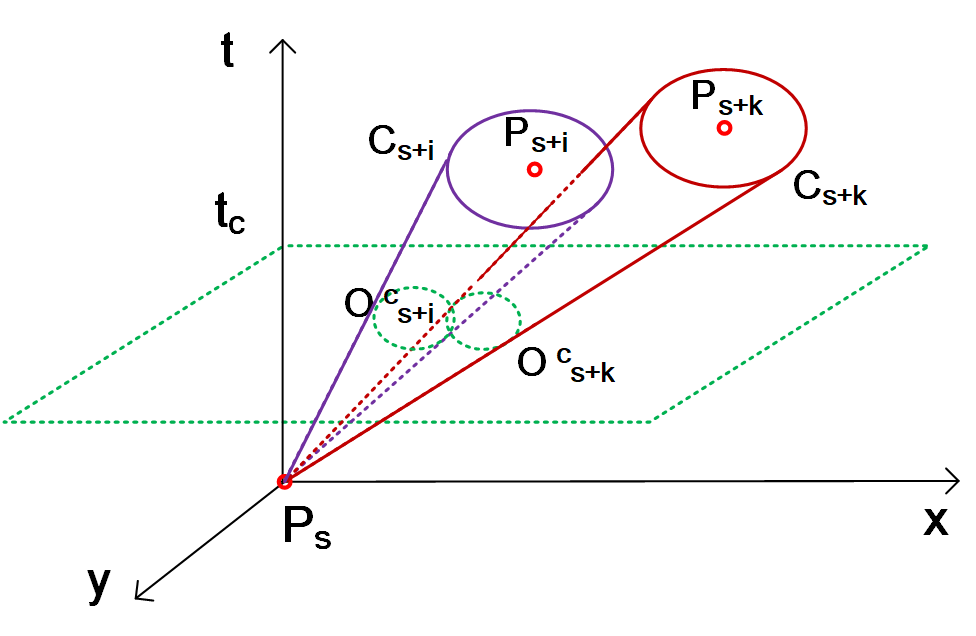}
	\caption{\small Cone projection circles.}
	\vspace{-1ex}
	\label{fig:pcircle}
\end{figure}

\subsection{Inscribed Regular Polygon Intersection}
\label{subsec-RPI}

Finding the common intersection of $n$ circles on a plane has a time complexity of ${O(n\log n)}$~\cite{Shamos:Circle}, which cannot be used for designing one-pass trajectory simplification algorithms using \sed.
However, we can approximate a circle with its $m$-edge inscribed regular polygon, whose intersection can be computed more efficiently.

\stitle{Inscribed regular polygons ($\mathcal{R}$)}.
Given a cone projection circle \pcircle{(P, r)}, its inscribed $m$-edge regular polygon is denoted as $\mathcal{R}(V, E)$,
where (1) $V=\{v_1, \ldots, v_{m}\}$ is the set of vertexes that are defined by a polar coordinate system, whose origin is the center $P$ of \pcircle{}, as follows:

\vspace{-2ex}
\begin{equation*}
\label{equ-regular-polygon}
    \begin{aligned}
        \hspace{5ex}  v_j = (r, \frac{(j-1)}{m}2\pi), ~j \in [1, m], \\
    \end{aligned}
\end{equation*}
\ni and (2) $E= \{\vv{v_mv_1}\} \bigcup \{\vv{v_jv_{j+1}}\ |\ j\in [1, m-1]\}$ is the set of edges that are labeled with the subscript of their start points.


Figure~\ref{fig:polygons}.(1) illustrates the inscribed regular octagon ($m=8$) of a cone projection circle \pcircle{(P, r)}.

Let $\mathcal{R}_{s+i}$ ($1\le i \le k$) be the inscribed regular polygon of the cone projection  circle \pcircle{(P^c_{s+i}, r^c_{s+i})},
$\mathcal{R}^*_l$ ($1\le l\le k$) be the intersection $\bigsqcap_{i=1}^{l}\mathcal{R}_{s+i}$,
and $E^j$ ($1\le j \le m$) be the group of $k$ edges labeled with $j$ in all $\mathcal{R}_{s+i}$ ($i\in[1, k]$).
It is easy to verify that all edges in the same edge groups $E^j$ ($1\le j\le m$) are in parallel (or overlapping) with each other by the above definition of inscribed regular polygons, as illustrated in Figure~\ref{fig:polygons}.(2).

\begin{prop}
\label{prop-rp-intersection}
The intersection $\mathcal{R}^*_{l} \bigsqcap \mathcal{R}_{s+l+1}$ ($ 1\le l< k$) has at most $m$ edges, \ie at most one from each edge group.
\end{prop}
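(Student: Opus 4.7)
My plan is to exploit the fact that all the polygons $\mathcal{R}_{s+i}$ share a common \emph{orientation}, so edges sharing a label are parallel with the same outward normal, and then reduce the intersection to a half-plane intersection argument.

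First, I would observe that by the definition of the inscribed regular polygons, vertex $v_j$ of $\mathcal{R}_{s+i}$ is obtained by placing the polar angle $\frac{(j-1)}{m}2\pi$ around the center $P^c_{s+i}$ with radius $r^c_{s+i}$. Since this polar angle is the same for every $i$, the edge $\vv{v_j v_{j+1}}$ (that is, every edge in the group $E^j$) points in the same direction independent of $i$, and hence every two edges in $E^j$ are parallel. Crucially, because each $\mathcal{R}_{s+i}$ is convex and its interior lies on a consistent side of $\vv{v_jv_{j+1}}$ relative to the center $P^c_{s+i}$, all edges of $E^j$ share a common outward normal $\mathbf{n}^j$.

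Next, I would write each $\mathcal{R}_{s+i}$ as the intersection of $m$ closed half-planes $H^j_{s+i} = \{X : \mathbf{n}^j \cdot X \le c^j_{s+i}\}$, one for each edge group $j$. Then
\begin{equation*}
\mathcal{R}^*_l \bigsqcap \mathcal{R}_{s+l+1} \;=\; \bigsqcap_{i=1}^{l+1} \mathcal{R}_{s+i} \;=\; \bigsqcap_{j=1}^{m} \Bigl( \bigsqcap_{i=1}^{l+1} H^j_{s+i} \Bigr).
\end{equation*}
For a fixed $j$, the inner intersection is an intersection of half-planes sharing the same outward normal $\mathbf{n}^j$, so it collapses to the single tightest half-plane $H^j_* = \{X : \mathbf{n}^j \cdot X \le \min_i c^j_{s+i}\}$. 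Therefore $\mathcal{R}^*_l \bigsqcap \mathcal{R}_{s+l+1} = \bigsqcap_{j=1}^{m} H^j_*$ is the intersection of at most $m$ half-planes, each with a distinct outward normal.

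Finally, I would invoke the standard fact that the intersection of $m$ half-planes with pairwise distinct outward normals is a (possibly unbounded or degenerate) convex region whose boundary consists of at most $m$ edges, with at most one edge per normal direction, hence at most one edge per group $E^j$. This yields the claim. The main obstacle, and the only subtle step, is step one: carefully pinning down that the outward normal of every edge in $E^j$ is the same across all polygons, which requires the uniform polar-angle parameterization and the observation that the polygon's center, which lies on the interior side, varies but never crosses the supporting line of any $E^j$ edge.
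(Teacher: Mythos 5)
Your proof is correct, but it takes a genuinely different route from the paper's. The paper argues by contradiction: it assumes two distinct edges $\vv{A_i}$ and $\vv{A_{i'}}$ with the same label $j$ survive into $\mathcal{R}^*_{l} \bigsqcap \mathcal{R}_{s+l+1}$, notes that then $\mathcal{R}_{s+i} \bigsqcap \mathcal{R}_{s+i'} \ne \emptyset$, and asserts that this pairwise intersection cannot contain both parallel same-labeled edges on its boundary --- an assertion left at the level of geometric evidence. You instead give a direct, constructive argument: decompose each $\mathcal{R}_{s+i}$ into $m$ closed half-planes indexed by edge group, observe that the half-planes within a fixed group share an outward normal (because the polar-angle parameterization of the vertices is independent of the center and radius), collapse each group to its tightest half-plane, and conclude that the whole intersection is an intersection of at most $m$ half-planes with distinct normals. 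What your approach buys is twofold: it supplies the justification the paper omits (two parallel half-plane constraints with the same normal reduce to one, so only one group-$j$ edge can be active), and it yields the ``at most $m$ edges'' bound directly rather than as a corollary of the per-group statement. The paper's contradiction argument is shorter and stays closer to the edge-level picture used later by the \rpia advancing rules, but both proofs hinge on the same key fact --- that all edges in a group $E^j$ are parallel with a common outward normal --- which you are right to identify and verify as the only subtle step.
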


\eat{
\begin{theorem}
\label{prop-rp-intersection}
If $\mathcal{R}_i$, $i \in [1, k]$, are M-edges regular polygons on a plane which are built by equation (3), then the intersection polygon
$\mathcal{R}^*_k$ of all $\mathcal{R}_i$ includes at most one edge from an edge group, \eg the $j^{th}$ edge group.
\end{theorem}
}

\begin{proof}\
We shall prove this by contradiction.
Assume that $\mathcal{R}^*_{l} \bigsqcap \mathcal{R}_{s+l+1}$ has two distinct edges $\vv{A_i}$ and $\vv{A_{i'}}$  with the same label $j$ $(1\le j \le m)$, originally from
$\mathcal{R}_{s+i}$ and $\mathcal{R}_{s+i'}$  ($1\le i< i' \le l+1$).
Note that here $\mathcal{R}_{s+i} \bigsqcap \mathcal{R}_{s+i'} \ne \emptyset$ since $\mathcal{R}^*_l \bigsqcap \mathcal{R}_{s+l+1} \ne \emptyset$.
However, when $\mathcal{R}_{s+i} \bigsqcap \mathcal{R}_{s+i'} \ne \emptyset$, the intersection $\mathcal{R}_{s+i} \bigsqcap \mathcal{R}_{s+i'}$ cannot have
both edge $\vv{A_i}$ and edge $\vv{A_{i'}}$, which contradicts the assumption. \eop
\end{proof}

\begin{figure}[tb!]
	\centering
	\includegraphics[scale=0.85]{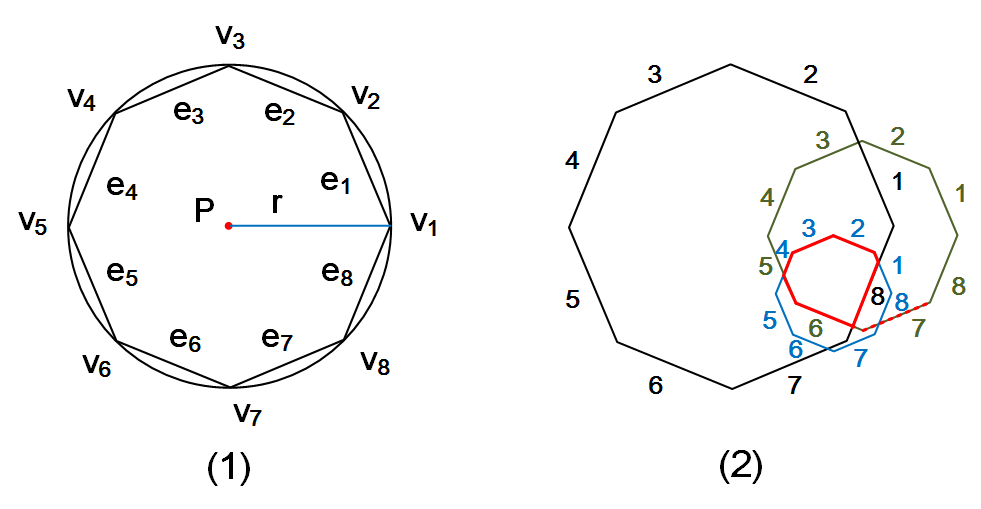}
	\vspace{0ex}
	\caption{\small Regular octagons and their intersections ($m =8$).}
	\vspace{-1ex}
	\label{fig:polygons}
\end{figure}

Figure~\ref{fig:polygons}.(2) shows the intersection polygon (red lines) of $\mathcal{R}_1$, $\mathcal{R}_2$ and $\mathcal{R}_3$ with $7$ edges, and here edges labeled with $7$ have no contributions to the resulting intersection polygon.

\begin{prop}
\label{prop-cpi-time}
The intersection of $\mathcal{R}^*_l$ and $\mathcal{R}_{s+l+1}$ ($ 1\le l< k$) can be done in $O(1)$
time.
\end{prop}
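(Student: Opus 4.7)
The plan rests on two facts already established in the excerpt. First, by the preceding proposition, $\mathcal{R}^*_l$ has at most $m$ edges (at most one per edge group), and $\mathcal{R}_{s+l+1}$ is itself an inscribed regular $m$-gon. Second, algorithm $\cpia$ computes the intersection of two convex polygons $\mathcal{R}_1$ and $\mathcal{R}_2$ in $O(|\mathcal{R}_1| + |\mathcal{R}_2|)$ time. Plugging these in immediately gives an $O(m + m) = O(m)$ bound, and since $m$ is a user-chosen approximation parameter fixed independently of the trajectory length, $O(m) = O(1)$.

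My proposal is therefore to first cite the previous proposition to bound $|\mathcal{R}^*_l| \le m$, then invoke $\cpia$ with $\mathcal{R}_1 := \mathcal{R}^*_l$ and $\mathcal{R}_2 := \mathcal{R}_{s+l+1}$, and finally absorb the $m$ factor into the constant.

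For a tighter, more structural argument one can exploit the edge-group property more directly. Because all inscribed regular polygons $\mathcal{R}_{s+i}$ share the same $m$ fixed edge directions (one per group $E^j$), each such polygon is the intersection of exactly $m$ half-planes whose outward normals are the $m$ canonical directions. Hence $\mathcal{R}^*_l$ can be maintained as an array of $m$ tight half-plane bounds, one per direction. To compute $\mathcal{R}^*_l \bigsqcap \mathcal{R}_{s+l+1}$ it then suffices, for each $j \in [1,m]$, to compare the bound currently stored for direction $j$ against the corresponding edge of $\mathcal{R}_{s+l+1}$ and keep the inner one. That is $m$ scalar comparisons, i.e.\ $O(m) = O(1)$.

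The only subtle point to justify carefully is that $m$ should genuinely be treated as a constant: it is an approximation parameter chosen once (it controls how closely the inscribed polygon approximates the cone projection circle, hence the tightness of the \sed bound), and is independent of the number of trajectory points being processed. I would state this assumption explicitly and then conclude that both the $\cpia$-based argument and the half-plane-array argument give the desired $O(1)$ bound. I expect no genuine obstacle in the proof; the only care needed is in making the constancy of $m$ explicit so that the $O(\cdot)$ claim is unambiguous.
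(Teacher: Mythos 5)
Your first argument is exactly the paper's proof: it bounds $|\mathcal{R}^*_l|\le m$ via Proposition~\ref{prop-rp-intersection}, applies the $O(|\mathcal{R}_1|+|\mathcal{R}_2|)$ bound of \cpia, and absorbs the fixed parameter $m$ into the constant. The additional half-plane-array argument is a correct and arguably cleaner alternative that exploits the shared edge directions, but the core approach is the same as the paper's.
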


\begin{proof}\
The inscribed regular polygon $\mathcal{R}_{s+l+1}$ has $m$ edges, and intersection polygon $\mathcal{R}^*_l$ has at most $m$ edges by Proposition~\ref{prop-rp-intersection}.
As the intersection of two $m$-edge convex polygons can be computed in $O(m)$ time~\cite{ORourke:Intersection}, the intersection of polygons $\mathcal{R}^*_l$ and $\mathcal{R}_{s+l+1}$ can be done in $O(1)$ time for a fixed $m$. \eop
\end{proof}


\subsection{Speedup Inscribed Regular Polygon Intersection}
\label{subsec-fastRPI}

Observe that algorithm \cpia in Figure~\ref{alg:c-poly-inter} is for general convex polygons,
while the inscribed regular polygons $\mathcal{R}_{s+i}$ ($i\in[1, k]$) of the cone projection circles are constructed in a unified way,
which allows us to develop a fast method to compute their intersection.

Let $\vv{A} = (P_{s_A}, P_{e_A})$ and $\vv{B} = (P_{s_B}, P_{e_B})$  be two directed edges on polygons $\mathcal{R}_{s+l+1}$ and $\mathcal{R}^*_{l}$, respectively.
Again edges $\vv{A}$ and $\vv{B}$ are moved counter-clockwise. Note that $\vv{A}$ and $\vv{B}$ are advanced step by step each time by the two advancing rules of algorithm \cpia.
However, it is possible to advance $\vv{A}$ or $\vv{B}$ multiple steps each time.
For example, in Figure~\ref{fig:c-poly-inter}.(1)--(5), edge $\vv{A}$ successively moves four steps, each under the advance rule (1) ``($\vv{A} \times \vv{B} < 0$ and $P_{e_A} \not \in \mathcal{H}(\vv{B})$) or ($\vv{A} \times \vv{B} \ge 0$ and $P_{e_B} \in \mathcal{H}(\vv{A})$)'' of algorithm \cpia.
Alternatively, we can directly move $A$ from Figure~\ref{fig:c-poly-inter}.(1) to Figure~\ref{fig:c-poly-inter}.(5), by reducing four steps to one step only.

\begin{prop}
\label{prop-rule1}
If either $(\vv{A} \bigsqcap \vv{B} \ne \emptyset$ \And $\vv{A} \times \vv{B} < 0$ \And $P_{e_A} \not \in \mathcal{H}(\vv{B}))$ or $(\vv{A} \bigsqcap \vv{B} \ne \emptyset$ \And $\vv{A} \times \vv{B} \ge 0$ \And $P_{e_B} \in \mathcal{H}(\vv{A}))$ holds, then $\vv{A}$ advances $s$ steps such that

\vspace{-1ex}
\begin{equation*}
\label{equ-rule1}
\small
    \hspace{2ex} s =  \left\{
    \begin{aligned}
        & 2\times(g(\vv{B}) - g(\vv{A}))  \hspace{5ex}~~if  ~g(\vv{B}) > g(\vv{A}) \\
        & {1}              \hspace{21ex}~if  ~g(\vv{A}) = g(\vv{B}) \\
        & 2\times(m+g(\vv{B}) - g(\vv{A})) ~~if  ~g(\vv{B}) < g(\vv{A}), \\
    \end{aligned}
    \right.       \hspace{6ex}{}
\end{equation*}
in which $g(e)$ denotes the label of edge $e$.
\end{prop}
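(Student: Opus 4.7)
The proof proceeds by case analysis on the three possibilities for the relationship between $g(\vv{A})$ and $g(\vv{B})$ stated in the proposition. The common thread throughout is the rigid angular structure of the regular $m$-gon $\mathcal{R}_{s+l+1}$ on which $\vv{A}$ traverses: by the uniform definition of vertices $v_j = (r, \frac{(j-1)}{m} \cdot 2\pi)$, every edge with label $j$ (on $\mathcal{R}_{s+l+1}$, or on $\mathcal{R}^*_l$ as a subsegment of such an edge) has a direction angle $\alpha_j$ that depends only on $j$, and adjacent label groups differ by exactly $\frac{2\pi}{m}$. Consequently, one advance of $\vv{A}$ on $\mathcal{R}_{s+l+1}$ increments $g(\vv{A})$ by $1$ modulo $m$ and rotates $\vv{A}$ by exactly $\frac{2\pi}{m}$.

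My plan is to first reduce the sign of $\vv{A} \times \vv{B}$ to the label difference $d = (g(\vv{B}) - g(\vv{A})) \bmod m$, establishing that the cross product is positive for $d \in (0, m/2)$, zero for $d \in \{0, m/2\}$, and negative for $d \in (m/2, m)$. This makes the algebraic part of rule (1) purely combinatorial once the labels are known. I then treat the three cases. When $g(\vv{A}) = g(\vv{B})$, the two edges are parallel and the hypothesis $\vv{A} \bigsqcap \vv{B} \ne \emptyset$ forces them to be collinear; a single advance of $\vv{A}$ breaks this collinearity immediately, so the algorithm must re-evaluate its branch selection, justifying $s = 1$. For the case $g(\vv{B}) > g(\vv{A})$, I claim that the rule (1) branch remains active for exactly $s = 2(g(\vv{B}) - g(\vv{A}))$ consecutive advances: during the first $g(\vv{B}) - g(\vv{A})$ advances, the sign of $\vv{A} \times \vv{B}$ stays non-negative and the half-plane condition $P_{e_B} \in \mathcal{H}(\vv{A})$ persists by tracking $P_{e_B}$ relative to the rotating line carrying $\vv{A}$; during the next $g(\vv{B}) - g(\vv{A})$ advances, the cross product turns negative while the complementary condition $P_{e_A} \not\in \mathcal{H}(\vv{B})$ takes over, because the new vertex $P_{e_A}$ has rotated onto the arc of $\mathcal{R}_{s+l+1}$ lying on the appropriate side of the line through $\vv{B}$. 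The cyclic case $g(\vv{B}) < g(\vv{A})$ follows by the same argument after shifting labels by $m$, giving $s = 2(m + g(\vv{B}) - g(\vv{A}))$.

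The step I expect to be most delicate is verifying that the half-plane conditions involving $P_{e_A}$ and $P_{e_B}$ are in fact preserved across the entire run of $s$ advances, since these conditions are not purely functions of labels but depend on the actual positions of vertices relative to the line through $\vv{B}$. I anticipate handling this by a geometric argument centered on the circumscribing circle of $\mathcal{R}_{s+l+1}$: the line carrying $\vv{B}$ cuts this circle into two arcs, and the consecutive vertices of $\mathcal{R}_{s+l+1}$ visited by $P_{e_A}$ (respectively $P_{s_A}$) during the $s$ advances must be shown to remain on the correct arc throughout the traversal, using the intersection hypothesis $\vv{A} \bigsqcap \vv{B} \ne \emptyset$ as the anchor. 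Once this geometric invariant is established, the sign bookkeeping of $\vv{A} \times \vv{B}$ and the wrap-around arithmetic for the cyclic case are mechanical, and the bound $s$ then arises directly as the number of advances required for the rotating direction of $\vv{A}$ to sweep through the angular sector $[\alpha_{g(\vv{A})}, \alpha_{g(\vv{B})}]$ and out the symmetric sector on the other side.
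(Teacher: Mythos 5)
Your overall strategy coincides with the paper's: both proofs reduce the claim to showing that advance rule (1) of \cpia would fire at every one of the $s$ intermediate positions of $\vv{A}$ on its way to the edge symmetric to its starting edge with respect to the direction of $\vv{B}$, so that the batched $s$-step move reproduces exactly what \cpia would do one step at a time. The paper simply asserts this invariant (``it is easy to find that $\vv{A} \bigsqcap \vv{B} = \emptyset$ and either $P_{e_A} \not\in \mathcal{H}(\vv{B})$ or $P_{e_B} \in \mathcal{H}(\vv{A})$''), whereas you try to make it precise by pairing the sign of $\vv{A}\times\vv{B}$ with the matching half-plane disjunct phase by phase; in that respect your write-up is more careful than the paper's, which never checks that the half-plane condition holding at an intermediate position is the one paired with the current sign of the cross product.

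However, your two-phase decomposition is wrong for one of the two hypothesis disjuncts. Writing $d=(g(\vv{B})-g(\vv{A}))\bmod m$, your own reduction gives $\vv{A}\times\vv{B}\ge 0$ iff $d\le m/2$. Under the first disjunct of the hypothesis ($\vv{A}\times\vv{B}<0$) we have $d>m/2$, yet your case split is by the sign of $g(\vv{B})-g(\vv{A})$ without the modulus; for instance $m=8$, $g(\vv{A})=1$, $g(\vv{B})=6$ falls into your ``$g(\vv{B})>g(\vv{A})$'' case with $s=10>m$, so $\vv{A}$ laps the polygon, and the sign of $\vv{A}\times\vv{B}$ along the traversal goes negative, then zero/positive, then negative, then positive again at the terminal edge --- not ``non-negative for the first $g(\vv{B})-g(\vv{A})$ advances, negative for the next.'' The claimed alignment of phases with disjuncts therefore fails for $d>m/2$ and that sub-case must be argued separately (your circumscribing-circle invariant is the right tool and would cover it, but as written the skeleton it is meant to support is incorrect). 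Separately, you never verify that $\vv{A}\bigsqcap\vv{B}=\emptyset$ at the intermediate positions; this is part of the paper's invariant and is genuinely needed, since if some intermediate edge intersected $\vv{B}$ then \cpia would record an inner boundary segment of $\mathcal{R}^*_{l}\bigsqcap\mathcal{R}_{s+l+1}$ there, and the batched move would silently drop it.
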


\begin{proof}\ 
We first explain how the edge $\vv{A}$ advances.
Indeed, $\vv{A}$ is moved from its original position to its symmetric edge on $\mathcal{R}_{s+l+1}$ \wrt the symmetric line that is perpendicular to $\vv{B}$  on $\mathcal{R}^*_{l}$.
For example, in Figure~\ref{fig:r-poly-rule1}.(1), there is $\vv{A} \bigsqcap \vv{B} \ne \emptyset$ \And $\vv{A} \times \vv{B} \ge 0$ \And $P_{e_B} \in \mathcal{H}(\vv{A})$, hence $\vv{A}$ moves on. As $g(\vv{B})=3 > 1=g(\vv{A})$, $\vv{A}$ moves forward $2\times(g(\vv{B}) - g(\vv{A}))$ = $2\times(3-1)= 4$ steps.
Here, the label of edge $\vv{A}$ is changed to $5$, its symmetric edge $1$ on $\mathcal{R}_{s+l+1}$ \wrt the symmetric line that is perpendicular to $\vv{B}$ labeled with $3$  on $\mathcal{R}^*_{l}$.

We then present the proof.
If ($\vv{A} \bigsqcap \vv{B} \ne \emptyset$ \And $\vv{A} \times \vv{B} < 0$ \And $P_{e_A} \not \in \mathcal{H}(\vv{B})$) or ($\vv{A} \bigsqcap \vv{B} \ne \emptyset$ \And $\vv{A} \times \vv{B} \ge 0$ \And $P_{e_B} \in \mathcal{H}(\vv{A})$), then as all edges in the same edge groups $E^j$ ($1\le j\le m$) are in parallel with each other and by the geometric properties of regular polygon $\mathcal{R}_{s+k+1}$, it is easy to find that, for each position of $\vv{A}$ between its original to its opposite positions, we have (1) $\vv{A} \bigsqcap \vv{B} = \emptyset$, and (2) either $P_{e_A} \not \in \mathcal{H}(\vv{B})$ or $P_{e_B} \in \mathcal{H}(\vv{A})$. Hence, by the advance rule (1) of algorithm \cpia in Section~\ref{subsec-cpi}, edge $\vv{A}$ is always moved forward until it reaches the opposite position of its original one. From this, we have the conclusion. \eop
\end{proof}

\begin{prop}
\label{prop-rule2}
If either ($\vv{A} \bigsqcap \vv{B} \ne \emptyset$ \And $\vv{A} \times \vv{B} \ge 0$ \And $P_{e_B} \not \in \mathcal{H}(\vv{A})$) or ($\vv{A} \bigsqcap \vv{B} \ne \emptyset$ \And $\vv{A} \times \vv{B} < 0$ \And $P_{e_A} \in \mathcal{H}(\vv{B})$) holds, then edge $\vv{B}$ is directly moved to the edge after the one having the same edge group as edge $\vv{A}$.
\end{prop}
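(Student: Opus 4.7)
The plan is to mirror the structure of the previous proposition's proof for advance rule (1), substituting the roles of $\vv{A}$ and $\vv{B}$ and exploiting the uniform construction of the inscribed regular polygons. First I will describe concretely where $\vv{B}$ lands: since $\mathcal{R}^*_l$ inherits its edge labels from the source polygons $\mathcal{R}_{s+i}$ ($i\in[1,l]$), and since by Proposition on regular-polygon intersection at most one edge of $\mathcal{R}^*_l$ belongs to each edge group $E^j$, the target of $\vv{B}$ is the edge of $\mathcal{R}^*_l$ that immediately follows, counter-clockwise, the (unique) edge of $\mathcal{R}^*_l$ sharing $\vv{A}$'s label $g(\vv{A})$. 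I will give an illustrative figure analogous to the one used for rule (1) to make this interpretation concrete.

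Next I will argue correctness by a geometric induction on the intermediate positions of $\vv{B}$. Since all edges in the same group $E^j$ are mutually parallel, the relative orientation $\vv{A}\times\vv{B}$ and the half-plane membership tests depend only on the edge \emph{groups} involved. Starting from the hypothesis that rule (2) of algorithm \cpia holds, I will show that at every intermediate counter-clockwise step of $\vv{B}$ along $\mathcal{R}^*_l$ prior to crossing the edge co-parallel to $\vv{A}$, one of the two disjuncts of rule (2), namely $(\vv{A}\times\vv{B}\ge 0\ \And\ P_{e_B}\not\in\mathcal{H}(\vv{A}))$ or $(\vv{A}\times\vv{B}<0\ \And\ P_{e_A}\in\mathcal{H}(\vv{B}))$, continues to hold while $\vv{A}\bigsqcap\vv{B}=\emptyset$. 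Thus by the original (single-step) advance rule of \cpia, $\vv{B}$ keeps moving one step at a time until it passes the edge in $E^{g(\vv{A})}$. I will then check that, one step further, the condition of rule (1) becomes satisfied (or $\vv{A}\bigsqcap\vv{B}\ne\emptyset$ is restored), so $\vv{B}$ stops exactly at the claimed position; this is the counterpart of the symmetric termination step used in the rule (1) proof.

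The main obstacle I anticipate is the bookkeeping when $\mathcal{R}^*_l$ lacks edges of some groups, so that $E^{g(\vv{A})}\cap\mathcal{R}^*_l$ may be empty. I will handle this by observing that a missing group corresponds to a vertex of $\mathcal{R}^*_l$ at which two non-consecutive groups meet; because the counter-clockwise cyclic order of labels in $\mathcal{R}^*_l$ respects the natural cyclic order modulo $m$, the phrase ``the edge after the one having the same edge group as $\vv{A}$'' is well-defined when reinterpreted as ``the first edge of $\mathcal{R}^*_l$ whose label, in the counter-clockwise cyclic order, strictly follows $g(\vv{A})$.'' With this interpretation, the parallelism argument still controls all intermediate configurations and the termination step remains valid, yielding the proposition.
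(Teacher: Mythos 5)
Your proposal takes essentially the same route as the paper's own proof: both establish that at every intermediate position of $\vv{B}$ before its target the edges remain disjoint and the advance-rule-(2) condition of \cpia persists (via the parallelism of edges within a group), so the single-step rule drives $\vv{B}$ forward to the edge following the one in $\vv{A}$'s group, and both handle the case where that group is absent from $\mathcal{R}^*_l$ by skipping to the next existing edge. Your version is marginally more careful than the paper's --- you pair each half-plane test with its cross-product sign and add an explicit termination check --- but the underlying argument is the same.
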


\begin{proof}\ 
We first explain how the edge $\vv{B}$ is moved forward.
For example, in Figure~\ref{fig:r-poly-rule1}.(2), $\vv{A} \bigsqcap \vv{B} \ne \emptyset$ \And $\vv{A} \times \vv{B} < 0$ \And $P_{e_A} \in \mathcal{H}(\vv{B})$, hence $\vv{B}$ is moved forward. As the edge $\vv{A}$ is labeled with 7,
$\vv{B}$ moves to the edge labeled with 8 on $\mathcal{R}^*_{l}$, which is the next of the edge labeled with 7 on $\mathcal{R}^*_{l}$.
Note that if the edge labeled with 8 were not actually existing in the intersection polygon $\mathcal{R}^*_{l}$, then $\vv{B}$ should repeatedly move on until it reaches the first ``real" edge on $\mathcal{R}^*_{l}$.

We then present the proof.
If ($\vv{A} \bigsqcap \vv{B} \ne \emptyset$ \And $\vv{A} \times \vv{B} \ge 0$ \And $P_{e_B} \not \in \mathcal{H}(\vv{A})$) or ($\vv{A} \bigsqcap \vv{B} \ne \emptyset$ \And $\vv{A} \times \vv{B} < 0$ \And $P_{e_A} \in \mathcal{H}(\vv{B})$), then it is also easy to find that, for each position of $\vv{B}$ between its original to its target positions (\ie the edge after the one having the same edge group as $\vv{A}$), we have (1) $\vv{A} \bigsqcap \vv{B} = \emptyset$, and (2) either $P_{e_B} \not \in \mathcal{H}(\vv{A})$ or $P_{e_A} \in \mathcal{H}(\vv{B})$. Hence, by the advance rule (2) of algorithm \cpia in Section~\ref{subsec-cpi}, edge $\vv{B}$ is always moved forward until it reaches the target position. From this, we have the conclusion. \eop
\end{proof}

\begin{figure}[tb!]
	\centering
	\includegraphics[scale=0.82]{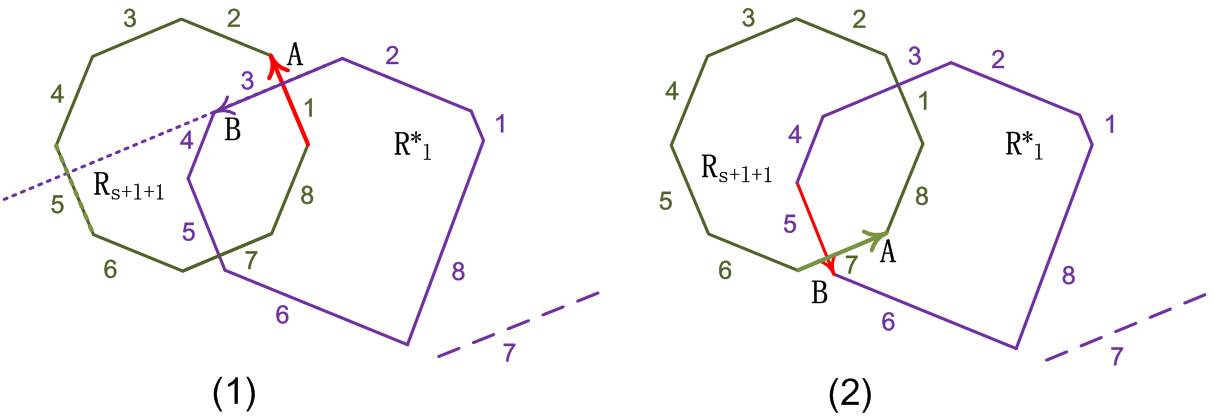}
	\vspace{-1ex}
	\caption{\small Examples of fast advancing rules.}
	\vspace{-1ex}
	\label{fig:r-poly-rule1}
\end{figure}


\eat{
\vspace{1ex}
\ni \emph{\underline{Rule 3}:
If $A \bigcap B = \emptyset$ and advances $A$, then moves $A$ and $B$ forward ``$a$ and $b$" steps respectively, where}
\begin{equation*}
\label{equ-rule3}
\small
    \hspace{0ex} (a,b) =  \left\{
    \begin{aligned}
        & (2,0), ~if~ g(next(A)) = g(B) ~and~ {outside}(A)  \\
        & (1,1), ~if~ g(next(A)) = g(B) ~and~ {inside}(A)\\
        & (1,0), ~else
    \end{aligned}
    \right.       \hspace{2ex}(5)
\end{equation*}
\emph{and procedure $inside()$ or $outside()$ is a checking of the ``{\emph{inside}}" flag (line 5) of the \cpia algorithm (Figure~\ref{alg:c-poly-inter}).}
\vspace{1ex}

For example, in Figure~\ref{fig:r-poly-inter}-(4), $A \bigcap B = \emptyset$ and advances $A$, hence, rule 3 is applied. Because $A$ is inside and $g(next(A))=3 = g(B)$, $A$ and $B$ both move forward one step (Figure~\ref{fig:r-poly-inter}-(5)).

\vspace{1ex}
\ni \emph{\underline{Rule 4}:
If $A \bigcap B = \phi$ and advances $B$, then moves $A$ and $B$ forward ``$a$ and $b$" steps respectively, where}
\begin{equation*}
\label{equ-rule4}
\small
    \hspace{0ex} (a,b) =  \left\{
    \begin{aligned}
        & (0,2), ~if~ g(next(B)) = g(A) ~and~ outside(B)\\
        & (1,1), ~if~ g(next(B)) = g(A) ~and~ inside(B)\\
        & (0,1), ~else
    \end{aligned}
    \right.       \hspace{2ex}(6)
\end{equation*}
\vspace{-1ex}

}

\stitle{Algorithm \rpia}.
The presented regular polygon intersection algorithm, \ie\ \rpia, is the optimized version  of the convex polygon intersection algorithm \cpia, by Propositions \ref{prop-rule1} and \ref{prop-rule2}. We also save vertexes of a polygon in a fixed size array, which is different from \cpia  that saves polygons in linked lists.
Considering the regular polygons each having a fixed number of vertexes/edges, marked from $1$ to $m$, this policy allows us to quickly address an edge or vertex by its label.

Given intersection polygon $\mathcal{R}^*_{l}$ of the preview $l$ polygons and the next approximate polygon $\mathcal{R}_{s+l+1}$, the algorithm \rpia returns $\mathcal{R}^*_{l+1}=\mathcal{R}^*_{l}  \bigsqcap \mathcal{R}_{s+l+1}$.
It runs the similar routine as the \cpia algorithm, except that (1) it saves polygons in arrays, and (2) the advance strategies are partitioned into two parts, \ie $\vv{A} \bigsqcap \vv{B} \ne \emptyset$ and $\vv{A} \bigsqcap \vv{B} = \emptyset$, where the former applies Propositions \ref{prop-rule1} and \ref{prop-rule2}, and the later remains the same as algorithm \cpia.

\eat{
\begin{example}
Figure~\ref{fig:r-poly-inter} is a running example of algorithm \rpia. The input is the same as Figure~\ref{fig:c-poly-inter}.

\ni (1) Initially, directed edges $\vv{A}$ and $\vv{B}$ are on polygons $\mathcal{R}_{k+1}$ and $\mathcal{R}^*_{k}$ separately. $\vv{A} \bigcap \vv{B} = P_1$ and $\vv{A}$ moves on.
\ni (2) $\vv{A}$ moves forward a 4-steps, directly from $2^{th}$ to $6^{th}$, under rule 2. Then, $\vv{A} \bigcap \vv{B} = \emptyset$ and $\vv{B}$ moves on.
\ni (3) After 4 steps of moving (in turn), $\vv{A} \bigcap \vv{B} = P_2$ and $\vv{B}$ moves on.
\ni (4) $\vv{B}$ advances a 2-steps, from $6^{th}$ to $3^{th}$, under rule 1. Then $\vv{A} \bigcap \vv{B} = \emptyset$ and $\vv{A}$ moves on.
\ni (5) $\vv{A}$ advances a step. Then $\vv{A} \bigcap \vv{B} = \emptyset$ and $\vv{B}$ moves on.
\ni (6) After 3 steps of moving, both $\vv{A}$ and $\vv{B}$ cycle their polygons. The intersection polygon, the same as the result of \cpia (also see Figure~\ref{fig:c-poly-inter}), is returned.
\end{example}

\begin{figure}[tb!]
\centering
\includegraphics[scale=0.88]{figures/Fig-r-poly-inter.png}
\vspace{-1ex}
\caption{\small A running example of intersection of polygons.}
\vspace{-2ex}
\label{fig:r-poly-inter}
\end{figure}
}

\eat{
\begin{figure}[tb!]
\begin{center}
{\small
\begin{minipage}{3.36in}
\myhrule
\vspace{-1ex}
\mat{0ex}{
	{\bf Algorithm} ~\rpia ($\mathcal{R}^*_k$, $\mathcal{R}_{k+1}$) \\
	\bcc \hspace{2ex}\=  Set $\vv{A}$ and $\vv{B}$ {arbitrarily} on $\mathcal{R}^*_k$ and $\mathcal{R}_{k+1}$\\
	\icc \>\hspace{0ex}\= Repeat \\
	\icc \>\hspace{3ex} If $\vv{A} \bigcap \vv{B} \ne \phi$ Then \\
	\icc \>\hspace{6ex} {Check for termination}. \\
	\icc \>\hspace{6ex} Update an {\emph{inside}} flag for $\vv{A}$ or $\vv{B}$. \\
	\icc \>\hspace{6ex} {\emph{Moves on either $\vv{A}$ or $\vv{B}$ under rule 1 or 2.}}\\
	\icc \>\hspace{3ex} Else \\
	\icc \>\hspace{6ex} {{Moves on either $\vv{A}$ or $\vv{B}$.}}\\
	\icc \hspace{1ex} Until both $\vv{A}$ and $\vv{B}$ cycle their polygons \\
	\icc \hspace{0ex} Handle $\mathcal{R}^*_k \subset \mathcal{R}_{k+1}$ and $\mathcal{R}^*_k \subset \mathcal{R}_{k+1}$ and $\mathcal{R}^*_k \bigcap \mathcal{R}_{k+1} = \phi$ cases \\
    \icc \hspace{0ex} Return $\mathcal{R}^*_k \bigcap \mathcal{R}_{k+1}$
}
\vspace{-2ex}
\myhrule
\end{minipage}
}
\end{center}
\vspace{-2ex}
\caption{\small Intersection of Regular polygons.}
\label{alg:r-poly-inter}
\vspace{-2ex}
\end{figure}
}

\vspace{0.5ex}
\stitle{Correctness and complexity analyses.}
Observe that algorithm \rpia basically has the same routine as algorithm \cpia, except that it fastens the advancing speed of directed edges $\vv{A}$ and $\vv{B}$ under certain circumstances as shown by Propositions \ref{prop-rule1} and \ref{prop-rule2}, which together ensure the correctness of \rpia. Moreover, algorithm \rpia runs in $O(1)$ time
by Proposition~\ref{prop-cpi-time}.


\section{One-Pass Trajectory Simplification}
\label{sec-alg}

\begin{figure*}[tb!]
	\centering
	\includegraphics[scale=0.8]{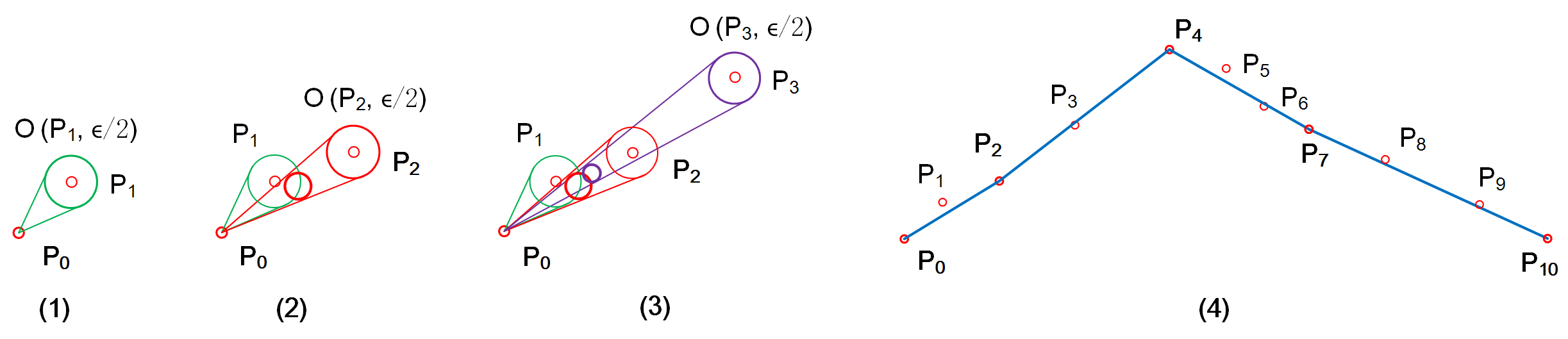}
	\caption{\small A running example of the \cist algorithm. The points and the oblique circular cones are projected on an x-y space. The trajectory $\dddot{\mathcal{T}}[P_0, \ldots, P_{10}]$ is compressed into four line segments.}
	\label{fig:exm-const}
\end{figure*}

Following \cite{Trajcevski:DDR,Lin:Operb}, we consider two classes of trajectory simplification.
The first one, referred to as \emph{strong simplification}, that takes as input a trajectory \trajec{T}, an error bound $\epsilon$ and the number $m$ of edges for inscribed regular polygons, and produces a simplified trajectory \trajec{T'} such that all data points in \trajec{T'} belong to \trajec{T}.
The second one, referred to as \emph{weak simplification}, that takes as input a trajectory \trajec{T}, an error bound $\epsilon$ and the number $m$ of edges for inscribed regular polygons, and produces a simplified trajectory \trajec{T'} such that some data points in \trajec{T'} may not belong to \trajec{T}. That is, weak simplification allows data interpolation.

The main result here is stated as follows.

\begin{theorem}
\label{prop-cist-op}
There exist one-pass, error bounded and strong and weak trajectory simplification algorithms using the synchronous Euclidean distance (\sed).
\end{theorem}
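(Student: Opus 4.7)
\begin{proofS}
The plan is to exhibit two algorithms, \cist (strong) and \cista (weak), each of which scans the trajectory $\dddot{\mathcal{T}}[P_0,\ldots,P_n]$ exactly once and uses only $O(1)$ auxiliary state between successive points, invoking the local synchronous distance checking machinery of Section~\ref{sec-localcheck} as the sole per-point primitive. I will first describe the common skeleton and then indicate how the two variants differ in how they close off a line segment.

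The skeleton is greedy. I maintain a current ``anchor'' $P_s$ (initially $P_0$), a chosen reference time $t_c > P_s.t$ used to project cones, and a running intersection polygon $\mathcal{R}^*_l$ of the inscribed $m$-edge regular polygons $\mathcal{R}_{s+1},\ldots,\mathcal{R}_{s+l}$ of the projection circles \pcircle{_{s+1}},\ldots,\pcircle{_{s+l}}. When the next data point $P_{s+l+1}$ arrives, I build its projection circle \pcircle{_{s+l+1}}, take its inscribed regular polygon $\mathcal{R}_{s+l+1}$, and update $\mathcal{R}^*_{l+1} := \rpia(\mathcal{R}^*_{l}, \mathcal{R}_{s+l+1})$; by Proposition~\ref{prop-cpi-time} and the correctness of \rpia this costs $O(1)$ time and $O(1)$ space (the stored state is one polygon of at most $m$ edges plus the anchor). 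If $\mathcal{R}^*_{l+1}\neq\emptyset$, I simply keep extending. If $\mathcal{R}^*_{l+1}=\emptyset$, I close off a line segment $\vv{P_sP_e}$, emit it, reset the anchor and the polygon state, and continue reading from $P_{s+l+1}$. This gives a single left-to-right pass with total time $O(n)$ and working memory $O(1)$.

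For \cist (strong simplification) I choose $P_e := P_{s+l}$, i.e.\ the last original point at which $\mathcal{R}^*_l\neq\emptyset$ still held. Because $\mathcal{R}_{s+i}\subseteq$\pcircle{_{s+i}} for every $i\in[1,l]$, the witness point $Q$ guaranteed by the non-emptiness of the polygon intersection lies in the intersection $\bigsqcap_{i=1}^{l}$\pcircle{_{s+i}} of projection circles, so by Proposition~\ref{prop-circle-intersection} together with Proposition~\ref{prop-3d-ci} there is a direction $\vv{P_sQ}$ through all cones \cone{_{s+i}}; taking $P_e$ to be the original point $P_{s+l}$ preserves the anchor invariant and, combined with the characterization, yields $sed(P_{s+i},\vv{P_sP_e})\le\epsilon$ for every intermediate point. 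Since the first point after which the intersection becomes empty is re-used as the new anchor, the output is a legal piecewise-line representation of $\dddot{\mathcal{T}}$ whose all vertices lie in $\dddot{\mathcal{T}}$, and it is error bounded by $\epsilon$ under \sed.

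For \cista (weak simplification) I instead pick $P_e$ to be any point drawn from the non-empty intersection $\mathcal{R}^*_l$ (for instance its centroid, which is computable in $O(m)=O(1)$ time from the stored polygon), lifted back to the time stamp $t_c = P_{s+l}.t$. By construction this $P_e$ lies in every \pcircle{_{s+i}}, so by Propositions \ref{prop-3d-ci} and \ref{prop-circle-intersection} the directed segment $\vv{P_sP_e}$ satisfies $sed(P_{s+i},\vv{P_sP_e})\le\epsilon$ for $i\in[1,l]$; the new anchor is then set to $P_e$, which may be an interpolated (non-original) point, witnessing that this is a weak simplification. The same per-point $O(1)$ update and $O(1)$ memory bound are retained. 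The main obstacle, and the one that the local checking apparatus already absorbs, is guaranteeing the error bound despite the two approximations we perform: replacing the cone-intersection test by a circle-intersection test on a single plane (Proposition~\ref{prop-circle-intersection}) and then replacing circles by their inscribed regular polygons (inclusion $\mathcal{R}\subseteq$\pcircle{}). Both approximations are conservative, i.e.\ non-emptiness of the polygon intersection implies non-emptiness of the cone intersection, so every endpoint we emit, whether an original point (\cist) or an interpolated one (\cista), satisfies the \sed{} bound; meanwhile the single-pass, $O(1)$-per-point structure follows directly from Propositions \ref{prop-rp-intersection}--\ref{prop-rule2}, which together give the claim.
\end{proofS}
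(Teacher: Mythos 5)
Your overall architecture matches the paper's: exhibit \cist and \cista, reduce per-point work to the cone-to-circle-to-polygon chain, and get one-pass $O(n)$ time and $O(1)$ space from the constant-time polygon intersection. The weak case (\cista) is essentially the paper's argument (Propositions~\ref{prop-circle-intersection} and~\ref{prop-cist-Q}): any $Q$ drawn from the non-empty intersection lies in every projection circle, so $sed(P_{s+i},\vv{P_sQ})\le\epsilon$, and $Q$ becomes the (possibly interpolated) new anchor. That part is fine, modulo the small glossed-over detail that the stored polygon lives on the plane $P.t-t_c=0$ with $t_c=P_{s+1}.t$ in general, so $Q$ must be obtained by re-projecting onto $P.t-P_{s+l}.t=0$ before being lifted.

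The strong case, however, has a genuine gap. You run the test with radius-$\epsilon$ cones and then conclude that $sed(P_{s+i},\vv{P_sP_{s+l}})\le\epsilon$ because the characterization guarantees a witness $Q$. But Propositions~\ref{prop-3d-ci} and~\ref{prop-circle-intersection} only guarantee \emph{some} point $Q$ with $Q.t=P_{s+l}.t$ for which $\vv{P_sQ}$ meets the bound; they say nothing about the specific segment $\vv{P_sP_{s+l}}$. Since $Q$ ranges over the projection circle \pcircle{_{s+l}}, which on that plane is the synchronous circle of radius $\epsilon$ centered at $P_{s+l}$, the witness can sit a full $\epsilon$ away from $P_{s+l}$, and the synchronized distances to $\vv{P_sP_{s+l}}$ can then be as large as $2\epsilon$. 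This is exactly why the paper introduces Proposition~\ref{prop-3d-ci-half}: \cist builds the narrow cones \cone{(P_s,\mathcal{O}(P_{s+i},\epsilon/2))}, so that non-emptiness yields a $Q$ with $sed(P_{s+i},\vv{P_sQ})\le\epsilon/2$ and $|\vv{QP_{s+l}}|\le\epsilon/2$, and the triangle inequality then gives $sed(P_{s+i},\vv{P_sP_{s+l}})\le\epsilon$. Without the $\epsilon/2$ shrinkage (or some equivalent device forcing the witness to coincide with an original point), your strong algorithm is not error bounded, so the ``strong'' half of the theorem is not established as written. Note also that this halving is the reason the paper's \cist and \cista genuinely differ in compression ratio; they are not just two endpoint-selection policies on top of the same test.
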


We shall prove this by providing such algorithms for both strong and weak trajectory simplifications, by employing the constant time synchronous distance checking technique developed in Section~\ref{sec-localcheck}.

\subsection{Strong Trajectory Simplification}

Recall that in Propositions~\ref{prop-3d-ci} and~\ref{prop-circle-intersection}, the point $Q$ may not be in the input sub-trajectory $[P_s,...,P_{s+k}]$.
If we restrict $Q=P_{s+k}$, the end point of the sub-trajectory, then the narrow cones whose base circles with a radius of $\epsilon/2$ suffice.

\begin{prop}
\label{prop-3d-ci-half}
Given a sub-trajectory $[P_s,...,P_{s+k}]$ and an error bound $\epsilon$, $sed(P_{s+i}, \vv{P_sP_{s+k}})\le \epsilon$ for each $i \in [1,k]$ if  $\bigsqcap_{i=1}^{k}$\cone{(P_s, \mathcal{O}(P_{s+i}, \epsilon/2))} $\ne \{P_s\}$.
\end{prop}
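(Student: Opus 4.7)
The plan is to deduce this from Proposition~\ref{prop-3d-ci} applied with the halved bound $\epsilon/2$, and then to relate the synchronous distance with respect to $\vv{P_sP_{s+k}}$ to the synchronous distance with respect to the auxiliary line segment that the proposition produces, by a triangle inequality that picks up the missing factor of two.

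Concretely, first I would invoke Proposition~\ref{prop-3d-ci} with error bound $\epsilon/2$. Since $\bigsqcap_{i=1}^{k}$\cone{(P_s, \mathcal{O}(P_{s+i}, \epsilon/2))} $\neq \{P_s\}$, that proposition yields a point $Q$ with $Q.t = P_{s+k}.t$ and $sed(P_{s+i}, \vv{P_sQ}) \le \epsilon/2$ for every $i \in [1,k]$. Specializing to $i = k$, the synchronized point of $P_{s+k}$ with respect to $\vv{P_sQ}$ is $Q$ itself (they share the same $t$-coordinate), so $|\vv{P_{s+k}Q}| \le \epsilon/2$.

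Next, for a general $i \in [1,k]$, let $P'_{s+i}$ be the synchronized point of $P_{s+i}$ with respect to $\vv{P_sQ}$ and let $P''_{s+i}$ be the synchronized point of $P_{s+i}$ with respect to $\vv{P_sP_{s+k}}$. Both points lie on the plane $P.t = P_{s+i}.t$, and by the defining formulas of synchronized points (Section~\ref{subsec-notation}) each is obtained from $P_s$ by scaling toward $Q$, respectively $P_{s+k}$, by the same factor $c = (P_{s+i}.t - P_s.t)/(P_{s+k}.t - P_s.t) \in [0,1]$. Hence $P'_{s+i} - P''_{s+i} = c\,(Q - P_{s+k})$, which gives $|\vv{P'_{s+i}P''_{s+i}}| = c\cdot|\vv{QP_{s+k}}| \le |\vv{QP_{s+k}}| \le \epsilon/2$. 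The triangle inequality then closes the argument:
\[
sed(P_{s+i},\vv{P_sP_{s+k}}) \;=\; |\vv{P_{s+i}P''_{s+i}}| \;\le\; |\vv{P_{s+i}P'_{s+i}}| + |\vv{P'_{s+i}P''_{s+i}}| \;\le\; \tfrac{\epsilon}{2} + \tfrac{\epsilon}{2} \;=\; \epsilon.
\]

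I do not expect any single step to be genuinely difficult; the only place that deserves care is the similar-triangles identity $|\vv{P'_{s+i}P''_{s+i}}| = c\cdot|\vv{QP_{s+k}}|$, which must be derived directly from the coordinate formulas for synchronized points rather than taken on faith, since $Q$ need not equal $P_{s+k}$ in space even though they share a time stamp. Once that identity is in hand, the proposition follows by the triangle inequality above, and no converse direction is needed because the statement is only an implication.
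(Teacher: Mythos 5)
Your proposal is correct and follows the same route as the paper: invoke Proposition~\ref{prop-3d-ci} with the halved bound $\epsilon/2$ to obtain $Q$ with $Q.t=P_{s+k}.t$ and $sed(P_{s+i},\vv{P_sQ})\le\epsilon/2$, then pass from $\vv{P_sQ}$ to $\vv{P_sP_{s+k}}$ by a triangle inequality that costs another $\epsilon/2$. The only difference is that the paper compresses the second step into the phrase ``by the triangle inequality essentially,'' whereas you justify it properly via the scaling identity $|\vv{P'_{s+i}P''_{s+i}}| = c\cdot|\vv{QP_{s+k}}|$ with $c\le 1$ --- a worthwhile bit of extra care, but not a different argument.
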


\begin{proof}\
If $\bigsqcap_{i=s+1}^{e}{\mathcal{C}(P_s, P_{s+i}, \epsilon/2)} \ne \{P_s\}$, then by Proposition~\ref{prop-3d-ci}, there exists a point $Q$, $Q.t = P_{s+k}.t$, such that $sed(P_{s+i}, \vv{P_sQ})\le \epsilon/2$ for all $i \in [1,k]$. By the triangle inequality essentially, $sed(P_{s+i}, \vv{P_sP_{s+k}})\le  sed(P_{s+i}, \vv{P_sQ}) + |\vv{QP_{s+k}}| \le  \epsilon/2+\epsilon/2 = \epsilon$. \eop
\end{proof}

 We first present the  one-pass error bounded {\em strong trajectory simplification} algorithm using \sed, as shown in Figure~\ref{alg:CI3d}.



\stitle{Procedure \kw{getRegularPolygon}}.
We first present procedure \kw{getRegularPolygon} that, given a cone projection circle, generates its inscribed $m$-edge regular polygon,  following the definition in Section~\ref{subsec-RPI}.

The parameters $P_s$, $P_i$, $r$ and $t_c$ together form the projection circle \pcircle{(P^c_i, r^c_i)} of the spatio-temporal cone \cone{(P_s, \mathcal{O}(P_{i}, r))} of point $P_{i}$ \wrt point $P_s$ on the plane $P.t - t_c$ = $0$. Firstly, $P^c_i.x$ and $P^c_i.y$ are computed (lines 1--3), and $r^c_i = c\cdot r$.
Then it builds and returns an $m$-edge inscribed regular polygon $\mathcal{R}$ of \pcircle{(P^c_i, r^c_i)} (lines 4--8), by transforming a polar coordinate system
into a Cartesian one. Note here $\theta$, $r\cdot\sin\theta$ and $r\cdot\cos\theta$ only need to be computed once during the entire processing of a trajectory.

\stitle{Algorithm \cist}.
It takes as input a trajectory \trajec{T}${[P_0, \ldots, P_n]}$, an error bound $\epsilon$ and the number $m$ of edges for inscribed regular polygons, and returns a simplified  trajectory $\overline{\mathcal{T}}$ of $\dddot{\mathcal{T}}$.

\begin{figure*}[tb!]
	\centering
	\includegraphics[scale=0.79]{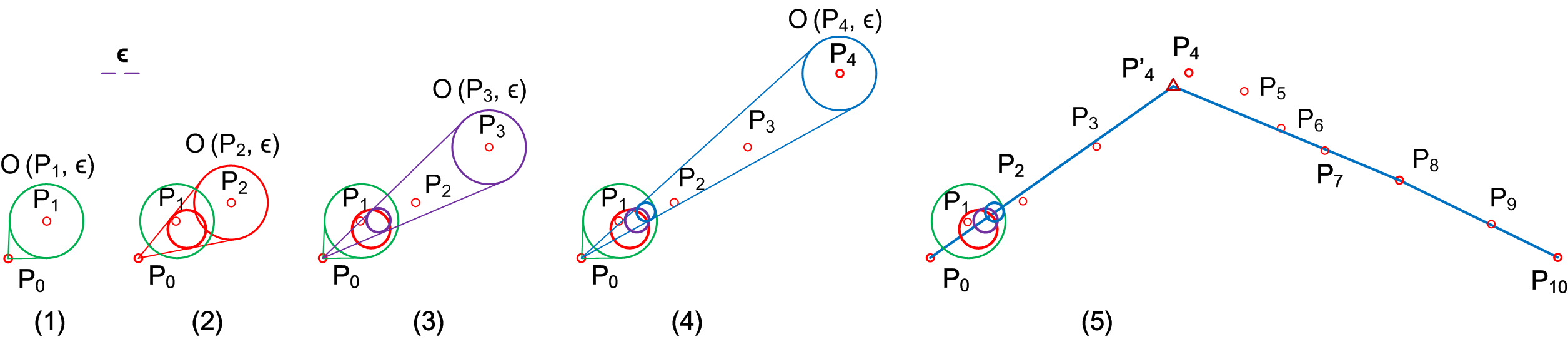}
	\caption{\small A running example of the \cista algorithm. The points and the oblique circular cones are projected on an x-y space. The trajectory $\dddot{\mathcal{T}}[P_0, \ldots, P_{10}]$ is compressed into three line segments.}
	\label{fig:exm-consta}
\end{figure*}

The algorithm first initializes the start point $P_s$ to $P_0$, the index $i$ of the current data point to $1$, the intersection polygon $\mathcal{R}^*$ to $\emptyset$, the output $\overline{\mathcal{T}}$ to $\emptyset$, and $t_c$ to $P_1.t$, respectively (line 1).
The algorithm sequentially processes the data points of the trajectory one by one  (lines 2--10). It gets the $m$-inscribed regular polygon \wrt the current point $P_i$ (line 3) by calling procedure $\kw{getRegularPolygon}$. When $\mathcal{R}^* = \emptyset$, the intersection polygon $\mathcal{R}^*$ is simply initialized as $\mathcal{R}$ (lines 4, 5). Otherwise,
$\mathcal{R}^*$ is  the intersection of the current regular polygon $\mathcal{R}$ with $\mathcal{R}^*$ by calling procedure $\rpia()$ introduced in Section~\ref{subsec-fastRPI} (line 7). If the resulting intersection $\mathcal{R}^*$ is empty, then a new line segment $\vv{P_sP_{i-1}}$ is generated (lines 8--10). The process repeats until all points have been processed (line 11).
After the  final new line segment $\vv{P_sP_{n}}$ is generated (line 12), it returns the simplified  piece-wise line representation $\overline{\mathcal{T}}$ (line 13).

\begin{figure}[tb!]
	\begin{center}
		{\small
			\begin{minipage}{3.3in}
				\myhrule
				\mat{0ex}{
					{\bf Algorithm}~\cist$(\dddot{\mathcal{T}}[P_0,\ldots,P_n],~\epsilon,~m)$\\
					\bcc \hspace{1ex}\= $P_s := P_0$; ~$i := 1$;  ~$\mathcal{R}^* := \emptyset$;  ~$\overline{\mathcal{T}} := \emptyset$; ~$t_c$ := $P_1.t$;\\
					\icc \hspace{1ex}\= \While $i \le n$ \Do \\
					\icc \>\hspace{2ex} $\mathcal{R}$ := \kw{getRegularPolygon}($P_s$, $P_i$, $\epsilon/2$, $m$, $t_c$); \\
					\icc \>\hspace{2ex} \If $\mathcal{R}^* = \emptyset$  \Then \hspace{3ex} /* $\mathcal{R}^*$ needs to be initialized */\\
					\icc \>\hspace{2ex} \ \ \ $\mathcal{R}^* :=\mathcal{R};$ \\
					\icc \>\hspace{2ex} \Else \\
					\icc \>\hspace{2ex}\ \ \ \ $\mathcal{R}^* := {\rpia}(\mathcal{R}^*, ~\mathcal{R})$; \\
					\icc \>\hspace{4ex} \If $\mathcal{R}^* = \emptyset$ \Then \hspace{1ex} /* generate a new line segment */\\
					\icc \> \hspace{4ex} \ \ \ $i$ := $i - 1$;\ \ $\overline{\mathcal{T}} := \overline{\mathcal{T}}\cup \{\vv{P_sP_{i}}\}$; \\
					\icc \> \hspace{4ex} \ \ \ $P_s := P_{i}$; \ \ $t_c$ := $P_{i+1}.t$;\\
					\icc \>\hspace{2ex} $i$ := $i +1$;\\
					\icc \> \hspace{0ex} $\overline{\mathcal{T}}$ := $\overline{\mathcal{T}}\cup \{\vv{P_sP_{n}}\}$; \\
					\icc \hspace{1ex}\Return $\overline{\mathcal{T}}$. \\
					\\
					{\bf Procedure} ~\kw{getRegularPolygon}$(P_s,~P_i,~r,~m,~t_c)$ \\
					\bcc \hspace{1ex} $c := (t_c-t_s)/(P_i.t - P_s.t)$; \\
					\icc \hspace{1ex} $x := P_s.x + c\cdot(P_i.x-P_s.x)$; \\
					\icc \hspace{1ex} $y := P_s.y + c\cdot(P_i.y-P_s.y)$; \\
					\icc \hspace{1ex} \For $(j := 1;j \le m;j++)$ \Do \\
					\icc \> \hspace{2ex} $\theta :=  (2j + 1)*\pi /m $; \\
					\icc \> \hspace{2ex} $\mathcal{R}.v_j.x := x + c\cdot r\cdot\cos\theta$;\\
					\icc \> \hspace{2ex} $\mathcal{R}.v_j.y := y + c\cdot r\cdot\sin\theta$;\\
					\icc \hspace{1ex} \Return $\mathcal{R}$.
				}
				\vspace{-2ex}
				\myhrule
			\end{minipage}
		}
	\end{center}
	\vspace{-1ex}
	\caption{\small One-pass strong trajectory  simplification algorithm.}
	\label{alg:CI3d}
	\vspace{-1ex}
\end{figure}

\begin{example}
\label{exm-alg-conest}
Figure~\ref{fig:exm-const} shows a running example of \cist for compressing the trajectory \trajec{T} in Figure~\ref{fig:notations}.

\sstab (1) After initialization, the \cist algorithm reads point $P_1$ and builds a narrow \emph{oblique circular cone} \cone{(P_0, \mathcal{O}(P_{1}, \epsilon/2))}, taking $P_0$ as its apex and \circle{(P_1, \epsilon/2)} as its base (green dash). The \emph{circular cone} is projected on the plane $P.t-P_1.t=0$, and the inscribe regular polygon $\mathcal{R}_1$ of the projection circle is returned. As $\mathcal{R}^*$ is empty, $\mathcal{R}^*$ is set to $\mathcal{R}_1$.

\sstab(2) The algorithm reads $P_2$ and builds \cone{(P_0, \mathcal{O}(P_{2}, \epsilon/2))} (red dash). The \emph{circular cone} is also projected on the plane $P.t-P_1.t=0$ and the inscribe regular polygon $\mathcal{R}_2$ of the projection circle is returned. As $\mathcal{R}^*=\mathcal{R}_1$ is not empty, $\mathcal{R}^*$ is set to the intersection of $\mathcal{R}_2$ and $\mathcal{R}^*$, which is $\mathcal{R}_1 \bigsqcap \mathcal{R}_2 \ne \emptyset$.

\sstab (3) For point $P_3$, the algorithm runs the same routine as $P_2$ until the intersection of $\mathcal{R}_3$ and $\mathcal{R}^*$ is $\emptyset$. Thus, a line segment $\vv{P_0P_2}$ is generated, and the process of a new line segment is started, taking $P_2$ as the new start point and $P.t-P_3.t=0$ as the new projection plane.

\sstab (4) At last, the algorithm outputs four continuous line segments, \ie $\{\vv{P_0P_2}$, $\vv{P_2P_4}$, $\vv{P_4P_{7}}$, $\vv{P_7P_{10}}\}$. \eop
\end{example}

\subsection{Weak Trajectory Simplification}

We then present the one-pass error bounded {\em weak simplification} algorithm using \sed.


\stitle{Algorithm \cista}.
Given a trajectory \trajec{T}${[P_0, \ldots, P_n]}$, an error bound $\epsilon$ and the number $m$ of edges for inscribed
regular polygons, it returns a simplified trajectory,
which may contain interpolated points.
By Proposition~\ref{prop-circle-intersection}, algorithm \cista generates spatio-temporal cones whose bases are circles with a radius of $\epsilon$,
and, hence, it replaces $\epsilon/2$ with $\epsilon$ (line 3 of \cist). It also generates new line segments with data points $Q$ (may be interpolated points), and,
hence, it replaces point $P_i$ and line segment $\vv{P_sP_i}$  (lines 9 and 10 of algorithm \cist) with $Q$ and $\vv{P_sQ}$, respectively,  such that $Q$ is generated as follows.

\begin{prop}
\label{prop-cist-Q}
Given a sub-trajectory \trajec{T}${[P_s, \ldots, P_{s+k}]}$ and an error bound $\epsilon$,  $t_c=P_{s+k}.t$ and $\mathcal{R}^*_k$ be the intersection of all polygons $\mathcal{R}_{s+i}$ ($i\in[1,k]$) on the plane $P.t - t_c = 0$. If $\mathcal{R}^*_k$ is not empty, then any point in the area of $\mathcal{R}^*_k$ is feasible for $Q$.
\end{prop}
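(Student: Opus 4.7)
The plan is to derive this proposition as a direct corollary of Proposition~\ref{prop-circle-intersection}, by chaining together two set inclusions: (i) each inscribed regular polygon $\mathcal{R}_{s+i}$ is contained in its defining cone projection circle $\mathcal{O}^c(P^c_{s+i}, r^c_{s+i})$, and (ii) the intersection $\mathcal{R}^*_k = \bigsqcap_{i=1}^{k}\mathcal{R}_{s+i}$ is therefore contained in $\bigsqcap_{i=1}^{k}\mathcal{O}^c(P^c_{s+i}, r^c_{s+i})$. Once these inclusions are in place, any point of $\mathcal{R}^*_k$ is automatically a point of the circle intersection, and Proposition~\ref{prop-circle-intersection} applied with error bound $\epsilon$ (since in the \cista setting the spatio-temporal cones use base radius $\epsilon$ rather than $\epsilon/2$) yields exactly the required synchronous-distance bound.

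First I would verify inclusion (i) by invoking the construction of $\mathcal{R}_{s+i}$ in Section~\ref{subsec-RPI}: its vertices lie on \pcircle{(P^c_{s+i}, r^c_{s+i})} at polar angles $\frac{(j-1)}{m}2\pi$, so as an inscribed convex polygon it is entirely contained in the disk bounded by that circle. Inclusion (ii) is then immediate from the monotonicity of intersection under subset inclusion. Second, I would observe that any point $Q$ taken from $\mathcal{R}^*_k$ lies on the plane $P.t - t_c = 0$ by construction of the polygons, and since the proposition fixes $t_c = P_{s+k}.t$, we obtain $Q.t = P_{s+k}.t$ for free, which is precisely the temporal prerequisite of Proposition~\ref{prop-circle-intersection}.

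Third, I would apply Proposition~\ref{prop-circle-intersection} (with $\epsilon$ in place of the half-bound) to the point $Q$: being in every projection circle \pcircle{(P^c_{s+i}, r^c_{s+i})} implies $sed(P_{s+i}, \vv{P_sQ}) \le \epsilon$ for all $i \in [1, k]$. This gives feasibility of $Q$ as a representative of the sub-trajectory endpoint under the \sed error bound, which is exactly what weak simplification requires of an interpolated end point.

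I do not expect a real obstacle here: the statement is essentially a geometric corollary, and the only delicate point is making sure that the cone base radius used by \cista (namely $\epsilon$, not $\epsilon/2$) is correctly matched to the hypothesis of Proposition~\ref{prop-circle-intersection}. One minor bookkeeping item I would flag explicitly is that $\mathcal{R}^*_k$ is taken in the planar sense (its points have no inherent $t$-coordinate), so when promoting $Q$ to a spatio-temporal point I assign $Q.t := t_c$ before invoking Proposition~\ref{prop-circle-intersection}; this is implicit in the statement but worth stating once for clarity.
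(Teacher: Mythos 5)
Your proposal is correct and follows essentially the same route as the paper's own proof: the paper likewise derives the claim from Proposition~\ref{prop-circle-intersection} together with ``the nature of inscribed regular polygons,'' which is exactly your inclusion of each polygon in its cone projection circle and hence of $\mathcal{R}^*_k$ in the circle intersection. Your explicit bookkeeping (the radius $\epsilon$ rather than $\epsilon/2$, and assigning $Q.t := t_c = P_{s+k}.t$) only spells out steps the paper leaves implicit.
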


\begin{proof}\
By Proposition~\ref{prop-circle-intersection} and the nature of inscribed regular polygon, it is easy to find that for any point $Q \in \mathcal{R}^*_k$  \wrt plane $t_c=P_{s+k}.t$, there is $sed(P_{s+i}, \vv{P_sQ})\le \epsilon$ for all points $P_{s+i}$ ($i \in [1,k]$).

From this, we have the conclusion. \eop
\end{proof}

The choice of a point $Q$ from $\mathcal{R}^*_k$ may slightly affect the effectiveness (\eg average errors and compression ratios). However, the choice of an optimal $Q$ is non-trivial. For the benefit of efficiency, we apply the following strategies.

\sstab (1) If $P_{s+k}$ is in the area of $\mathcal{R}^*_k$ \wrt $t_c=P_{s+k}.t$, then $Q$ is simply chosen as $P_{s+k}$.

\sstab (2) If $\mathcal{R}^*_k \ne \emptyset$ and $P_{s+k}$ is not in the area of $\mathcal{R}^*_k$ \wrt $t_c$=$P_{s+k}.t$, then the central point of $\mathcal{R}^*_k$ is chosen as $Q$.

\sstab (3)  If $t_c \ne P_{s+k}.t$, which is the general case, then we project the intersection polygon $\mathcal{R}^*_k$ \wrt $t_c \ne P_{s+k}.t$ on the plane $P.t -P_{s+k}.t = 0$, and apply strategies (1) and (2) above. That is, the projection has no affects on the choice of $Q$.

\begin{example}
\label{exm-alg-conesta}
Figure~\ref{fig:exm-consta}  shows a running example of algorithm \cista for compressing the trajectory \trajec{T} in Figure~\ref{fig:notations} again.

\sstab (1) After initialization, the \cista algorithm reads point $P_1$ and builds an \emph{oblique circular cone} \cone{(P_0, \mathcal{O}(P_{1}, \epsilon))}, and projects it on the plane $P.t-P_1.t=0$. The inscribed regular polygon $\mathcal{R}_1$ of the projection circle is returned and the intersection $\mathcal{R}^*$ is set to $\mathcal{R}_1$.

\sstab (2) $P_2$, $P_3$ and $P_4$ are processed in turn. The intersection polygons $\mathcal{R}^*$ are not empty.

\sstab (3) For point $P_5$, the intersection of polygons $\mathcal{R}_5$ and $\mathcal{R}^*$ is $\emptyset$. Thus, line segment $\vv{P_0Q} =\vv{P_0P'_4}$ is output, and a new line segment is started such that point $Q=P'_4$ is the new start point and plane $P.t-P_5.t=0$ is the new projection plane.

\sstab (4) At last, the algorithm outputs 3 continuous line segments, \ie $\vv{P_0P'_4}$, $\vv{P'_4P_8}$ and $\vv{P_8P_{10}}$, in which $P'_4$ is an interpolated data points not in \trajec{T}. \eop
\end{example}

\begin{table*}[bt!]
	\vspace{-1ex}
	\caption{\small Real-life trajectory datasets}
	\centering
	\small
	\begin{tabular}{|l|c|c|c|r|}
		\hline
		\bf{ Data Sets}& \bf{Number\ of Trajectories}     &\bf {Sampling Rates\ (s)}   &\bf{Points Per Trajectory\ (K)}    &\bf {Total points} \\
		\hline
		\sercar	&1,000	    &3-5	    &$\sim114.0$   &114M\\
		\hline
		\geolife &182	    &1-5	    &$\sim131.4$   &24.2M\\
		\hline
		\mopsi	&51	    	&2	    &$\sim153.9$     &7.9M\\
		\hline
		\pricar	& 10	    &1	        &$\sim11.8$      &112.8K \\
		\hline
	\end{tabular}
	\label{tab:datasets}
\end{table*}

\vspace{.5ex}
\stitle{Correctness and complexity analyses}.
The correctness of algorithms \cist and \cista follows from Propositions~\ref{prop-circle-intersection} and~\ref{prop-3d-ci-half}, and Propositions~\ref{prop-circle-intersection} and~\ref{prop-cist-Q}, respectively.
It is easy to verify that each data point in a trajectory is only processed once, and each can be done in $O(1)$ time,
as both procedures  $\kw{getRegularPolygon}$ and $\rpia$ can be done in $O(1)$ time.
Hence, these algorithms are both one-pass error bounded trajectory simplification algorithms.
It is also easy to see that these algorithms take $O(1)$ space.

\eat{
\begin{figure}[tb!]
\begin{center}
{\small
\begin{minipage}{3.36in}
\myhrule
\vspace{-1ex}
\mat{0ex}{
	{\bf Algorithm}~$\cista(\dddot{\mathcal{T}}[P_0,\ldots,P_n], ~\epsilon, ~M, ~t_c)$\\
	\bcc \hspace{1ex}\= $P_s := P_0$;  ~$P_e := P_0$;  ~$\overline{\mathcal{T}} := \phi$;  ~$\mathcal{R}^* := \phi$;  ~$i := 1$\\
	\icc \hspace{1ex}\= \While $i \le n$ \Do \\
	\icc \>\hspace{3ex} $\mathcal{R} := {getRegularPolygon}$($P_s$, $P_i$, $\epsilon$, $M$, $t_c$) \\
	\icc \>\hspace{3ex} \If $\mathcal{R}^* = \phi$ \Then \\
	\icc \>\hspace{6ex} $\mathcal{R}^* :=\mathcal{R}$ \\
	\icc \>\hspace{3ex} \Else \\
	\icc \>\hspace{6ex} $\mathcal{R}^* := {\rpia}(\mathcal{R}^*, ~\mathcal{R})$ \\
	\icc \>\hspace{3ex} \If $\mathcal{R}^* \ne \phi$ \Then \\
	\icc \> \hspace{6ex} $P_e := P_i$ \\
	\icc \> \hspace{6ex} $i := i+1$ \\
	\icc \>\hspace{3ex} \Else\\
	\icc \> \hspace{6ex} $\overline{\mathcal{T}} := \overline{\mathcal{T}}\cup \{\mathcal{L}(P_s,Q)\}$ \\
	\icc \> \hspace{6ex} $P_s := Q$;  ~~$\mathcal{R}^* := \phi$ \\
	\icc \hspace{1ex}\Return $\overline{\mathcal{T}}$
%
%
}
\vspace{-2ex}
\myhrule
\end{minipage}
}
\end{center}
\vspace{-2ex}
\caption{\small Aggressive spatio-temporal cone intersection algorithm (\cista).}
\label{alg:ciseda}
\vspace{-2ex}
\end{figure}
}

\section{Experimental Study} %
\label{sec-exp}

In this section, we present an extensive experimental study of our one-pass trajectory simplification algorithms (\cist and \cista) compared with the optimal algorithm using \sed and existing algorithms of \dps and \squishe on trajectory datasets.
Using four real-life trajectory datasets, we conducted three sets of experiments to evaluate:
(1) the compression ratios of algorithms \cist and \cista vs. \dps, \squishe and the optimal algorithm,
(2) the average errors of algorithms \cist and \cista vs. \dps, \squishe and the optimal algorithm,
(3) the execution time of algorithms \cist and \cista vs. \dps, \squishe and the optimal algorithm, and
(4) the impacts of polygon intersection algorithms \rpia and \cpia and the edge number $m$ of inscribed regular polygons to the effectiveness and efficiency of algorithms \cist and \cista.

\subsection{Experimental Setting}

\stitle{Real-life Trajectory Datasets}.
We use four real-life datasets \sercar, \geolife, \mopsi and \pricar shown in Table~\ref{tab:datasets} to test our solutions.


\vspace{0.5ex}
\ni \emph{(1) Service car trajectory data} (\sercar) is the GPS trajectories collected by a Chinese car rental company during Apr. 2015 to Nov. 2015. The sampling rate was one point per $3$--$5$ seconds, and
each trajectory has around $114.1K$ points.

\vspace{0.5ex}
\ni \emph{(2) GeoLife trajectory data} (\geolife) is the GPS trajectories collected in GeoLife project~\cite{Zheng:GeoLife} by 182 users in a period from Apr. 2007 to Oct. 2011. These trajectories have a variety of sampling rates, among which 91\% are logged in each 1-5 seconds per point. 

\vspace{0.5ex}
\ni \emph{(3) Mopsi trajectory data} (\mopsi) is the GPS trajectories collected in Mopsi project~\cite{Mopsi} by 51 users in a period from 2008 to 2014. Most routes are in Joensuu region, Finland.
The sampling rate was one point per $2$ seconds, and each trajectory has around $153.9K$ points.

\vspace{0.5ex}
\ni \emph{(4) Private car trajectory data} (\pricar) is a small set GPS trajectories collected with a high sampling rate of one point per second by our team members in 2017. There are 10 trajectories and each trajectory has around 11.8K points.


\vspace{0.5ex}
\ni \emph{(5) Small trajectory data}.
As the optimal \lsa algorithm~\cite{Imai:Optimal} it has both high time and space complexities, \ie $O(n^3)$ time and $O(n^2)$ space, it is impossible to compress the entire datasets (too slow and out of memory). Hence, we further build four \textit{small datasets}, each dataset includes 10 middle-size ($10K$ points per trajectory) trajectories selected from \sercar, \geolife, \mopsi and \pricar, respectively.


\stitle{Algorithms and implementation}.
We implement five \lsa algorithms, \ie our \cist and \cista, \dps~\cite{Meratnia:Spatiotemporal} (the most effective existing \lsa algorithm using \sed), \squishe~\cite{Muckell:Compression} (the most efficient existing \lsa algorithm using \sed) and the optimal \lsa algorithm using \sed (see Section~\ref{subsec-optimal}).
We also implement the polygon intersection algorithms, \cpia and our \rpia.
All algorithms were implemented with Java.
All tests were run on an {x64-based  PC with 8 Intel(R) Core(TM) i7-6700 CPU @ 3.40GHz and 8GB of memory, and each test was repeated
over 3 times and the average is reported here.

%

\subsection{Experimental Results}

We next present our findings.

\subsubsection{Evaluation of Compression Ratios}

In the first set of tests, we evaluate the impacts of parameter $m$ on the
compression ratios of our algorithms \cist and \cista, and compare the compression ratios of \cist and \cista with \dps, \squishe and the optimal algorithm.
The compression ratio is defined as follows: Given a set of trajectories $\{\dddot{\mathcal{T}_1}, \ldots, \dddot{\mathcal{T}_M}\}$ and their piecewise line representations $\{\overline{\mathcal{T}_1}, \ldots, \overline{\mathcal{T}_M}\}$, the compression ratio of an algorithm is $(\sum_{j=1}^{M} |\overline{\mathcal{T}}_j |)/(\sum_{j=1}^{M} |\dddot{\mathcal{T}}_j |)$.
By the definition, \emph{algorithms with lower compression ratios are better}.


\begin{figure*}[tb!]
\centering
\includegraphics[scale = 0.290]{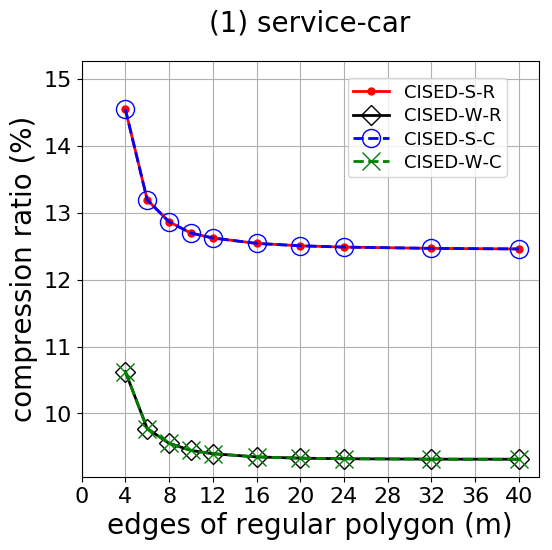}\hspace{1ex}
\includegraphics[scale = 0.290]{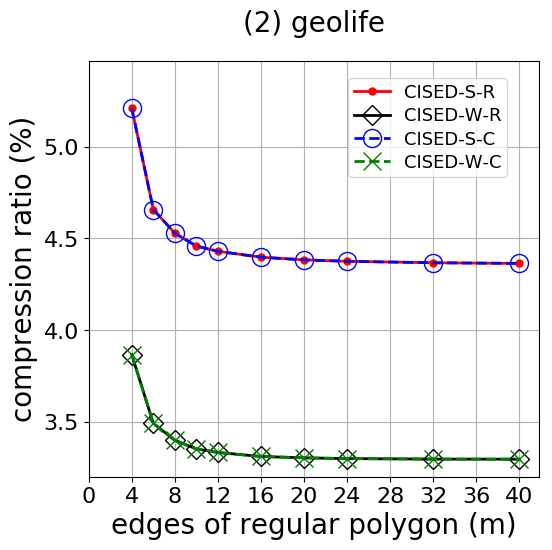}\hspace{1ex}
\includegraphics[scale = 0.290]{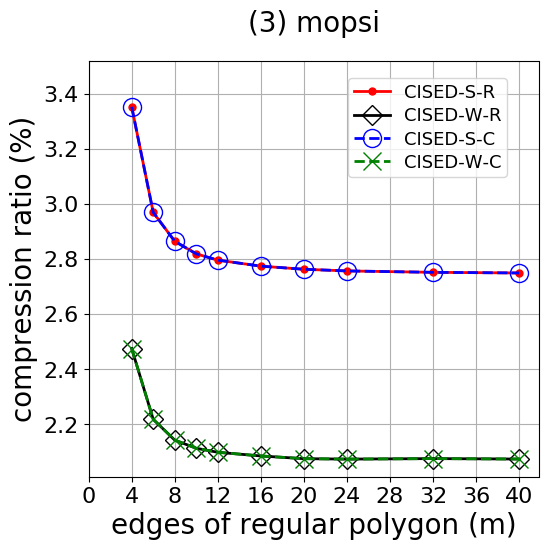}\hspace{1ex}
\includegraphics[scale = 0.290]{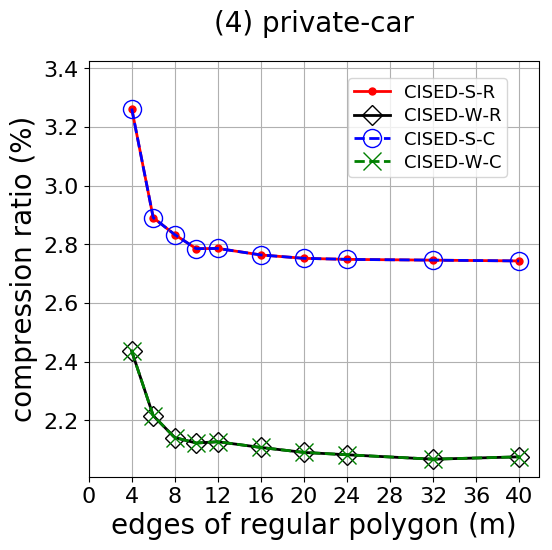}
\caption{\small Evaluation of compression ratios: fixed error bound with $\epsilon=60$ meters and varying $m$.}
\label{fig:m-cr-e60}
\end{figure*}

\begin{figure*}[tb!]
\centering
\includegraphics[scale = 0.290]{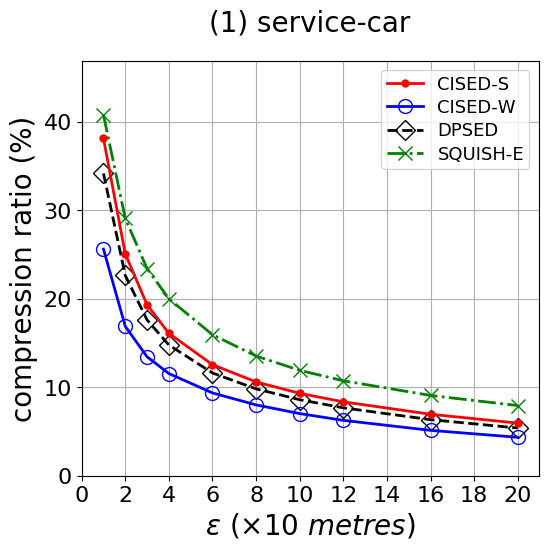}\hspace{1ex}
\includegraphics[scale = 0.290]{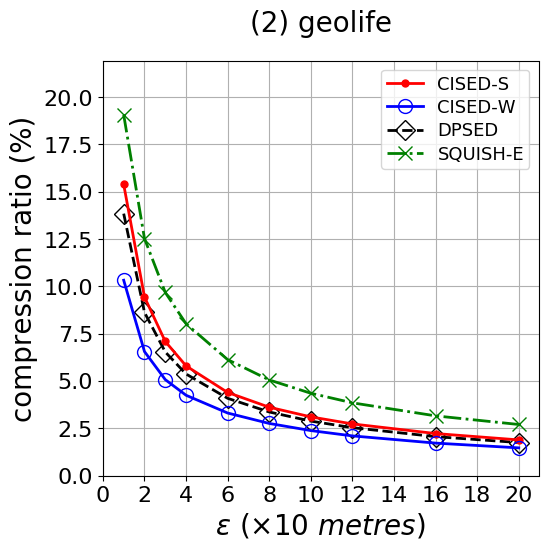}\hspace{1ex}
\includegraphics[scale = 0.290]{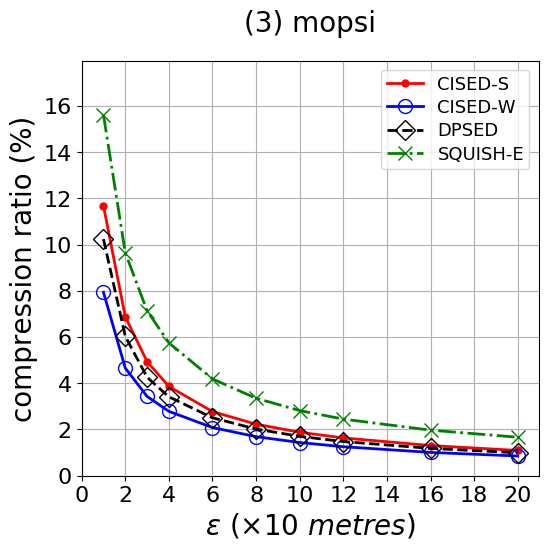}\hspace{1ex}
\includegraphics[scale = 0.290]{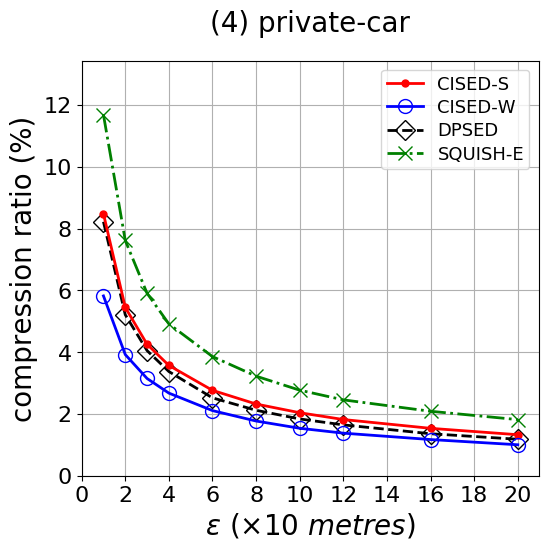}
\caption{\small Evaluation of compression ratios: fixed with $m=16$ and varying error bound $\epsilon$.}
\label{fig:cr-m16}
\end{figure*}

\begin{figure*}[tb!]
\centering
\includegraphics[scale = 0.290]{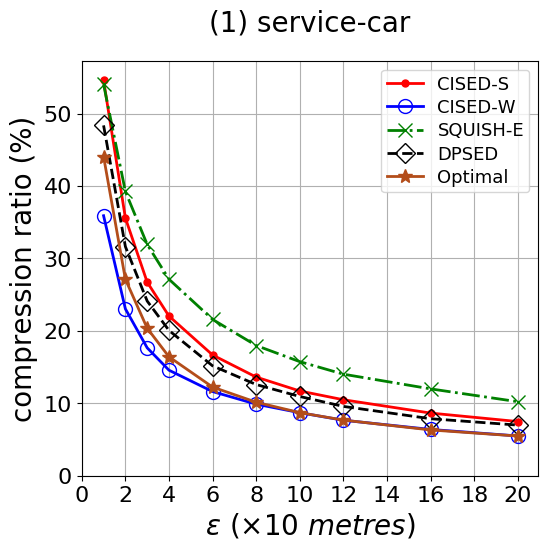}\hspace{1ex}
\includegraphics[scale = 0.290]{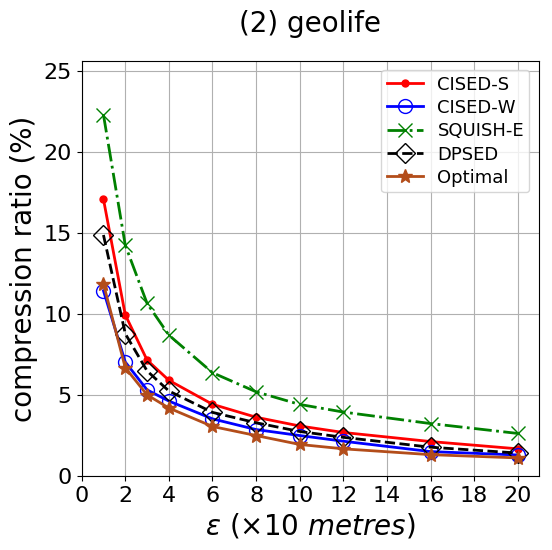}\hspace{1ex}
\includegraphics[scale = 0.290]{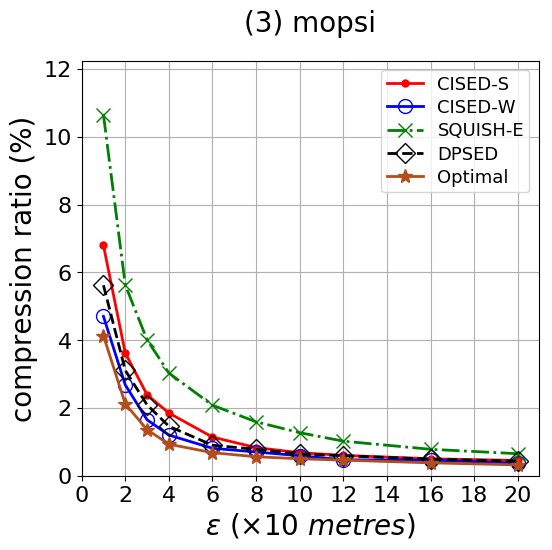}\hspace{1ex}
\includegraphics[scale = 0.290]{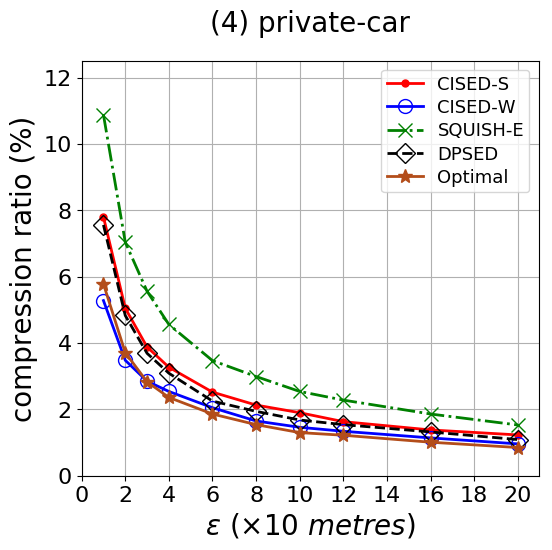}
\caption{\small Evaluation of compression ratios: fixed with $m=16$ and varying error bound $\epsilon$ (on small datasets).}
\label{fig:cr-optimal-m16}
\end{figure*}

\begin{figure*}[tb!]
\centering
\includegraphics[scale = 0.2900]{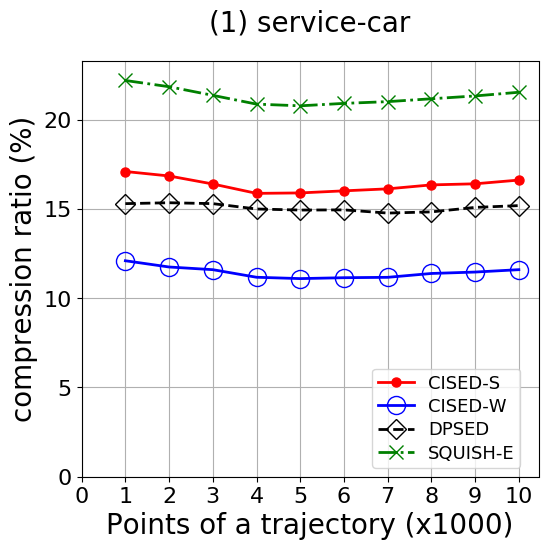}\hspace{1ex}
\includegraphics[scale = 0.2900]{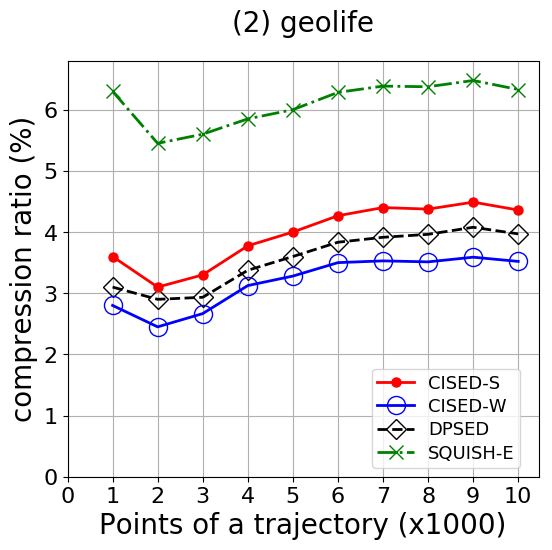}\hspace{1ex}
\includegraphics[scale = 0.2900]{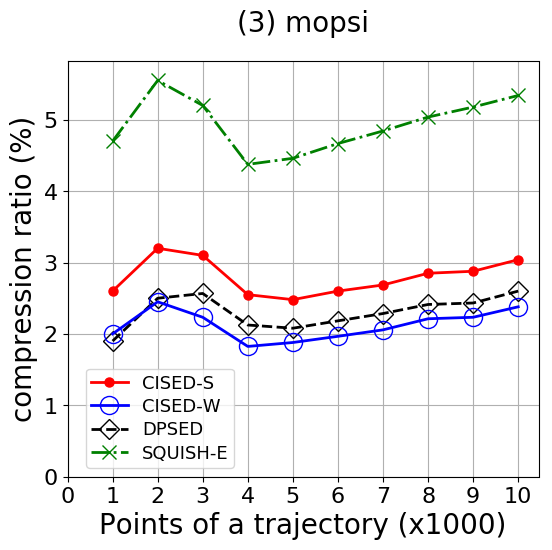}\hspace{1ex}
\includegraphics[scale = 0.2900]{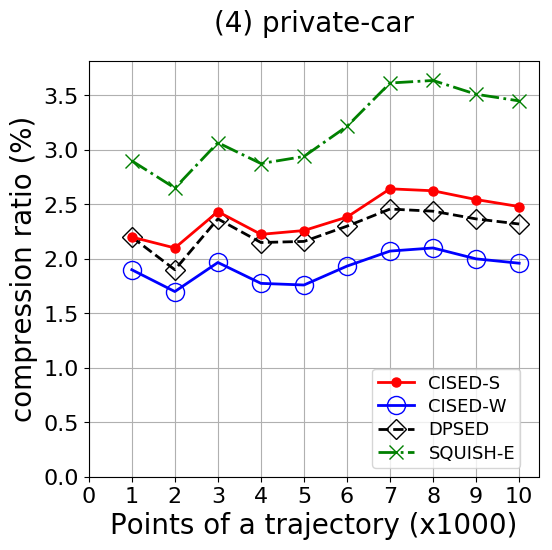}
\caption{\small Evaluation of compression ratios: fixed with $m=16$ and $\epsilon=60$ meters, and varying the size of trajectories.}
\label{fig:cr-size}
\end{figure*}

\stitle{Exp-1.1: Impacts of parameter $m$ on compression ratios}.
To evaluate the impacts of the number $m$ of edges of polygons on the
compression ratios of algorithms \cist and \cista, we fixed the error bounds {$\epsilon =60$ meters}, and varied $m$ from $4$ to $40$. The results are reported in Figure~\ref{fig:m-cr-e60}.

\ni(1) Algorithms \cist and \cista using \rpia have the same compression ratios as their counterparts using \cpia for all cases.

\ni(2) When varying $m$, the compression ratios of algorithms
{\cist and \cista} decrease with the increase of $m$ on all datasets.

\ni(3) When varying $m$, the compression ratios of algorithms {\cist and \cista} decrease (a) fast when $m < 12$, (b) slowly when $m \in [12, 20]$, and (c) very slowly when $m > 20$. Hence, \emph{the region of $[12, 20]$ is the good candidate region for $m$ in terms of compression ratios.}
Here the compression ratio of $m$=$12$ is only on average {$100.95\%$} of $m$=$20$.


\stitle{Exp-1.2: Impacts of the error bound $\epsilon$ on compression ratios (VS. algorithms \dps and \squishe)}.
To evaluate the impacts of error bound $\epsilon$ on compression ratios, we fixed {$m$=$16$}, the middle of $[12, 20]$, and varied $\epsilon$ from $10$ meters to $200$ meters on the entire four datasets, respectively.
The results are reported in Figure~\ref{fig:cr-m16} .

\ni (1) When increasing $\epsilon$, the compression ratios of all these algorithms decrease on all datasets.

\ni (2) Dataset \pricar has the lowest compression ratios, compared with datasets \mopsi, \sercar and \geolife, due to its highest sampling rate,
\sercar has the highest compression ratios due to its lowest sampling rate, and \geolife and \mopsi have the compression ratios in the middle accordingly.

\ni {(3)} Algorithm \cist is better than \squishe {and comparable} with \dps on all datasets and for all $\epsilon$.
The compression ratios of \cist are on average {($79.3\%$, $71.9\%$, $67.3\%$, $72.7\%$) and ($109.2\%$, $108.0\%$, $111.7\%$, $109.1\%$)} of \squishe and
\dps on {datasets (\sercar, \geolife, \mopsi, \pricar)}, respectively.
For example, when $\epsilon$ = $40$ meters, the compression ratios of algorithms
\squishe, \cist and \dps are
{($20.0\%$, $8.0\%$, $5.7\%$, $4.9\%$), ($16.1\%$, $5.8\%$, $3.9\%$, $3.6\%$) and ($14.8\%$, $5.4\%$, $3.4\%$, $3.4\%$)} on  {datasets (\sercar, \geolife, \mopsi, \pricar)}, respectively.

\ni {(4)} Algorithm \cista has better compression ratios than \dpa, \squishe and \cist on all datasets and for all $\epsilon$.
The compression ratios of \cista are on average ($57.7\%$, $53.8\%$, $50.0\%$, $54.6\%$), ($79.5\%$, $81.0\%$, $83.0\%$, $82.0\%$) and {($72.9\%$, $75.0\%$, $74.3\%$, $75.1\%$) of algorithms
\squishe, \dps and \cist on {datasets (\sercar, \geolife, \mopsi, \pricar)}, respectively.
For example, when $\epsilon$ = $40$ meters, the compression ratios of algorithm
\cista are ($11.5\%$, $4.3\%$, $2.8\%$, $2.7\%$) on datasets (\sercar, \geolife, \mopsi, \pricar), respectively.

\stitle{Exp-1.3: Impacts of the error bound $\epsilon$ on compression ratios (VS. the optimal algorithm).}
To evaluate the impacts of error bound $\epsilon$ on compression ratios, we once again fixed {$m$=$16$}, the middle of $[12, 20]$, and varied $\epsilon$ from $10$ to $200$ meters on the first $1K$ points of each trajectory of the selected \textit{small datasets}, respectively.
The results are reported in Figure~\ref{fig:cr-optimal-m16} .

\ni {(1)} Algorithm \cist is poorer than the optimal algorithm on all datasets and for all $\epsilon$.
More specifically, the compression ratios of \cist are on average ($134.6\%$, $150.7\%$, $155.5\%$, $138.5\%$) of the optimal algorithm on {datasets (\sercar, \geolife, \mopsi, \pricar)}, respectively.
For example, when $\epsilon$ = $40$ meters, the compression ratios of \cist and the optimal algorithm are
($22.0\%$, $5.9\%$, $1.9\%$, $3.3\%$) and {($16.4\%$, $4.2\%$, $0.9\%$, $2.4\%$)}
on  {datasets (\sercar, \geolife, \mopsi, \pricar)}, respectively.

\ni {(2)} Algorithm \cista is comparable with the optimal algorithm on all datasets and for all $\epsilon$.
The compression ratios of \cista are on average  ($94.8\%$, $115.5\%$, $119.7\%$, $107.5\%$)} of the optimal algorithm
 on {datasets (\sercar, \geolife, \mopsi, \pricar)}, respectively.
For example, when $\epsilon$ = $40$ meters, the compression ratios of algorithm
\cista are ($14.6\%$, $4.6\%$, $1.2\%$, $2.5\%$) on datasets (\sercar, \geolife, \mopsi, \pricar), respectively.

\stitle{Exp-1.4: Impacts of trajectory sizes on compression ratios}.
To evaluate the impacts of trajectory size, \ie the number of data points in a trajectory, on compression ratios,
we chose the same {$10$} trajectories from datasets \sercar, \geolife, \mopsi and \pricar, respectively,
fixed {$m$=$16$} and $\epsilon$=$60$ meters, and varied the size \trajec{|T|} of trajectories from $1K$ points to $10K$ points.
The results are reported in Figure~\ref{fig:cr-size}.

\ni(1) The compression ratios of these algorithms from the best to the worst are \cista, \dps, \cist and \squishe, on all datasets and for all sizes of trajectories.

\ni(2) The size of input trajectories has few impacts on the compression ratios of \lsa algorithms on all datasets.

\subsubsection{Evaluation of Average Errors}


\begin{figure*}[tb!]
	\centering
	\includegraphics[scale = 0.2900]{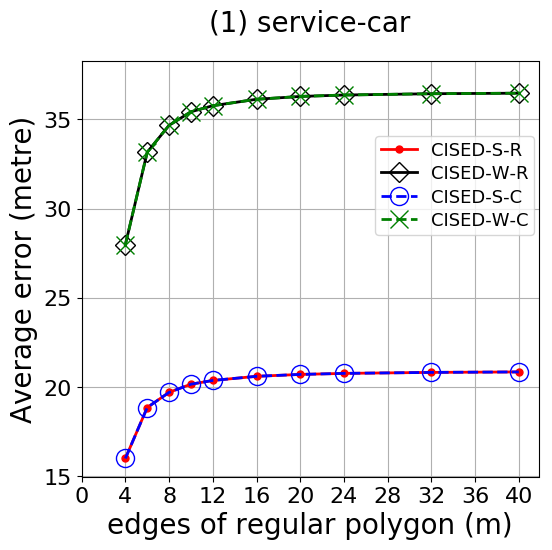}\hspace{1ex}
	\includegraphics[scale = 0.2900]{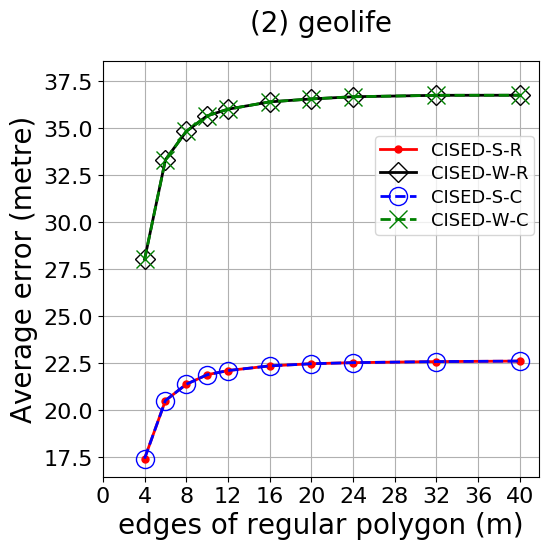}\hspace{1ex}
	\includegraphics[scale = 0.2900]{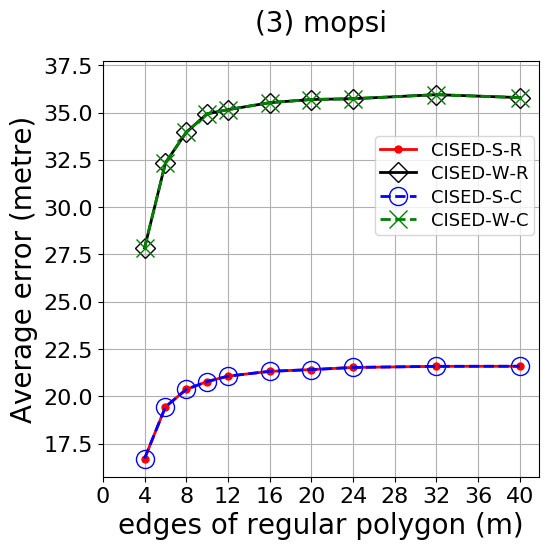}\hspace{1ex}
	\includegraphics[scale = 0.2900]{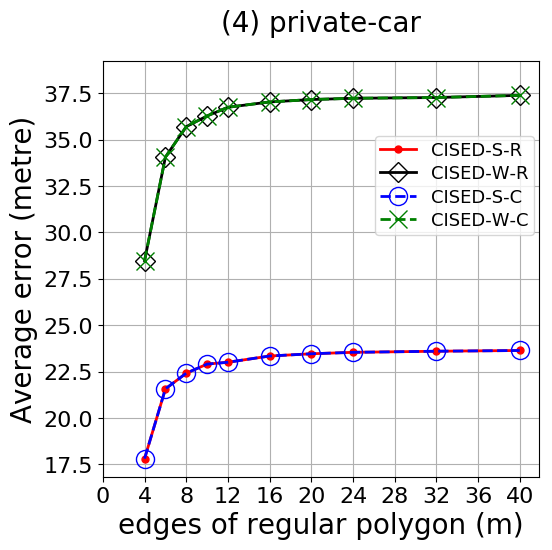}
	\caption{\small Evaluation of average errors: fixed error bound with $\epsilon = 60$ meters and varying $m$.}
	\label{fig:m-error-e60}
\end{figure*}

\begin{figure*}[tb]
	\centering
	\includegraphics[scale = 0.2900]{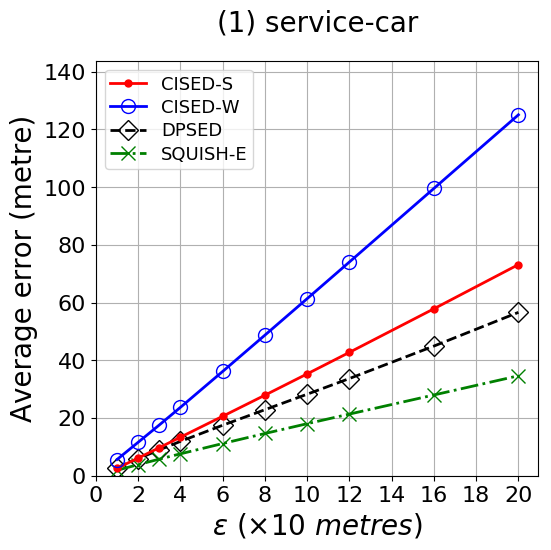}\hspace{1ex}
	\includegraphics[scale = 0.2900]{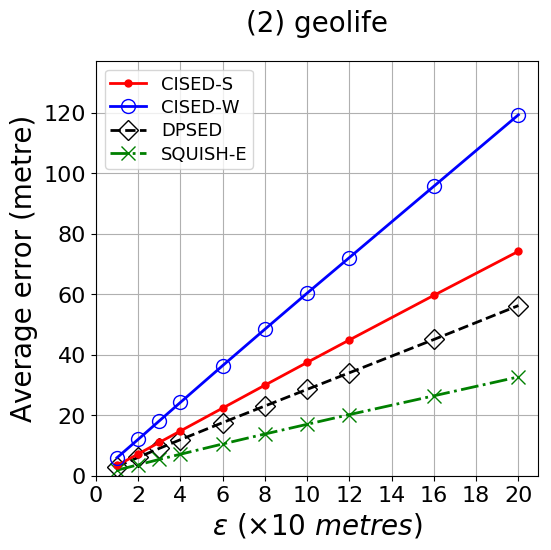}\hspace{1ex}
	\includegraphics[scale = 0.2900]{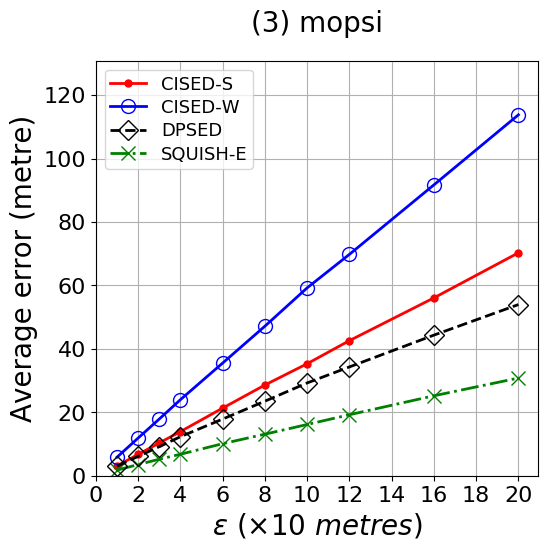}\hspace{1ex}
	\includegraphics[scale = 0.2900]{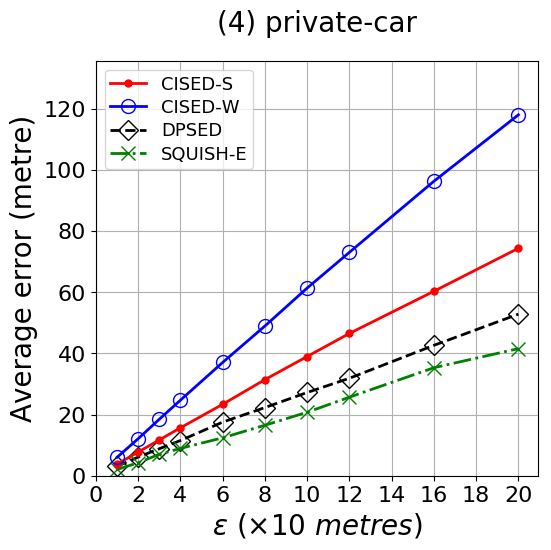}
	\caption{\small Evaluation of average errors: fixed with $m=16$ and varying error bound $\epsilon$.}
	\label{fig:ae-m16}
\end{figure*}

\begin{figure*}[tb]
	\centering
	\includegraphics[scale = 0.2900]{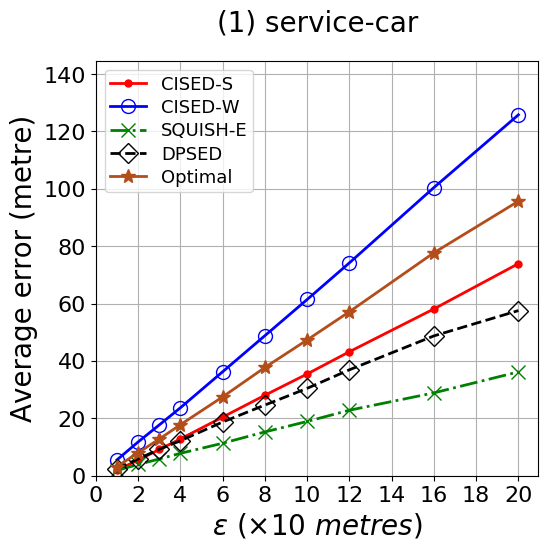}\hspace{1ex}
	\includegraphics[scale = 0.2900]{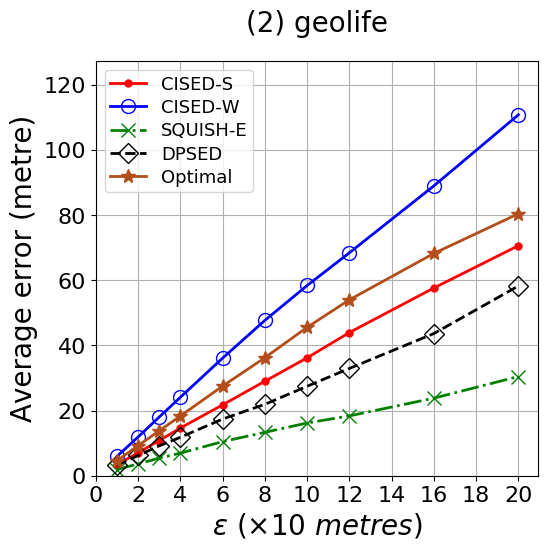}\hspace{1ex}
	\includegraphics[scale = 0.2900]{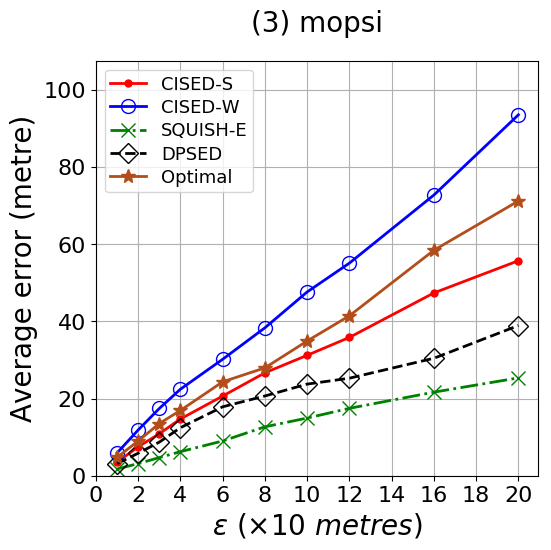}\hspace{1ex}
	\includegraphics[scale = 0.2900]{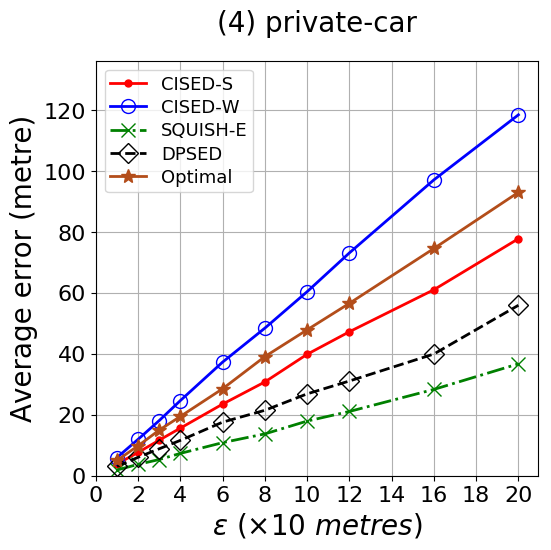}
	\caption{\small Evaluation of average errors: fixed with $m=16$ and varying error bound $\epsilon$ (on small datasets).}
	\label{fig:ae-optimal-m16}
\end{figure*}

\begin{figure*}[tb!]
	\centering
	\includegraphics[scale = 0.2900]{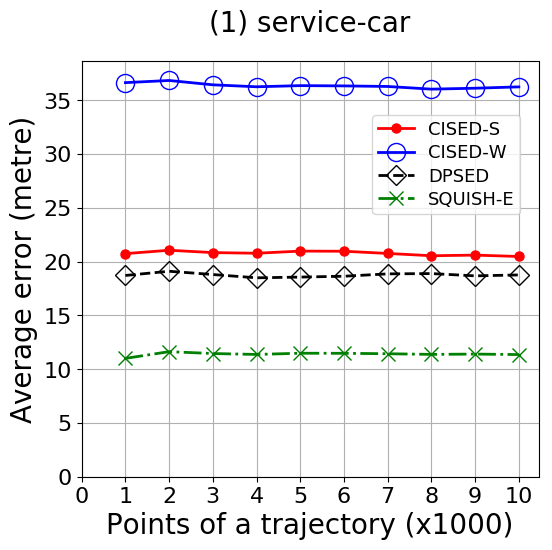}\hspace{1ex}
	\includegraphics[scale = 0.2900]{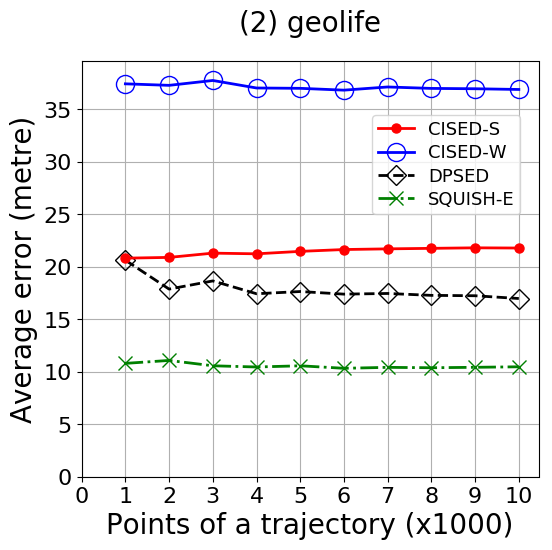}\hspace{1ex}
	\includegraphics[scale = 0.2900]{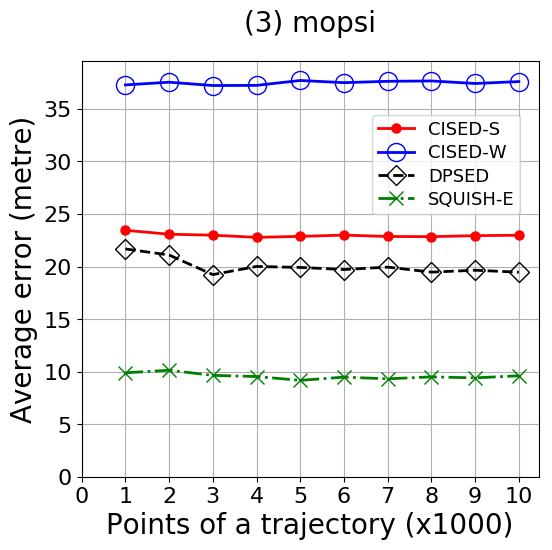}\hspace{1ex}
	\includegraphics[scale = 0.2900]{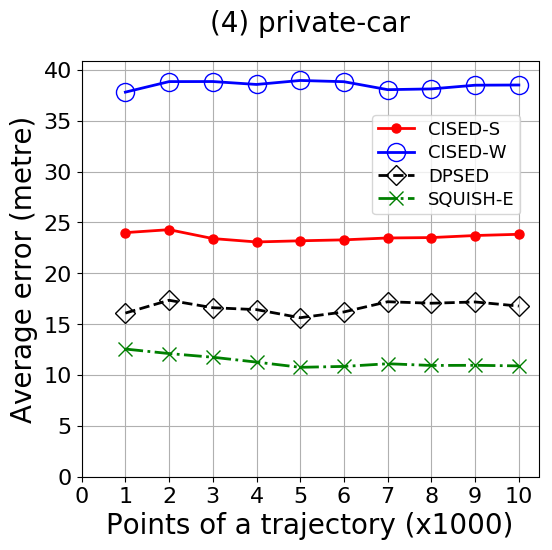}
	\caption{\small Evaluation of average errors: fixed with $m=16$ and $\epsilon=60$ meters, and varying the size of trajectories.}
	\label{fig:ae-size}
\end{figure*}

In the second set of tests, we first evaluate the impacts of parameter $m$ on the average errors of algorithms \cist and \cista, then compare the average errors of our algorithms \cist and \cista with \dps, \squishe and the optimal algorithm.

Given a set of trajectories $\{\dddot{\mathcal{T}_1}, \ldots, \dddot{\mathcal{T}}_M\}$ and their piecewise line representations $\{\overline{\mathcal{T}_1}, \ldots, \overline{\mathcal{T}}_M\}$, and point $P_{j,i}$ denoting
a point in trajectory $\dddot{\mathcal{T}}_j$ contained in a line segment $\mathcal{L}_{l,i}\in\overline{\mathcal{T}_l}$ ($l\in[1,M]$),
then the average error is $\sum_{j=1}^{M}\sum_{i=0}^{M} d(P_{j,i},
\mathcal{L}_{l,i})/\sum_{j=1}^{M}{|\dddot{\mathcal{T}}_j |}$.

\stitle{Exp-2.1: Impacts of parameter $m$ on average errors}.
To evaluate the impacts of parameter $m$ on average errors of algorithms \cist and \cista, we fixed the error bounds {$\epsilon =60$ meters}, and varied $m$ from $4$ to $40$. The results are reported in Figure~\ref{fig:m-error-e60}.


\ni(1) Algorithms \cist and \cista using \rpia have the same average errors as their counterparts using \cpia, respectively, on all datasets and for all $m$.

\ni(2) When varying $m$, the average errors of algorithms \cist and \cista increase with the increase of $m$ on all datasets.

\ni(3) When varying $m$, similar to compression ratios, the average errors of
algorithms \cist and \cista increase (a) fast when $m < 12$, (b) slowly when $m
\in [12, 20]$, and (c) very slowly when $m > 20$.
\emph{The range of $[12, 20]$ is also the good candidate region for $m$ in terms of errors.}
Here the average error of $m=12$ is only on average {$98.49\%$} of $m=20$.

\stitle{Exp-2.2: Impacts of the error bound $\epsilon$ on average errors (VS. algorithms \dps and \squishe)}.
To evaluate the average errors of these algorithms, we fixed {$m$=$16$}, and
varied $\epsilon$ from $10$ meters to $200$ meters on the entire
{datasets} \sercar, \geolife, \mopsi and \pricar, respectively.
The results are reported in Figure~\ref{fig:ae-m16}.

\ni(1) Average errors increase with the increase of $\epsilon$.

\ni(2) The average errors of these algorithms from the largest to the smallest are \cista, \cist, \dps and \squishe, on all datasets and for all $\epsilon$.
The average errors of algorithms \cist and \cista are on average
($119.3\%$, $127.7\%$, $119.9\%$, $138.0\%$)
and ($210.1\%$, $207.5\%$, $200.9\%$, $217.5\%$)
of \dps and ($188.2\%$, $215.2\%$, $212.8\%$, $180.3\%$) and
($331.1\%$, $349.7\%$, $356.7\%$, $284.2\%$)
 of \squishe on datasets (\sercar, \geolife, \mopsi, \pricar), respectively.

\ni(3) When the error bound of algorithm \cista is set as the half of \cist, the
average errors of \cista are on average ($93.8\%$, $86.0\%$, $81.4\%$, {$79.4\%$}) of \cist on {datasets} (\sercar, \geolife,\mopsi, \pricar), respectively, meaning that the large average errors of algorithm \cista are caused by its cone \wrt $\epsilon$ compared with the narrow cone \wrt $\epsilon/2$ of \cist.

\stitle{Exp-2.3: Impacts of the error bound $\epsilon$ on average errors (VS. the optimal algorithm).}
To evaluate the average errors of these algorithms, we once again fixed {$m$=$16$}, and
varied $\epsilon$ from $10$ meters to $200$ meters on the first $1K$ points of each trajectory of the selected \textit{small datasets}, respectively.
The results are reported in Figure~\ref{fig:ae-optimal-m16}.

The average errors of these algorithms from the largest to the smallest are \cista, the optimal algorithm and \cist, on all datasets and for all $\epsilon$.
The average errors of \cist and \cista are on average
($73.6\%$, $80.7\%$, $85.1\%$, $81.0\%$)
and ($133.3\%$, $130.7\%$, $131.0\%$, $126.3\%$)
of the optimal algorithm on datasets (\sercar, \geolife, \mopsi, \pricar), respectively.

\stitle{Exp-2.4: Impacts of trajectory sizes on average errors}.
To evaluate the impacts of trajectory sizes on average errors, we chose the same
{$10$} trajectories from  datasets \sercar, \geolife, \mopsi and \pricar, respectively.
We fixed {$m$=$16$} and $\epsilon = 60$ meters, and varied the size \trajec{|T|} of trajectories from $1K$ points to $10K$ points.
The results are reported in Figure~\ref{fig:ae-size}.

\ni(1) The average errors of these algorithms from the smallest to the largest are \squishe, \dps, \cist and \cista, on all datasets and for all trajectory sizes. 

\ni(2) The size of input trajectories has few impacts on the average errors of \lsa algorithms on all datasets.

\subsubsection{Evaluation of Efficiency}

In the last set of tests, we evaluate the impacts of parameter $m$ on the efficiency of algorithms \cist and \cista, and compare the efficiency of our approaches \cist and \cista with the optimal algorithm and algorithms \dps and \squishe.

%

\stitle{Exp-3.1: Impacts of algorithm \rpia and parameter $m$ on efficiency.}
To evaluate the impacts of \rpia and parameter $m$ on algorithm \cist and \cista, we
equipped \cist and \cista with \rpia and \cpia, respectively, fixed $\epsilon =60$ meters, and varied $m$ from $4$ to $40$.
The results are reported in Figures~\ref{fig:m-poly-time} and~\ref{fig:m-time-e60}.

\ni(1) The algorithms \cist and \cista spend the most time in the executing of
polygon intersections. For all $m$, the execution time of algorithms \cpia and
\rpia is on average {($93.5\%$, $96.0\%$, $94.5\%$, $92.0\%$)
	and ($90.5\%$, $92.5\%$, $91.0\%$, $90.5\%$)} of the entire compression  time on {datasets}
(\sercar, \geolife, \mopsi, \pricar), respectively.

\ni(2) \rpia runs faster than \cpia on all datasets and for all $m$. The execution time of algorithms \cist-\rpia and \cista-\rpia is one average $83.74\%$ their counterparts with \cpia.

\ni(3) When varying $m$, the execution time of algorithms \cist-\rpia, \cist-\cpia, \cista-\rpia and \cista-\cpia increases approximately linearly with the increase of $m$ on all the datasets.

\ni(4) The running time of $m=12$ is on average $69.92\%$ of $m=20$ for \cist and \cista on all datasets.



\begin{figure*}[tb!]
	\centering
	\includegraphics[scale = 0.290]{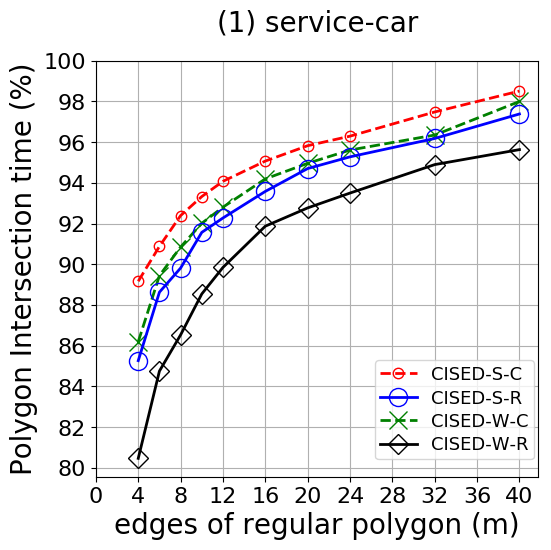}\hspace{1ex}
	\includegraphics[scale = 0.290]{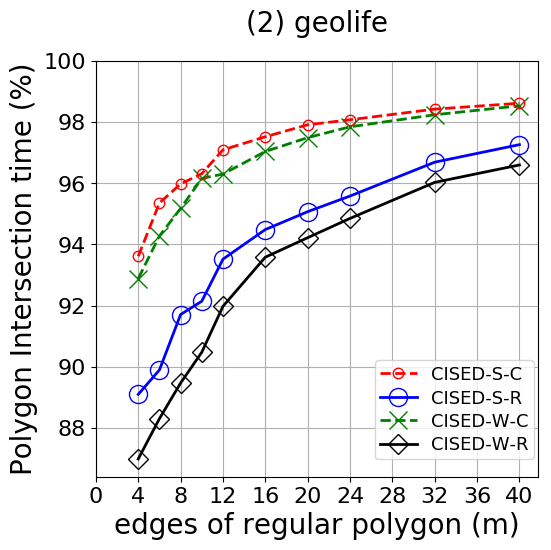}\hspace{1ex}
	\includegraphics[scale = 0.290]{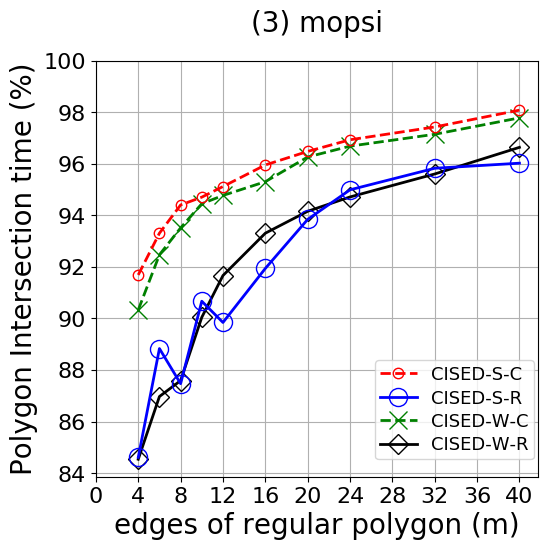}\hspace{1ex}
	\includegraphics[scale = 0.290]{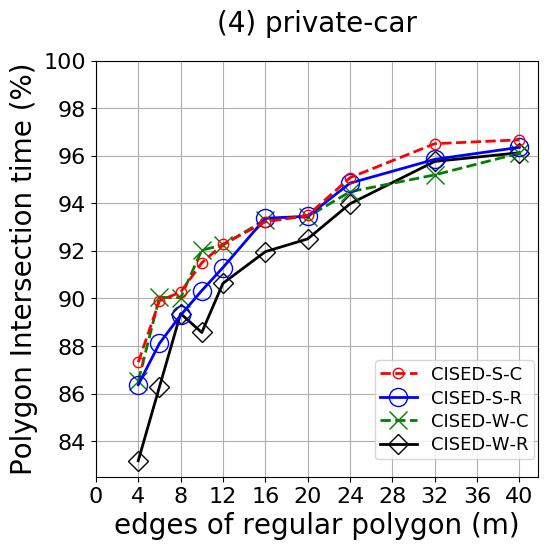}
	\caption{\small Evaluation of running time of polygon intersection algorithms: fixed error bound with $\epsilon=60$ meters, and varying $m$. Here ``R'' denotes our fast regular polygon intersection algorithm \rpia, and ``C'' denotes \cpia, respectively.}
	\label{fig:m-poly-time}
\end{figure*}

\begin{figure*}[tb!]
	\centering
	\includegraphics[scale = 0.290]{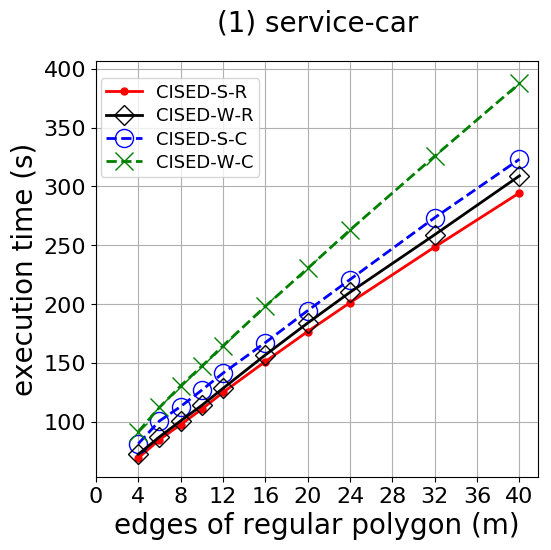}\hspace{1ex}
	\includegraphics[scale = 0.290]{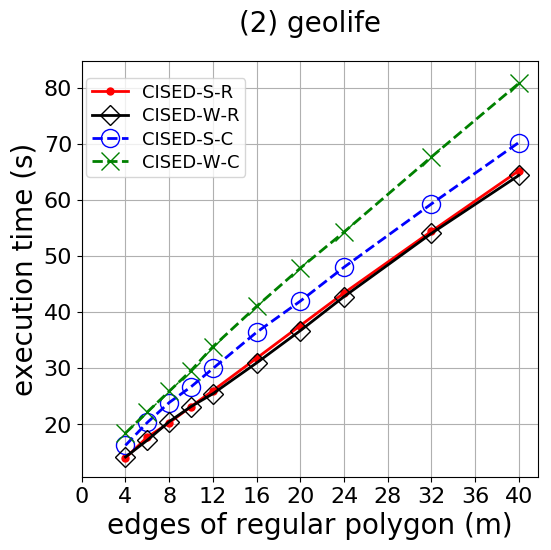}\hspace{1ex}
	\includegraphics[scale = 0.290]{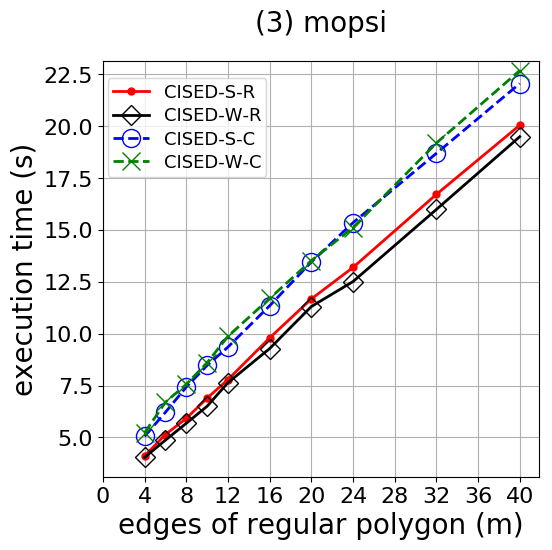}\hspace{1ex}
	\includegraphics[scale = 0.290]{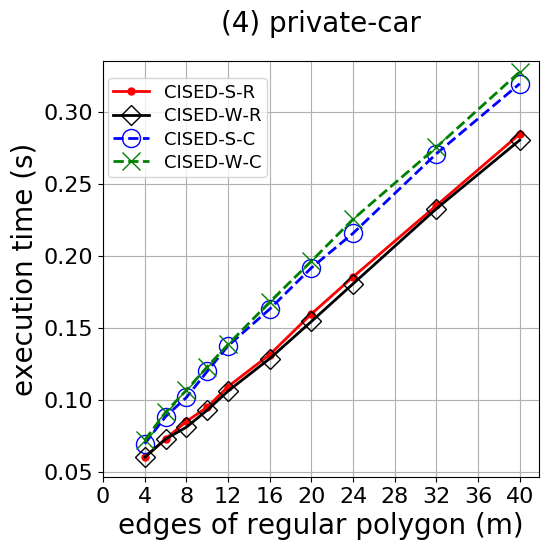}
	\caption{\small Evaluation of running time: fixed error bound with $\epsilon=60$ meters, and varying $m$. }
	\label{fig:m-time-e60}
\end{figure*}

\begin{figure*}[tb!]
\centering
\includegraphics[scale = 0.290]{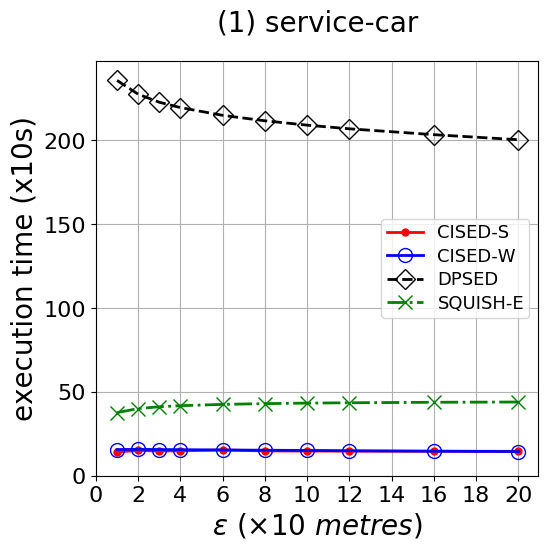}\hspace{1ex}
\includegraphics[scale = 0.290]{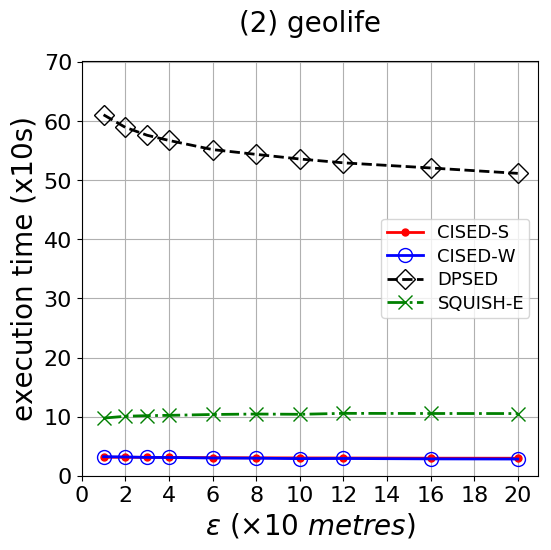}\hspace{1ex}
\includegraphics[scale = 0.290]{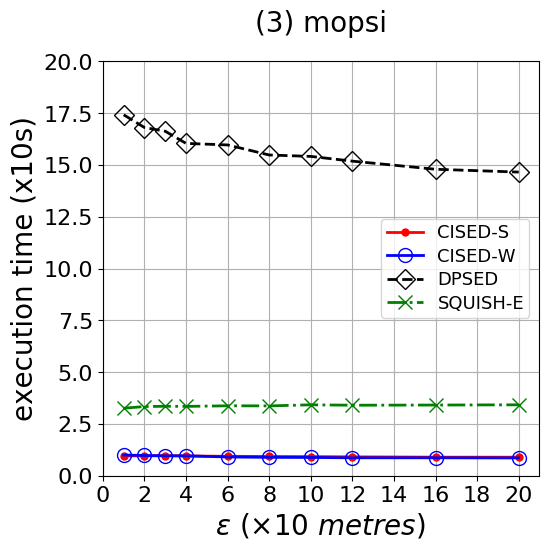}\hspace{1ex}
\includegraphics[scale = 0.290]{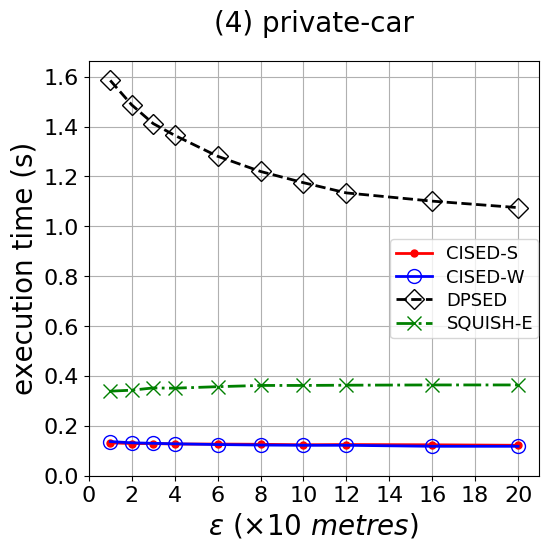}
\caption{\small Evaluation of running time: fixed with $m=16$ and varying error bounds $\epsilon$.}
\label{fig:time-epsilon}
\end{figure*}

\begin{figure*}[tb!]
\centering
\includegraphics[scale = 0.290]{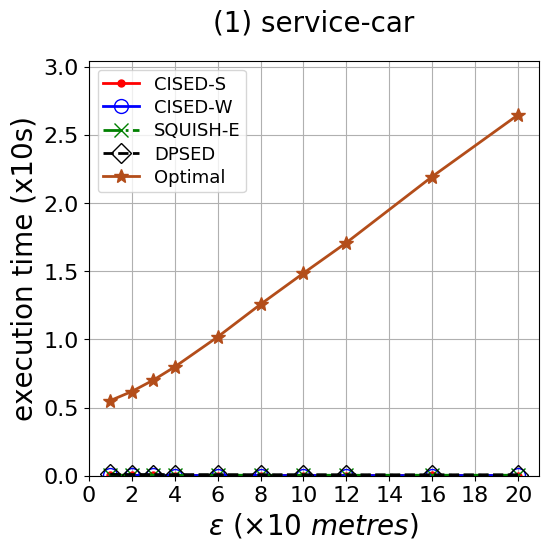}\hspace{1ex}
\includegraphics[scale = 0.290]{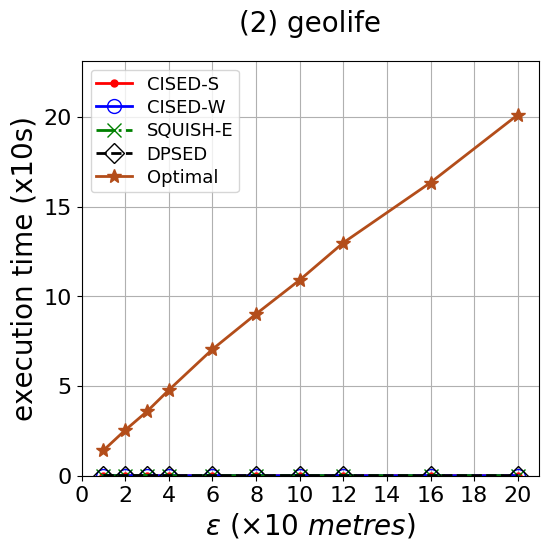}\hspace{1ex}
\includegraphics[scale = 0.290]{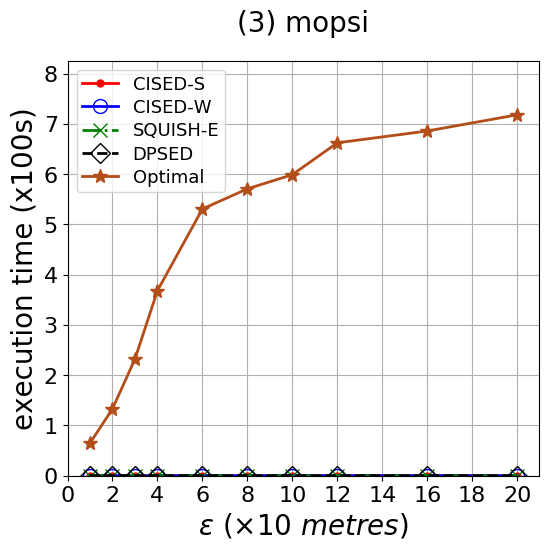}\hspace{1ex}
\includegraphics[scale = 0.290]{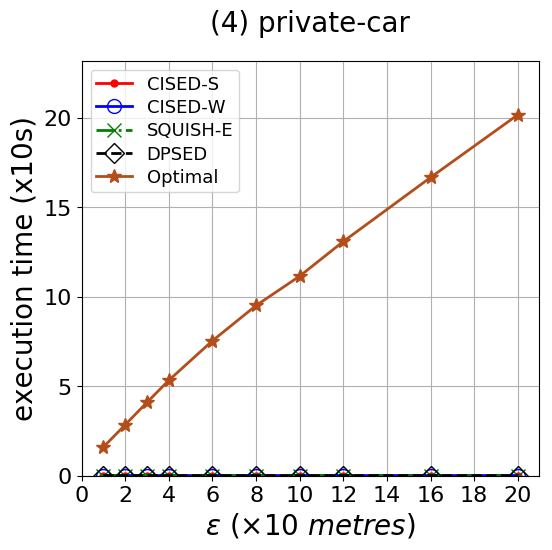}
\caption{\small Evaluation of running time: fixed with $m=16$ and varying error bounds $\epsilon$ (on small datasets).}
\label{fig:time-optimal-epsilon}
\end{figure*}

\begin{figure*}[tb!]
\centering
\includegraphics[scale = 0.290]{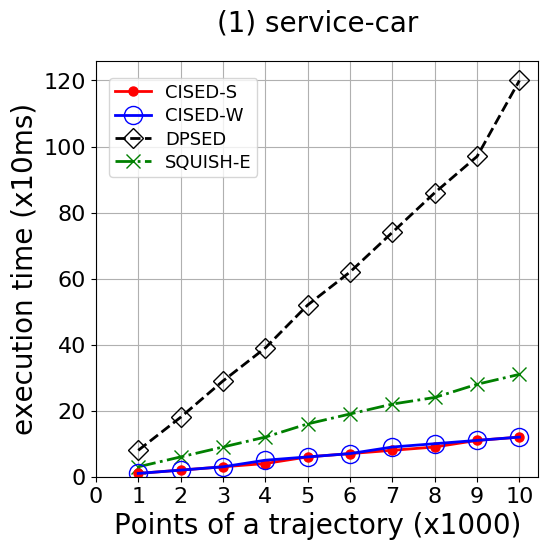}\hspace{1ex}
\includegraphics[scale = 0.290]{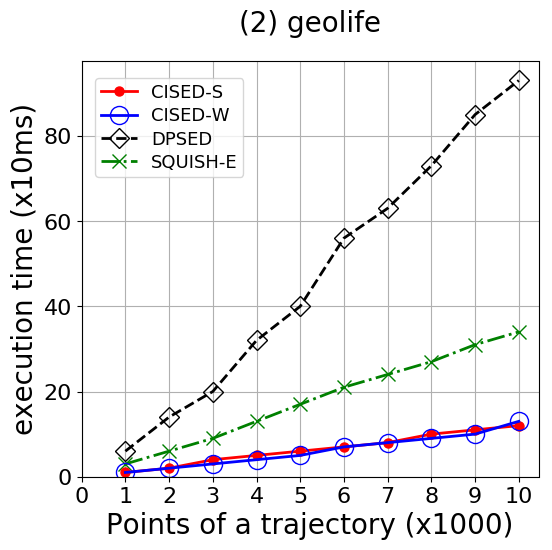}\hspace{1ex}
\includegraphics[scale = 0.290]{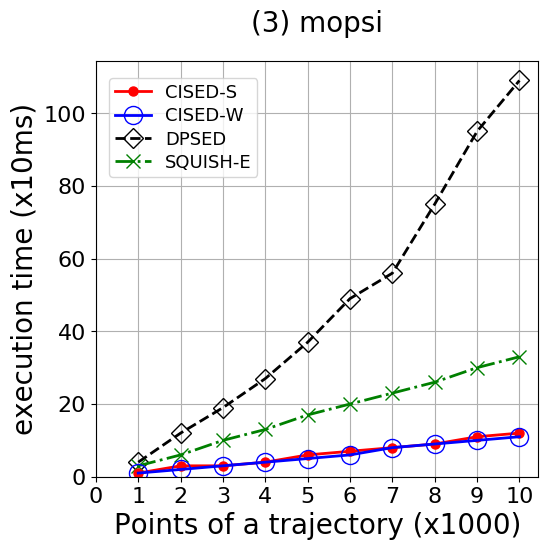}\hspace{1ex}
\includegraphics[scale = 0.290]{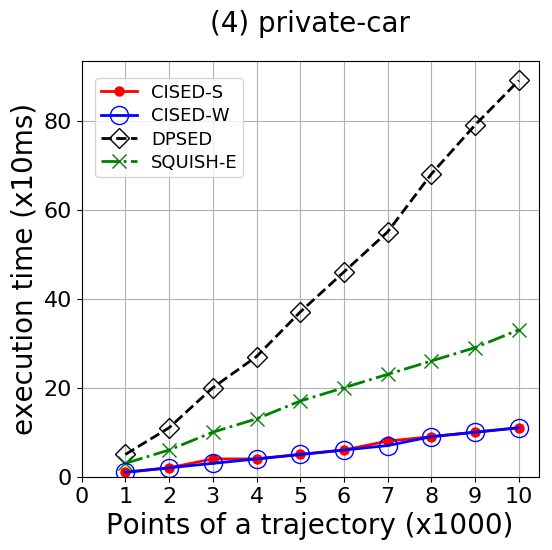}
\caption{\small Evaluation of running time: fixed with $m=16$ and $\epsilon=60$ meters, and varying the size of trajectories. }
\label{fig:time-size}
\vspace{-1ex}
\end{figure*}

\stitle{{Exp-3.2}:  Impacts of the error bound $\epsilon$ on efficiency (VS. algorithms \dps and \squishe).}
To evaluate the impacts of $\epsilon$ on efficiency, we fixed $m$ = $16$,
and varied $\epsilon$  from $10$ meters to $200$ meters on the entire
datasets (\sercar, \geolife, \mopsi, \pricar), respectively.
The results are reported in Figure~\ref{fig:time-epsilon}.

\ni(1) All algorithms are not very sensitive to $\epsilon$ on any datasets, and algorithm \dps is more sensitive to $\epsilon$ than the other three algorithms.
The running time of \dps decreases a little bit with the increase of $\epsilon$, as the increment of $\epsilon$ decreases the number of partitions of the input trajectory.

\ni(2) Algorithms \cist and \cista are obviously faster than \dps and \squishe for all cases.
They are on average ($14.21$, $18.19$, $17.06$, $9.98$) times faster than \dps,
and ($2.84$, $3.45$, $3.69$, $2.86$) times faster than \squishe on
{datasets} (\sercar, \geolife, \mopsi, \pricar), respectively.

\stitle{Exp-3.3: Impacts of the error bound $\epsilon$ on efficiency (VS. the optimal algorithm).}
To evaluate the impacts of $\epsilon$ on efficiency, we once again fixed $m$ = $16$,
and varied $\epsilon$ from $10$ meters to $200$ meters on the first $1K$ points of each trajectory of the selected \textit{small datasets}, respectively.
The results are reported in Figure~\ref{fig:time-optimal-epsilon}.

\ni(1) Algorithms \cist and \cista are obviously faster than the optimal algorithm for all cases.
They are on average ($925.25$, $7888.26$, $40041.59$, $8528.76$) times faster than the optimal algorithm on
datasets (\sercar, \geolife, \mopsi, \pricar), respectively.

\stitle{{Exp-3.4}: Impacts of trajectory sizes on efficiency}.
To evaluate the impacts of trajectory sizes on execution time,
we chose the same {$10$} trajectories, from datasets (\sercar, \geolife, \mopsi, \pricar), respectively,
fixed $m$ = $16$ and $\epsilon = 60$ meters, and varied the size \trajec{|T|} of trajectories from $1K$ points to $10K$ points.
The results are reported in Figure~\ref{fig:time-size}.


\ni(1) Algorithms \cist and \cista are both the fastest \lsa algorithms using \sed,
and are {($8.00$--$10.00$, $5.83$--$8.11$, $4.00$--$9.50$, $5.00$--$8.09$) times faster than \dps,
	and {($2.53$--$3.00$, $2.62$--$3.12$, $2.50$--$3.33$, $2.89$--$3.40$)}} times faster than \squishe on the selected $1K$ to $10K$ points datasets (\sercar,
\geolife, \mopsi, \pricar), respectively.

\ni(2) Algorithms \cist and \cista scale well with the increase of the size of trajectories on all datasets,
and both have a linear running time, while algorithm \dps does not.
This is consistent with their time complexity analyses.

\ni(3) The efficiency advantage of algorithms \cist and \cista increases with the increase of trajectory sizes compared with \dps and \squishe.

\subsubsection{Summary}
From these tests we find the following.

\sstab \emph{(1) Polygon intersection Algorithms}. Algorithm \rpia runs faster than \cpia, and is as effective as \cpia.

\sstab\emph{(2) Parameter $m$}. The compression ratio decreases with the increase of $m$, and the running time increases nearly linearly with the increase of $m$. In practice, the range of $[12, 20]$ is a good candidate region for $m$.

\sstab\emph{(3) Compression ratios}. The optimal \lsa algorithm has the best compression ratios among all strong simplification algorithms. Algorithm \cist is comparable with \dps and algorithm \cista is comparable with the optimal \lsa algorithm.
They are all better than \squishe.
The compression ratios of algorithm \cist, the optimal algorithm and algorithm \cista are on average
($79.3\%$, $71.9\%$, $67.3\%$, $72.7\%$),
{($58.1\%$, $45.1\%$, $39.2\%$, $52.8\%$)} and ($57.7\%$, $53.8\%$, $50.0\%$, $54.6\%$) of \squishe
and ($109.2\%$, $108.0\%$, $111.7\%$, $109.1\%$), {($81.3\%$, $75.5\%$, $72.5\%$, $78.1\%$)} and ($79.5\%$, $81.0\%$, $83.0\%$, $82.0\%$) of \dps on {datasets} (\sercar, \geolife, \mopsi, \pricar), respectively.

\sstab\emph{(4) Average errors}. The average errors of these algorithms from the smallest to the largest are \squishe, \dps, \cist, the optimal \lsa algorithm and \cista. Algorithm \cista has obvious higher average errors than \cist as the former essentially forms spatio-temporal cones with a radius of $\epsilon$.

\sstab\emph{(5) Running time}. Algorithms \cist and \cista are the fastest. They are on average
($14.21$, $18.19$, $17.06$, $9.98$), ($2.84$, $3.45$, $3.69$, $2.86$) and ($925.25$, $7888.26$, $40041.59$, $8528.76$) times faster than \dps, \squishe and the optimal \lsa algorithm on datasets (\sercar, \geolife, \mopsi, \pricar), respectively.
The efficiency advantage of algorithms \cist and \cista also increases  with the increase of the trajectory size.


\section{Related Work}
\label{sec-related}

Trajectory compression algorithms are normally classified into two categories, namely lossless compression and lossy compression\cite{Muckell:Compression}.
(1) Lossless compression methods enable exact reconstruction of the original data from the compressed data without information loss.
(2) In contrast, lossy compression methods allow errors or derivations, compared with the original trajectories.
These techniques typically identify important data points, and remove statistical redundant data points from a trajectory, or replace original data points in a trajectory with other places of interests, such as roads and shops. They focus on good compression ratios with acceptable errors.
%
In this work, we focus on lossy compression of trajectory data, and  we next introduce the related work on lossy trajectory compression  from two aspects: line simplification based methods and semantics based methods.

\subsection{Line simplification based methods}

The idea of piece-wise line simplification comes from computational geometry.
Its target is to approximate a given finer piece-wise linear curve by another coarser piece-wise linear curve, which is typically a subset of the former, such that the maximum distance of the former to the later is bounded by a user specified bound $\epsilon$.
Initially, line simplification (\lsa) algorithms use perpendicular Euclidean distances (\ped) as the distance metric.
Then a new distance metric, the synchronous Euclidean distances (\sed), was developed after the \lsa algorithms were introduced to compress trajectories.
\sed was first introduced in the name of \emph{time-ratio distance} in~\cite{Meratnia:Spatiotemporal}, and formally presented in~\cite{Potamias:Sampling} as the \emph{synchronous Euclidean distance}.
\ped and \sed are two common  metrics adopted in trajectory simplification. The former usually brings better compression ratios while the later reserves temporal information in the result trajectories.

Line simplification algorithms can be classified into two aspects: optimal and sub-optimal methods.

\subsubsection{Optimal Algorithms}
For the ``min-\#" problem that finds out the minimal number of points or segments to represent the original polygonal lines \wrt an error bound $\epsilon$, Imai and Iri \cite{Imai:Optimal} first formulated it as a graph problem, and showed that it could be solved in  $O(n^3)$ time, where $n$ is the number of the original points.
Toussaint of \cite{Toussaint:Optimal} and Melkman and O'Rourke of \cite{Melkman:Optimal} improved the time complexity to $O(n^2 \log n)$ by using either \textit{convex hull} or \textit{sector intersection} methods.
The authors of \cite{Chan:Optimal} further proved that the optimal algorithm using \ped could be implemented in $O(n^2)$ time by using the \textit{sector intersection} mechanism.
Because the \textit{sector intersection} and the \textit{convex hull} mechanisms can not work with \sed, hence, currently the time complexity of the optimal algorithm using \sed remains $O(n^3)$.
It is time-consuming and impractical for large trajectory data~\cite{Heckbert:Survey}.

\eat{A number of algorithms \cite{Agarwal:Metric, Chen:Fast, Wu:Graph} have been developed to solve the ``min-\#" problem under alternative error metrics.
\cite{Agarwal:Metric} studied the problem under the $L_1$ and uniform (also known as Chebyshev) metric. 
 \cite{Chen:Fast} defined the local integral square synchronous Euclidean distance (LISSED) and the integral square synchronous Euclidean distance (ISSED), and used them to fast the construction of reachability graphs, and \cite{Wu:Graph} followed the ideas of \cite{Chen:Fast}.
{However, both LISSED and ISSED cumulative errors that leads to varied effectiveness compared with \sed.}
}
\eat{
The distinct Douglas-Peucker (\dpa) algorithm \cite{Douglas:Peucker} (see Figure~\ref{fig:notations}) is invented in 1970s, for reducing the number of points required to represent a digitized line or its caricature in the context of computer graphics and image processing.
The basic \dpa is a batch approach with a time complexity of $O(n^2)$.
Some online \lsa algorithms were further developed by combining batch algorithms (\eg \dpa) with sliding/open windows \cite{Meratnia:Spatiotemporal, Liu:BQS}.
Recently, the authors of this paper developed a one-pass trajectory simplification algorithm (\operb) \cite{Lin:Operb}, which runs great fast and also has comparable compression ratio comparing with batch algorithms. Moreover, the ``cone intersection'' algorithm \cite{Williams:Longest, Sklansky:Cone, Dunham:Cone, Zhao:Sleeve} is another notable one-pass \lsa algorithm.
}

\subsubsection{Sub-optimal Algorithms}
Many studies have been targeting at finding the sub-optimal results.
In particular, the state-of-the-art of sub-optimal \lsa approaches fall into three categories, \ie batch, online and one-pass algorithms.
We next introduce these \lsa  based trajectory compression algorithms from the aspect of the three categories.

\stitle{Batch algorithms.}
The batch algorithms adopt a global distance checking policy that requires all trajectory points are loaded before compressing starts.
These batch algorithms can be either top-down or bottom-up.

Top-down algorithms, e.g., Ramer \cite{Ramer:Split} and Douglas-Peucker \cite{Douglas:Peucker}, recursively divide a trajectory into sub-trajectories until the stopping condition is met.
Bottom-up algorithms, e.g., Theo Pavlidis' algorithm \cite{Pavlidis:Segment}, is the natural complement of the top-down ones, which recursively merge adjacent sub-trajectories with the smallest distance, initially $n/2$  sub-trajectories for a trajectory with $n$ points, until the stopping condition is met.
The distances of newly generated line segments are recalculated during the process.
These batch algorithms originally only support \ped, but are easy to be extended to support \sed~\cite{Meratnia:Spatiotemporal}.
The batch nature and high time complexities make batch algorithms impractical for online  and resource-constrained scenarios \cite{Lin:Operb}.

\stitle{Online algorithms.}
The online algorithms adopt a constrained global distance checking policy that restricts the checking within a sliding or opening window.
Constrained global checking algorithms do not need to have the entire trajectory ready before they start compressing, and are more appropriate than batch algorithm for compressing trajectories for online scenarios.

Several \lsa algorithms have been developed, e.g., by combining \dpa or \pavlidis with sliding or opening windows for online processing\cite{Meratnia:Spatiotemporal}. 
These methods still have a high time and/or space complexity, which significantly hinders their utility in resource-constrained mobile devices \cite{Liu:BQS}.
\bqsa \cite{Liu:BQS, Liu:Amnesic} and \squishe\cite{Muckell:Compression} further optimize the opening window algorithms.
\bqsa \cite{Liu:BQS, Liu:Amnesic} fasts the processing by picking out at most eight special points from an open window based on a convex hull, which, however, hardly supports \sed.
The \squishe\cite{Muckell:Compression} algorithm is an combination of {opening} window and bottom-up online algorithm. It uses a doubly linked list $Q$ to achieve a better efficiency. Although \squishe supports \sed, it is not one-pass, and has a relatively poor compression ratio.

\stitle{One-pass algorithms.}
The one-pass algorithms adopt a local distance checking policy.
\eat{The local checking policy, the key to achieve the \emph{one-pass} processing, They do not need a window to buffer the preview read points.
Meanwhile, a trajectory compression algorithm is {\em one-pass} if it processes each point in a trajectory once and only once when compressing the trajectory.
}
They do not need a window to buffer the previously read points as they process each point in a trajectory once and only once.
Obviously, the one-pass algorithms run in linear time and constant space.

The $n$--${th}$ point routine and the routine of random-selection of points \cite{Shi:Survey} are two naive one-pass algorithms.
In these routines, for every fixed number of consecutive points along the line, the $n$--${th}$  point and one random point among them are retained, respectively.
They are fast, but are obviously not error bounded.
In Reumann-Witkam routine\cite{Reumann:Strip}, it builds a strip paralleling to the line connecting the first two points, then the points within this strip compose one section of the line.
The Reumann-Witkam routine also runs fast, but has limited compression ratios.
The sector intersection (\cia) algorithm \cite{Williams:Longest, Sklansky:Cone} was developed for graphic and pattern recognition in the late 1970s, for the approximation of arbitrary planar curves by linear segments or finding a polygonal approximation of a set of input data points in a 2D Cartesian coordinate system. \cite{Dunham:Cone} optimized algorithm \cia by considering the distance between a potential end point and the initial point of a line segment, and the \sleeve algorithm \cite{Zhao:Sleeve} in the cartographic discipline essentially applies the same idea as the \cia algorithm.
Moreover, {fast \bqsa \cite{Liu:BQS} (\fbqsa in short), the simplified version of \bqsa, has a linear time complexity.}
The authors of this article also developed an One-Pass ERror Bounded (\operb) algorithm \cite{Lin:Operb}.
However, all existing one-pass algorithms  use \ped \cite{Williams:Longest, Sklansky:Cone, Dunham:Cone, Zhao:Sleeve,Liu:BQS,Lin:Operb}, while this study focuses on \sed.


\subsection{Semantics based methods}
The trajectories of certain moving objects such as cars and trucks are constrained by road networks. These moving objects typically travel along road networks, instead of the line segment between two points. Trajectory compression methods based on road networks \cite{Chen:Trajectory, Popa:Spatio,Civilis:Techniques,Hung:Clustering, Gotsman:Compaction, Song:PRESS, Han:Compress}  project trajectory points onto roads (also known as Map-Matching). Moreover, \cite{Gotsman:Compaction, Song:PRESS, Han:Compress} mine and use high frequency patterns of compressed trajectories, instead of roads, to further improve compression effectiveness.
Some methods \cite{Schmid:Semantic, Richter:Semantic} compress trajectories beyond the use of road networks, and further make use
of other user specified domain knowledge, such as places of interests along the trajectories \cite{Richter:Semantic}.
There are also compression algorithms preserving the direction of the trajectory \cite{Long:Direction}. 

These  semantics based approaches are orthogonal to line simplification based methods, and may be combined with each other to  improve the effectiveness of trajectory compression.

\vspace{-1ex}
\section{Conclusions}  
\label{sec-conclusion}

We have proposed \cist and \cista, two one-pass error bounded strong and weak trajectory simplification algorithms using the synchronous distance.
We have also experimentally verified that algorithms \cist and \cista are both efficient and effective.
They are three times faster than \squishe, the most efficient existing \lsa algorithm using \sed.
In terms of compression ratio,
algorithm \cist is {comparable} with \dps, the existing \lsa algorithm with the best compression ratio, and is $21.1\%$ better than \squishe on average; and
algorithm \cista is comparable with the optimal algorithm and is on average $19.6\%$ and $42.4\%$ better than \dps and \squishe, respectively.


\section*{Acknowledgments}
This work is supported in part by NSFC (U1636210), 973 program ({2014CB340300}), NSFC ({61421003}) and  Beijing Advanced Innovation Center for Big Data and Brain Computing.

\balance
\bibliographystyle{abbrv}
\bibliography{sec-ref}

\end{document}